\newcommand{\webpageGoogleBenchmarks}{\texttt{http://v8.googlecode.com/svn/data/benchmarks/v7/run.html}}
\newtheorem{definition}{Definition}
\newtheorem{lemma}{Lemma}
\newtheorem{theorem}{Theorem}
\newcommand{\Rule}[1]{\textsc{#1}}
\newcommand{\Label}[1]{\textit{#1}}
\newcommand{\subseteqIH}{\overset{\textsf{IH}}{\subseteq}}
\newcommand{\eqIH}{\overset{\textsf{IH}}{=}}
\lstdefinelanguage{JavaScript}{
   keywords={      attributes, class, classend, do, empty, endif, endwhile, fail,
	  function, functionend, if, implements, in, inherit, inout, not, of,
	  operations, out, return, set, then, types, while, use, else, switch, case,
   break, default, for, var},
   keywordstyle=\color{blue}\bfseries,
   ndkeywords={trace, permit, apply, applyObj, permitArgs, applyArgs},
   ndkeywordstyle=\color{blue}\bfseries,
   identifierstyle=\color{black},
   sensitive=false,
   comment=[l]{//},
   morecomment = [s]{/*}{*/},
   morecomment = [s][\color{green}]{/**}{*/},
   commentstyle=\color{gray}\ttfamily,
   stringstyle=\color{red}\ttfamily
}
\bfseries\color{purple}]{|}{|}
\newcommand{\eval}{\Downarrow}
\newcommand{\entails}{\vdash}
\newcommand{\entailsFA}{\vdash_{\textsf{App}}}
\newcommand{\entailsPR}{\vdash_{\textsf{Get}}}
\newcommand{\entailsPA}{\vdash_{\textsf{Put}}}
\newcommand{\dom}{\textit{dom}}
\newcommand{\TpTrie}{PathTrie}
\newcommand{\ApcContract}{{\it Contract\/}}
\newcommand{\ApcLiteral}{{\it Literal}}
\newcommand{\tpPath}{\mathcal{P}}
\newcommand{\tpTrie}{\mathcal{T}}
\newcommand{\tpEmpty}{\epsilon}
\newcommand{\tpBlank}{\iota}
\newcommand{\tpProperty}{p}
\newcommand{\apcContract}{\mathcal{C}}
\newcommand{\apcLiteral}{c}
\renewcommand{\apcLiteral}{\ell}
\newcommand{\apcEmptySet}{\emptyset}
\newcommand{\apcEmpty}{\epsilon}
\renewcommand{\apcEmpty}{\mathcal{E}}
\newcommand{\apcAT}{{\tt @}}
\newcommand{\apcStar}{{\tt *}}
\newcommand{\apcQMark}{{\tt ?}}
\newcommand{\apcOr}{{{\tt +}}}
\newcommand{\apcAnd}{\&}
\newcommand{\apcNeg}{!}
\newcommand{\apcDot}{.}
\newcommand{\apcRegEx}{r} 
\newcommand{\tpDot}{.}
\newcommand{\lang}[1]{\mathcal{L}\llbracket#1\rrbracket}
\newcommand{\leftquotient}[2]{#1^{-1}#2}
\newcommand{\apcAlphabet}{\mathcal{A}}
\newcommand{\apcWords}{\mathcal{A}^*}
\newcommand{\append}[2]{(#1 \oplus #2)}
\newcommand{\union}[2]{(#1 \uplus #2)}
\newcommand{\match}[2]{#1 \succcurlyeq #2}
\newcommand{\regmatch}[2]{#1 \succ #2}
\newcommand{\nullable}{\nu}
\newcommand{\isNullable}[1]{\nullable(#1)}
\newcommand{\isReadable}[2]{#1 \vdash_{\mathcal{R}} #2}
\newcommand{\isNotReadable}[2]{#1 \not\vdash_{\mathcal{R}} #2}
\newcommand{\isWrireable}[2]{#1 \vdash_{\mathcal{W}} #2}
\newcommand{\isNotWrireable}[2]{#1 \not\vdash_{\mathcal{W}} #2}
\newcommand{\deriv}[2]{\partial_{#1}(#2)}
\newcommand{\lderiv}[2]{\nabla_{#1}(#2)}
\newcommand{\blankReducible}{\textsf{bl}}
\newcommand{\isBlank}[1]{\blankReducible(#1)}
\newcommand{\emptyReducible}{\textsf{emp}}
\newcommand{\isEmpty}[1]{\emptyReducible(#1)}
\newcommand{\indifferentReducible}{\textsf{ind}}
\newcommand{\isIndifferent}[1]{\indifferentReducible(#1)}
\newcommand{\universalReducible}{\textsf{unv}}
\newcommand{\isUniversal}[1]{\universalReducible(#1)}
\newcommand{\firstP}{\textsf{next}}
\newcommand{\firstC}{\textsf{first}}
\newcommand{\getFirstP}[1]{\firstP(#1)}
\newcommand{\getFirstC}[1]{\firstC(#1)}
\newcommand{\ccContext}{\Gamma}
\newcommand{\ccExp}{\phi}
\newcommand{\inCcContext}[1]{#1\in\ccContext} 
\newcommand{\negInCcContext}[1]{#1\notin\ccContext} 
\newcommand{\reduce}[1]{\lfloor#1\rfloor}
\newcommand{\isSuperSetOf}[2]{#1 \sqsupseteq #2}
\newcommand{\isSubSetOf}[2]{#1 \sqsubseteq #2}
\newcommand{\isNotSubSetOf}[2]{#1 \notsubsetof #2}
\newcommand{\notsubsetof}{\not\sqsubseteq}
\newcommand{\norm}[2]{#1 \leadsto #2}
\newcommand{\regexSubsetof}{\sqsubseteq_{\apcRegEx}}
\newcommand{\regexCap}{\sqcap_{\apcRegEx}}
\newcommand{\regexIsSubSetOf}[2]{#1 \regexSubsetof #2}
\newcommand{\regexIsNotSubSetOf}[2]{#1 \not\regexSubsetof #2}
\newcommand{\regexIsCap}[2]{#1 \regexCap #2}
\newcommand{\monitor}{\mathcal{M}}
\newcommand{\addReadPath}[1]{\monitor\vartriangleleft_{\mathcal{R}}#1}
\newcommand{\addWritePath}[1]{\monitor\vartriangleleft_{\mathcal{W}}#1}
\newcommand{\addReadViolation}[2]{\monitor\blacktriangleleft_{\mathcal{R}}(#1,#2)}
\newcommand{\addWriteViolation}[2]{\monitor\blacktriangleleft_{\mathcal{W}}(#1,#2)}
\newcommand{\lcpj}{\lambda_{J}}
\newcommand{\ljNew}{\textbf{new}}
\newcommand{\ljIn}{\textbf{in}}
\newcommand{\ljStr}{\textit{str}}
\newcommand{\ljUndefined}{\textbf{undefined}}
\newcommand{\ljNull}{\textbf{null}}
\newcommand{\ljPermit}{\textbf{permit}}
\newcommand{\LjValue}{Value}
\newcommand{\LjExpression}{Expression}
\newcommand{\LjMonitor}{Monitor}
\newcommand{\LjPrototype}{Prototype}
\newcommand{\LjClosure}{Closure}
\newcommand{\LjLocation}{Location}
\newcommand{\LjObject}{Object}
\newcommand{\LjProxy}{Proxy}
\newcommand{\LjHandler}{Access~Handler}
\newcommand{\LjStorable}{Storable}
\newcommand{\LjEnvironment}{Environment}
\newcommand{\LjHeap}{Heap}
\newcommand{\ljObj}{o}
\newcommand{\ljVal}{v}
\newcommand{\ljConst}{c}
\newcommand{\ljVar}{x}
\newcommand{\ljProxy}{P}
\newcommand{\ljHandler}{H}
\newcommand{\ljFunction}{\lambda\ljVar.\ljExp}
\newcommand{\ljClosure}{f}
\newcommand{\ljStorable}{s}
\newcommand{\ljPrototype}{\pi}         
\newcommand{\ljExp}{e}
\newcommand{\ljLocation}{\xi}
\newcommand{\ljHeap}{\mathcal{H}}
\newcommand{\ljEnv}{\rho}
\newcommand{\RuleCCIdentity}{(C-Identity)}
\newcommand{\RuleCCNullable}{(C-Nullable)}
\newcommand{\RuleCCEmpty}{(C-Proof-Edge)}
\newcommand{\RuleCCBlank}{(C-Disprove-Empty)}
\newcommand{\RuleCCIndifferentTwo}{(C-Disprove-Blank)}
\newcommand{\RuleCCContext}{(C-Delete)}
\newcommand{\RuleCCUnfoldTrue}{(C-Unfold-True)}
\newcommand{\RuleCCUnfoldFalse}{(C-Unfold-False)}
\newcommand{\RuleCCDisprove}{(C-Disprove)}
\newcommand{\RuleLjConstant}{(Const)}
\newcommand{\RuleLjVariable}{(Var)}
\newcommand{\RuleLjObjectCreation}{(New)}
\newcommand{\RuleLjFunctionCreation}{(Abs)}
\newcommand{\RuleLjFunctionApplication}{(App)}
\newcommand{\RuleLjPropertyReference}{(Get)}
\newcommand{\RuleLjPropertyAssignment}{(Put)}
\newcommand{\RuleLjPermit}{(Permit)}
\newcommand{\RuleLjFunctionApplicationOne}{(App-NoProxy)}
\newcommand{\RuleLjFunctionApplicationTwo}{(App-Proxy)}
\newcommand{\RuleLjFunctionApplicationThree}{(App-Membrane)}
\newcommand{\RuleLjPropertyReferenceOne}{(Get-NoProxy)}
\newcommand{\RuleLjPropertyReferenceTwo}{(Get-Proxy)}
\newcommand{\RuleLjPropertyReferenceThree}{(Get-Membrane)}
\newcommand{\RuleLjPropertyAssignmentOne}{(Put-NoProxy)}
\newcommand{\RuleLjPropertyAssignmentTwo}{(Put-Proxy)}
\newcommand{\RuleLjOPropertyReferenceTwo}{(Get-Proxy-Observer)}
\newcommand{\RuleLjOPropertyReferenceThree}{(Get-Membrane-Observer)}
\newcommand{\RuleLjOPropertyAssignmentTwo}{Put-Proxy-Observer)}
\newcommand{\RuleLjPPropertyReferenceTwo}{(Get-Proxy-Protector)}
\newcommand{\RuleLjPPropertyAssignmentTwo}{Put-Proxy-Protector)}
\newcommand{\RuleLjTriePropertyReferenceTwo}{(Get-TrieProxy)}
\newcommand{\RuleLjTriePropertyReferenceThree}{(Get-TrieMembrane-NonExisting)}
\newcommand{\RuleLjTriePropertyReferenceFour}{(Get-TrieMembrane-Existing)}
\newcommand{\RuleLjTriePropertyAssignmentTwo}{Put-TrieProxy)}
\begin{document}

\setlength{\pdfpageheight}{\paperheight}
\setlength{\pdfpagewidth}{\paperwidth}

\conferenceinfo{DLS~'13}{October 28, 2013, Indianapolis, Indiana, USA}
\copyrightyear{2013} 
\copyrightdata{978-1-4503-2433-5/13/10}
\doi{2508168.2508176} 




\titlebanner{}        
\preprintfooter{Efficient Access Analysis Using JavaScript Proxies}   

\newcommand{\trnote}{\titlenote{
   This report is a slightly edited versions of the paper appeared in the \emph{Proceedings of the 9th symposium on Dynamic languages}. 
   To avoid confusions we revised the notation of access permission contracts. Further, we split the theorem of \emph{Syntactic derivative of contracts}.}}

\title{Efficient Dynamic Access Analysis Using JavaScript Proxies}
\subtitle{Technical Report\trnote}

\authorinfo{Matthias Keil \and Peter Thiemann}
{Institute for Computer Science\\ University of Freiburg\\ Freiburg, Germany}
{\{keilr,thiemann\}@informatik.uni-freiburg.de}

\maketitle

\begin{abstract}

   JSConTest introduced the notions of effect monitoring and dynamic
   effect inference for JavaScript. It enables the description of effects
   with path specifications resembling regular expressions. It is
   implemented by an offline source code transformation.

   To overcome the limitations of the JSConTest implementation, we
   redesigned and reimplemented effect monitoring by taking advantange of
   JavaScript proxies. Our new design avoids all drawbacks of the prior
   implementation. It guarantees full interposition; it is not restricted
   to a subset of JavaScript; it is self-maintaining; and its scalability
   to large programs is significantly better than with JSConTest.

   The improved scalability has two sources. First, the reimplementation is
   significantly faster than the original, transformation-based
   implementation. Second, the reimplementation relies on the fly-weight
   pattern and on trace reduction to conserve memory. Only the
   combination of these techniques enables monitoring and inference for
   large programs.
\end{abstract}

\category{D.3.3}{PROGRAMMING LANGUAGES}{Language Constructs and Features}[Classes and objects]
\category{D.3.3}{SOFTWARE ENGINEERING}{Software/Program Verification}[Programming by contract,Validation]
\category{D.4.6 }{OPERATING SYSTEMS}{Security and Protection}[Access controls]

\terms
Design, Languages, Security, Verification

\keywords
Access Permission Contracts, JavaScript, Proxies


\section{Introduction}
\label{sec:introduction}

JSConTest \cite{Heidegger:2010:CTJ:1894386.1894395} introduced the notions of effect monitoring and dynamic
effect inference \cite{Heidegger:2011:HAC:2025896.2025908} for
JavaScript. It enables the programmer to specify the effect of a function
using access permission contracts. These contracts consist of an anchor
specifying a start object and a regular expression specifying the
admissible access paths that a contract-annotated function may use. Matching
paths can be assigned read or write permission.

The inference component of JSConTest may help a software maintainer who
wants to investigate the effect of an unfamiliar function by monitoring its execution and
then summarizing the observed access traces to access permission contracts.

JSConTest is implemented by an offline source code transformation. This
approach enabled a quick development, but it comes with a number of
drawbacks. First, it requires a lot of effort to construct an offline
transformation that guarantees full
interposition and that covers the full JavaScript language: the
implemented transformation has known omissions (e.g., no support for
\texttt{with} and prototypes) and it
does not apply to code created at run time using \texttt{eval}
or other mechanisms. 
Second, the transformation is subject to bitrotting because it becomes
obsolete as the language evolves.
Third, the implementation represents access paths with strings and checks
them against the specification using the built-in regular expression matching
facilities of JavaScript. This approach quickly fills up memory with many
large strings and processes the matching of regular expressions in a
monolithic way. 

In this work, we present JSConTest2, a redesign and reimplementation of JSConTest using JavaScript proxies
\cite{VanCutsem:2010:PDP:1869631.1869638, van2012design}, a JavaScript extension which is
scheduled for the upcoming ECMAScript 6 standard. This new implementation
addresses all shortcomings of the previous version.
First, the proxy-based
implementation guarantees full interposition for the full language and for all
code regardless of its origin, including dynamically loaded code and
code injected via \texttt{eval}. 
Second, maintenance is alleviated because there is no transformation that needs
to be adapted to changes in the language syntax. Also, future extensions are
catered for as long as the proxy API is supported. By adapting ideas from code
contracts \cite{FaehndrichBarnettLogozzo2010}, we also avoided a custom syntax extension.
Third, our new implementation represents access paths in a space efficient
way. It also incrementalizes the path matching by encoding its state in an
automaton state, which is represented by a regular expression. It applies the
fly-weight pattern to reduce memory consumption of the states.
Last but not least, the new implementation is significantly faster
than the previous one.

JSConTest2 employs Brzozowski's derivatives of regular expressions
\cite{Brzozowski1964} to
perform the path matching incrementally and efficiently. It applies a rewriting system inspired
by Antimirov's techniques \cite{Antimirov95:0} for deciding subset constraints
for regular expressions to simplify regular expressions if more than
one contract is applied to an object at the same time. 

To evaluate the scalability of JSConTest2, we applied path monitoring
to a number of example programs including web page dumps. The main problem we
had to deal with was excessive memory use. We explain several techniques for
reducing memory consumption, including the reduction of regular expression based
effect contracts using an adaptation of Antimirov's techniques.



\paragraph{Contributions}
\label{sec:contribution}
\begin{itemize}
   \item Reimplementation of JSConTest using JavaScript proxies
   \item Formalization of violation logging and contract enforcement
   \item Reduced memory use by simplification of regular expressions
   \item Practical evaluation with case studies
\end{itemize}


\paragraph{Overview}
\label{sec:overview}

Section~\ref{sec:effects_for_js} gives some examples and rationales for JSConTest2.
Section~\ref{sec:preliminaries} gives a high-level overview of the
approach taken in this paper.
Section~\ref{sec:proxies-membranes} recalls proxies and membranes from related
work. Section~\ref{sec:access-permission-contracts} defines the syntax of access
paths and access contracts. Section~\ref{sec:formalization} formalizes a core
language and defines the semantics for path logging and contract
enforcement. Section~\ref{sec:reduction} explains the techniques used to reduce
memory consumption. Sections~\ref{sec:implementation}
and~\ref{sec:practical_results} describe the implementation and some experiences
in applying JSConTest2. Section~\ref{sec:related_work} discusses related
work. It is followed by a conclusion. 

Appendix \ref{sec:crashing_rules} and \ref{sec:extended_membrane} shows the formal semantics for violations and merged proxies. Section \ref{sec:auxiliaryfunctions} states some auxiliary functions used. The proofs of semantic containment, syntactic, derivative, syntactic containment, and correctness are shown in the appendix \ref{sec:proof_semantic-containment}, \ref{sec:proof_syntactic-derivative}, \ref{sec:proof_syntactic-containment}, and \ref{sec:proof_correctness}.



\section{Effects for JavaScript}
\label{sec:effects_for_js}

JavaScript is the language of the Web. More than 90\% of all Web pages provide
functionality using JavaScript. Most of them rely on third-party libraries for
calenders, social networks, or feature extensions. Some of these libraries are
statically included with the main script, others are loaded dynamically.

Software development and maintenance is tricky in JavaScript because
dynamically loaded libraries have arbitrary access to the application state.
Some libraries override global objects to add features, others manipulate
data stored in the browser's DOM or in cookies, yet others may send data to the
net. In addition, there are security concerns if the application has to
guarantee confidentiality or integrity of data. As all scripts run with
the same authority, the main script has no handle on the use of data by an
included script.

As all resources in a JavaScript program are
accessible via property read and write operations, controlling those
operations is sufficient to control the resources. Thus, effect 
monitoring and inference have a role to play in the context of test-driven
development, in maintenance to analyze a piece of software, or in security
to prevent the software from compromising confidentiality or integrity.

JSConTest2 monitors read and write operations on objects through access
permission contracts that specify allowed effects as outlined in the introduction.
A contract restricts effects by defining a set of permitted access paths starting
from some anchor object.

\subsection{Contracts and the Contract API}
\label{sec:design_principles}

This section introduces the contract syntax and the JSConTest2 API.
In a first example, 
a developer may want to ascertain that only some parts of an object are
accessed. 

\begin{lstlisting}
var protected =
  __APC.permit(|'(a.?+b*)'|, {a:{a:3,b:5},b:{a:7,b:11}});
\end{lstlisting}

Here, \lstinline'__APC' is the object that encapsulates the JSConTest2
implementation. Its \lstinline'permit' method takes a contract and an object as
parameters and returns a ``contracted'' object where only access paths that are explicitly permitted by
the contract are admitted. The contract consists of two alternative parts connected by
\lstinline$|+|$. The first part, \lstinline$|'a.?'|$, gives read/write access to
all properties of the object in the \lstinline$a$ property, but \lstinline$a$
itself is read-only.  The second part, \lstinline$|'b*'|$, allows read and write
access to an arbitrarily long chain of properties named \lstinline$b$.

Here is an example with some uses of the contracted object.

\begin{lstlisting}
var x = __APC.permit(|'a.b'|, {a:{b:3}, b:{b:5}});
y = x.a;
y.b = 3;
\end{lstlisting}

The access permission contract  \lstinline$|'a.b'|$ specifies the singleton set $\{a.b\}$ of
permitted access paths. The contract allows us to read and write property \lstinline$a.b$
and to read the prefix \lstinline$a$. Properties which are not addressed by a contract are
neither readable nor writeable. 
%
%
%
The read and write operations in lines~2 and~3 abide by the contract, but reading from
\lstinline$x.b$ or writing to \lstinline$x.a$ would not be permitted and would cause a violation.

Only the last property of a path in the set of permitted access paths is writeable and all
prefixes are readable. The special property \lstinline$@$ stands for a ``blank'' property
that matches no other property. Using it at the end of a contract specifies a read-only path
as shown in the following example.

\begin{lstlisting}
var x = __APC.permit(|'a.b.@'|, {a:{b:3}, b:{b:5}});
x.a.b = 3; // violation
\end{lstlisting}


One could imagine contracts for defining write-only paths, for instance, in a security
context. This case is not covered by our implementation, but it would
be straightforward to provide 
an interface that separates read and write permissions.

The next example demonstrates how contracts interact with assignments. 

\begin{lstlisting}
var x = __APC.permit(|'((a+a.b)+b.b.@)'|, {a:{b:3}, b:{b:5}});
x.a = x.b;
x.a.b=7; // violation
\end{lstlisting}

The contract \lstinline$|'((a+a.b)+b.b.@)'|$ allows read access to \lstinline$x.b$ and
\lstinline$x.b.b$ as well as read and write access to \lstinline$x.a$ and
\lstinline$x.a.b$. Reading \lstinline$x.b$ yields a contracted object
\lstinline${b:5}$ with contract \lstinline$|'b.@'|$, where \lstinline$b$ is read-only. This
object is assigned to \lstinline$x.a$ so that \lstinline$x.a$ and \lstinline$x.b$ are now
aliases. The strategy of JSConTest2 is to obey the contracts along all access paths.
Reading \lstinline$x.a$ again yields an object with contract
\lstinline$|'(e+b)&b.@'|$, where \lstinline$|e|$ stands for the empty
word and \lstinline$|&|$ is the conjunction operator. Thus, the
resulting contract \lstinline$|'(e+b)&b.@'|$ simplifies to
\lstinline$|'b.@'|$ such that writing to \lstinline$x.a.b$ causes a
violation.  






In addition to using full property names in contracts, the syntax admits regular expressions
for property names, too. For example, the contract \lstinline$|'(/^get.+/+next)*.length.@'|$
allows us to read the \lstinline{length} property after reading a chain of properties that either
start with \lstinline{get} or that are equal to \lstinline{next}.


\subsection{A Security Example}
\label{sec:security_example}




As an example from a security context, consider the following scenario, which
was used as an exploit to extract the contacts out of a GMail
account.\footnote{This exploit has been fixed in 2006.}

\begin{lstlisting}[language=html]
<script type="text/javascript"
  src="http://docs.google.com/data/contacts?out=js&
  show=ALL&psort=Affinity&callback=google&max=99999">
</script>
\end{lstlisting}

This script element is a JSONP request that loads the Google Mail contacts and sends it to the
\lstinline$google$ function, which is given as callback. The following listing
shows what the data given to the callback function could look like.

\begin{lstlisting}
var contacts = {
  Success: true,
  Errors: [],
  Body: {
    AuthToken: {
    Value: '********'
    },
    Contacts: [
    {
      Name: 'Jimmy Example',
      Email: 'email@example.org',
      Addresses: [],
      Phones: [],
      Ims: []
    },
    // More contacts
    ]
  }
};
\end{lstlisting}

To restrict access to the \lstinline$contacts$ object, the developer
could wrap it into a contract as follows.

\begin{lstlisting}
return __APC.permit(
  |'((Success.@+Errors.?*)+Body.Contacts.?.Name)'|,
  contacts);
\end{lstlisting}

This contract enables read access to \lstinline$Success$
(\lstinline$|'Success.@'|$), read/write access to everything below
\lstinline$Errors$ (\lstinline$|'Errors.?*'|$). Furthermore, only the \lstinline$Name$
propertiy can be accessed on each element of the contacts array
\lstinline$Body.Contacts$.
Access to the properties \lstinline$AuthToken$ as well as to the actual
contact data (e.g., \lstinline$Email$, \lstinline$Phones$,
\lstinline$Ims$) is not permitted, thus substantially diminishing the
value of an exploit. 

In this case, an access permission contract should restrict the
\texttt{google} function from using its 
argument arbitrarily. To be effective, a HTTP proxy would
have to insert the contract in the HTTP request resulting from the script tag.


\begin{lstlisting}
var google = __APC.permitArgs(|'arguments.0.
  ((Success.@+Errors.?*)+Body.Contacts.?.Name)'|,
  function(contacts) {
    // do something
});
\end{lstlisting}

The \lstinline$permitArgs$ method takes an access permission contract and a
function and returns a wrapped function, such that 
each call to the wrapped function enforces the contract. Arguments are addressed by position
so that \lstinline$|'arguments.0'|$ addresses the first argument. The remaining
contract specification is as before.

Because of the transparent implementation
of contract enforcement, the function that is wrapped is arbitrary: it may be
defined in the same source, it may be loaded dynamically, or it may be the
result of \texttt{eval}. Contract enforcement works in all cases.


\section{The JSConTest2 Approach}
\label{sec:preliminaries}

JSConTest2 implements a contracted target object by wrapping it in a proxy
object that intercepts all operations on the target and either
forwards them to the target or signals a contract violation. The monitoring
requires storing a set of access paths and the contract along with
the proxy. When reading a property of a contracted object that
contains another object, then the read operation must return a contracted object
that carries the remaining contract after the read operation (cf.\ the
examples in Section~\ref{sec:design_principles}). This ``contract
inheritance'' is an instance of the membrane pattern that is often
used in connection with proxies. Section~\ref{sec:proxies-membranes}
gives an introduction to proxies and membranes.

As contracts are closely related to regular expressions, the remaining
contract after a read operation can be nicely characterized using
Brzozowski-derivatives of regular
expressions. Section~\ref{sec:access-permission-contracts} formally
defines contracts and their semantics in terms of access paths, it
defines the derivative operation on contracts, establishes its basic
properties, and finishes by defining readable and writeable paths.
This section form the basis for 
Section~\ref{sec:formalization}, which formalizes the semantics of the
\lstinline$permit$ operations.

Section~\ref{sec:reduction} addresses some practical problems that
arise from the implementation. Under certain circumstances, the same
object may be subject to multiple contracts. A naive implementation
would create an inefficient chain of proxy objects, which can be
avoided by merging the path set and contract information. However,
these merge operations themselves lead to memory bloat, which can be
addressed by using suitable data structures and aggressive contract
simplification. 

\section{Proxies and Membranes}
\label{sec:proxies-membranes}


\subsection{Proxies}
\label{sec:pre-proxies}

\begin{figure}
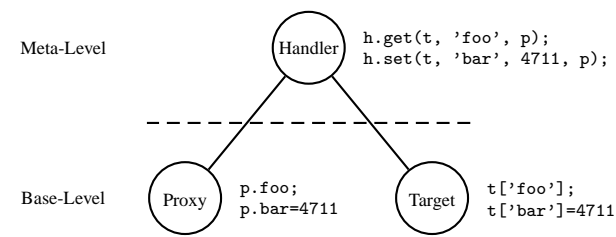

   \scriptsize
   \def\dedge{\ncline[linestyle=dashed]} \center
   \psset{radius=14pt, dotsize=2pt}
   \pstree[thislevelsep=2, edge=none, levelsep=0, treesep=0, xbbr=0.5]{\TR{Meta-Level}}{{\TR{Base-Level}}}
   \psset{edge=\ncline}
   \pstree[thislevelsep=2, levelsep=1, treesep=1]{
	  \TCircle[xbbr=0, xbbl=0, xbbd=0.5, name=handler]{Handler}~[tnpos=r]{
		 \parbox{3.4cm}{\texttt{h.get(t, 'foo', p);\\ h.set(t, 'bar', 4711, p);}}
	  }
   }{
	  \TCircle[xbbr=0, xbbl=0, xbbd=0, name=proxy]{Proxy} ~[tnpos=r]{
		 \parbox{1cm}{\texttt{p.foo;\\ p.bar=4711}}
	  }
	  \TCircle[xbbr=0, xbbl=0, xbbd=0,name=target]{Target} ~[tnpos=r]{
		 \parbox[c]{1.5cm}{\texttt{t['foo'];\\ t['bar']=4711}}
	  }
   }
   \ncline[linestyle=dashed,nodesep=-1,offset=-1]{target}{proxy}
   \caption{Example of proxy operation.}
   \label{fig:prxie_pattern}
\end{figure}




A JavaScript proxy \cite{VanCutsem:2010:PDP:1869631.1869638} is an
object whose behavior is controlled by a handler object. A typical use
case is to have the handler mediate access to an arbitrary
target object, which may be a native or proxy object. The proxy is
then intended to be used in place of the target and is
not distinguishable from other objects. However, the proxy may modify the
original behavior of the target object in many respects. 

The handler object defines trap functions that implement the
operations on the proxy. Operations like property lookup or property update
are forwarded to the corresponding trap. The handler may implement the operation
arbitrarily; in the simplest case, it forwards the operation to the target
object. The handler may also be a proxy.

Figure~\ref{fig:prxie_pattern} contains a simple example, where the
handler \texttt{h} causes the proxy \texttt{p} to behave as a
wrapper for a target object \texttt{t}. Performing the property access
\texttt{p.foo} on the proxy object results in a meta-level call to the
corresponding trap on the handler object \texttt{h}. Here, the handler forwards
the property access to the target object. The property write is handled
similarly. 



\subsection{Membranes}
\label{sec:pre-membrane}

\begin{figure}
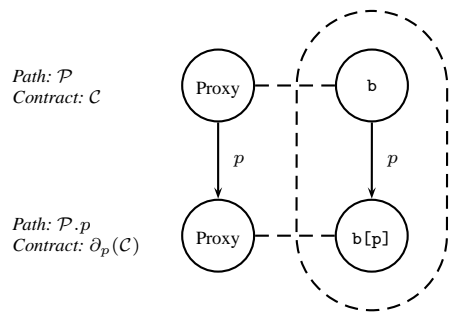

   \scriptsize
   \def\dedge{\ncline[linestyle=dashed]} \center
   \psset{radius=14pt, dotsize=2pt}
   \pstree[thislevelsep=2, levelsep=3, treesep=1, arrows=->]{
	  \TCircle[xbbr=0.5, xbbl=0.5, name=p1]{Proxy}~[tnpos=l]{
		 \parbox{2cm}{
			\Label{Path:} $\tpPath$\\
			\Label{Contract:} $\apcContract$
		 }
	  }
   }{
	  \TCircle[xbbr=0.5, xbbl=0, xbbd=0.5, name=p2]{Proxy} ~[tnpos=l]{
		 \parbox{2cm}{
			\Label{Path:} $\tpPath\tpDot\tpProperty$\\
			\Label{Contract:} $\deriv{\tpProperty}\apcContract$
		 }
	  }\naput{$\tpProperty$}
   }
   \psset{edge=\ncline}
   \pstree[thislevelsep=2, levelsep=3, treesep=1, arrows=->]{
	  \TCircle[xbbr=0.5, xbbl=0.5, name=b]{\texttt{b}}
   }{
	  \TCircle[xbbr=0.5, xbbl=0.5, xbbd=0, name=t]{\texttt{b[p]}}
	  \naput{$\tpProperty$}
   }\ncbox[linearc=1,boxsize=1,linestyle=dashed,nodesep=0.5]{b}{t}
   \ncline[linestyle=dashed]{p1}{b}
   \ncline[linestyle=dashed]{p2}{t}
   \caption{Example of property access through membrane.}
   \label{fig:membrane_pattern}
\end{figure}

Our technique to implement objects under a contract is inspired by
\emph{Revocable Membranes}
\cite{VanCutsem:2010:PDP:1869631.1869638,van2012design,RobustComposition}. A
membrane serves as a regulated communication channel between an object and the
rest of the program. It ensures that all parts of the objects behind a membrane
also remain behind. For example, each property access on a wrapped object
(e.g. $\texttt{obj.p}$) returns another wrapped object.  Therefore, after
wrapping, no new direct references to target objects behind the membrane become
available.
One use of this mechanism is to revoke all references to an object network or to
enforce write protection 
\cite{VanCutsem:2010:PDP:1869631.1869638,van2012design,RobustComposition}.

In our use of membranes (cf.\ Figure~\ref{fig:membrane_pattern}), each handler contains a path $\tpPath$, and a contract
$\apcContract$ describing the allowed field accesses. Each property access
\texttt{obj.$\tpProperty$} on a wrapped object returns a wrapped object whose
path is $\tpPath\tpDot\tpProperty$.  In addition, the handler traps enforce the
contract $\apcContract$. If the access on property $\tpProperty$ is allowed by
contract $\apcContract$ the handler forwards the request to the target object
and wraps the returned object with the new contract
$\deriv{\tpProperty}\apcContract$, which is the derivative of $\apcContract$ with respect to
$\tpProperty$ (explained in Section~\ref{sec:derivation}). If this access is not allowed, then the handler prevents it in
a configurable way.

The Figure~\ref{fig:membrane_pattern} shows a membrane arising
from an allowed property access. The information on the left is contained in the
handler objects and the objects inside the membrane on the right are the target
objects of the proxies. Thus, our implementation logs all
access paths to wrapped objects in their handlers.


\section{Access Permission Contracts}
\label{sec:access-permission-contracts}

This section defines the syntax and semantics of access permission contracts and access paths.


\subsection{Access Paths}
\label{sec:paths}

Let $\apcAlphabet$ be a set of property names and $\tpBlank\notin
\apcAlphabet$ be a special blank property that does not occur in any
JavaScript object. Its sole purpose is to indicate read-only accesses.
Let $\tpProperty\in
\apcAlphabet \cup \{\tpBlank\}$ range over all properties.
An access path $\tpPath \in (\apcAlphabet \cup \{\tpBlank\})^*$ is a
sequence of properties.
%
%
We write $\tpEmpty$ for the empty path and $\tpPath\tpDot\tpPath$ for
the concatenation of two paths (considered as sequences).

\subsection{Contracts}
\label{sec:contracts}


\begin{figure}
   \centering
   \begin{displaymath}
	  \begin{array}{llrlll}

		 \Label{\ApcLiteral} &\ni~ \apcLiteral &::=& \apcAT &\Label{(empty literal)}\\ 
		 &&|& \apcQMark &\Label{(universe)}\\
		 &&|& \apcRegEx &\Label{(regular expression)}\\
		 &&|& \apcNeg\apcRegEx &\Label{(negation)}\\

		 \\
		 \Label{\ApcContract} &\ni~ \apcContract &::=& \apcEmptySet &\Label{(empty set)}\\
		 &&|& \apcEmpty &\Label{(empty contract)}\\
		 &&|& \apcLiteral &\Label{(literal)}\\
		 &&|& \apcContract\apcStar &\Label{(Kleene star)}\\
		 &&|& \apcContract\apcOr\apcContract &\Label{(logical or)}\\
		 &&|& \apcContract\apcAnd\apcContract &\Label{(logical and)}\\
		 &&|& \apcContract\apcDot\apcContract &\Label{(concatenation)}\\

	  \end{array}
   \end{displaymath}
   \caption{Syntax of access permission contracts.}
   \label{fig:syntax_contracts}
\end{figure}

Figure~\ref{fig:syntax_contracts} defines the syntax of contracts.
Contract literals $\apcLiteral$ are the primitive building blocks of
contracts. Each literal defines a property access. A literal
$\apcLiteral$ is either the empty
literal $\apcAT$, the universe literal $\apcQMark$, a regular
expression $\apcRegEx$, or a  negated regular expression
$\apcNeg\apcRegEx$. The empty literal $\apcAT$ stands for the blank 
property $\tpBlank$. It should not be confused with the empty set
contract $\apcEmptySet$. The universe literal $\apcQMark$ represents
the set of all JavaScript property names. A regular expression
$\apcRegEx$ describes a set of matching property names. We assume that
these expressions are JavaScript regular expressions, which we treat
as abstract in this work.

Contracts are regular expressions extended with intersection.
A contract $\apcContract$ is either an empty set $\apcEmptySet$, an empty contract
$\apcEmpty$, a single literal $\apcLiteral$, a Kleene star $\apcContract\apcStar$, a
disjunction $\apcContract\apcOr\apcContract$, a conjunction
$\apcContract\apcAnd\apcContract$, or a concatenation
$\apcContract\apcDot\apcContract$.
Beware that a literal may contain a
regular expression at the character level.

Each contract defines a set of access paths as defined in
Figure~\ref{fig:language_literals}. This definition follows the usual semantics of regular
expressions with a few specialities. The empty literal yields the empty
property. $\apcAlphabet$ is the set of all property
names. $\regmatch{\apcRegEx}{\tpProperty}$ is a predicate that indicates whether property
$\tpProperty$ matches regular expression $\apcRegEx$ (as a standard regular expression on characters). 



We say that the contract literal $\apcLiteral$ matches property
$\tpProperty$,  written as $\match{\apcLiteral}{\tpProperty}$,  iff $\tpProperty\in\lang{\apcLiteral}$.
We further say that a contract $\apcContract$ matches path $\tpPath$, written
$\match{\apcContract}{\tpPath}$, iff $\tpPath\in\lang{\apcContract}$.

\begin{figure}
   \centering
   \begin{displaymath}
	  \begin{array}{lll}

		 \lang{\apcAT} &=& \{\tpBlank\}\\
		 \lang{\apcQMark} &=& \apcAlphabet\\
		 \lang{\apcRegEx} &=& \{\tpProperty ~|~ \regmatch{\apcRegEx}{\tpProperty}\}\\
		 \lang{\apcNeg\apcRegEx} &=& \apcAlphabet \backslash \lang{\apcRegEx}\\
		 \lang{\apcEmptySet} &=& \{\}\\
		 \lang{\apcEmpty} &=& \{\tpEmpty\}\\
		 \lang{\apcContract\apcStar} &=& \{\tpEmpty\} \cup \lang{\apcContract\apcDot\apcContract\apcStar}\\
		 \lang{\apcContract\apcOr\apcContract'} &=& \lang{\apcContract} \cup \lang{\apcContract'}\\
		 \lang{\apcContract\apcAnd\apcContract'} &=& \lang{\apcContract} \cap \lang{\apcContract'}\\
		 \lang{\apcContract\apcDot\apcContract'} &=& \{\tpPath.\tpPath' ~|~ \tpPath\in\lang{\apcContract},\tpPath'\in\lang{\apcContract'}\}\\

	  \end{array}
   \end{displaymath}
   \caption{Language of contracts.}
   \label{fig:language_literals}
\end{figure}


The last property $\tpProperty$ of an access path $\tpPath\tpDot\tpProperty$ is readable and
writeable. All properties along the prefix $\tpPath$ are readable. A contract ending with
the empty literal $\apcContract\apcDot\apcAT$ is a read-only contract. It matches access
paths of the form $\tpPath\tpDot\tpBlank$ that end with the blank property $\tpBlank$, which
never occurs in a program.






\begin{figure}
   \centering
   \begin{minipage}{0.4\linewidth}
	  \begin{displaymath}
		 \begin{array}{rll}
			\norm{\apcEmpty\apcStar&}{&\apcEmpty}\\

			\\
			\norm{\apcEmpty\apcDot\apcContract&}{&\apcContract}\\
			\norm{\apcAT\apcDot\apcContract&}{&\apcAT}\\
			\norm{\apcEmptySet\apcDot\apcContract&}{&\apcEmptySet}\\
		 \end{array}
	  \end{displaymath}
   \end{minipage}
   \begin{minipage}{0.4\linewidth}
	  \begin{displaymath}
		 \begin{array}{rll}
			\norm{\apcEmptySet\apcOr\apcContract&}{&\apcContract}\\
			\norm{\apcAT\apcOr\apcContract&}{&\apcContract}\\
			\norm{\apcContract\apcOr\apcContract&}{&\apcContract}\\

			\\
			\norm{\apcEmptySet\apcAnd\apcContract&}{&\apcEmptySet}\\
			\norm{\apcAT\apcAnd\apcContract&}{&\apcAT}\\
			\norm{\apcContract\apcAnd\apcContract&}{&\apcContract}\\
		 \end{array}
	  \end{displaymath}
   \end{minipage}
   \caption{Normalization rules for contracts.}
   \label{fig:normalization-rules_contracts}
\end{figure}

Figure~\ref{fig:normalization-rules_contracts} contains normalization rules for
contracts. 
We say that a contract $\apcContract$ is \emph{normalized} iff it cannot be further reduced
by these rules.
From now on, we regards all contracts as normalized.


\subsection{Derivatives of Contracts}
\label{sec:derivation}

In this section we introduce the notion of a derivative for a contract, which is defined analogously
to the derivative of a regular expression \cite{Brzozowski1964,Owens:2009:RDR:1520284.1520288}.
Derivatives are best explained in terms of a language quotient, which
is the set of suffixes of words in the language after taking away a
prescribed prefix.

\begin{definition}[Left quotient]\label{def:left_quotient}
   Let $L \subseteq \apcAlphabet^*$ be a language.
   The \emph{left quotient} $\leftquotient{\tpPath}{L}$ of the language $L$ with respect to an access path
   $\tpPath$ is defined as:
   \begin{gather}
	  \leftquotient{\tpPath}{L} ~=~ \{\tpPath' ~|~ \tpPath\tpDot\tpPath'\in L\}
   \end{gather}
\end{definition}

Clearly, it holds that $\{\tpPath\tpDot\tpPath' ~|~ \tpPath'\in\leftquotient{\tpPath}{L}\} \subseteq L$.
It is also immediate from the definition that  $\leftquotient{(\tpProperty\tpDot\tpPath)}{L} ~=~ \leftquotient{\tpPath}{(\leftquotient{\tpProperty}{L})}$.

To compute the derivative of a contract $\apcContract$ w.r.t. an access path $\tpPath$ we have to introduce an auxiliary function $\nullable$ to determine if a contract $\apcContract$ matches the empty path $\tpEmpty$.
Figure~\ref{fig:is-nullable} contains its definition.

\begin{figure}
   \begin{displaymath}
	  \begin{array}[t]{lll}
		 \isNullable{\apcAT} &=& \perp\\
		 \isNullable{\apcQMark} &=& \perp\\
		 \isNullable{\apcRegEx} &=& \perp\\
		 \isNullable{\apcNeg\apcRegEx} &=& \perp\\
		 \isNullable{\apcEmptySet} &=& \perp\\
	  \end{array}
	  \qquad
	  \begin{array}[t]{lll}

		 \isNullable{\apcEmpty} &=& \top\\
		 \isNullable{\apcContract\apcStar} &=& \top\\
		 \isNullable{\apcContract\apcOr\apcContract'} &=& \isNullable{\apcContract} \vee \isNullable{\apcContract'}\\
		 \isNullable{\apcContract\apcAnd\apcContract'} &=& \isNullable{\apcContract} \wedge \isNullable{\apcContract'}\\
		 \isNullable{\apcContract\apcDot\apcContract'} &=& \isNullable{\apcContract} \wedge \isNullable{\apcContract'}\\

	  \end{array}
   \end{displaymath}
   \caption{The predicate ``is nullable''.}
   \label{fig:is-nullable}
\end{figure}

\begin{definition}[Nullable]\label{def:nullable}
   A contract $\apcContract$ is nullable iff its language $\lang{\apcContract}$ contains the
   empty access path $\apcEmpty$.
\end{definition}

\begin{lemma}[Nullable]\label{thm:nullable}
   \begin{gather}
	  \apcEmpty\in\lang{\apcContract} ~\Leftrightarrow~ \isNullable{\apcContract}=\top
   \end{gather}
\end{lemma}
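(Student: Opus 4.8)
The plan is to prove Lemma~\ref{thm:nullable} by structural induction on the contract $\apcContract$, following exactly the case split used in the definition of $\nullable$ in Figure~\ref{fig:is-nullable} and the definition of $\lang{\cdot}$ in Figure~\ref{fig:language_literals}. The statement to establish is the biconditional $\apcEmpty\in\lang{\apcContract} \Leftrightarrow \isNullable{\apcContract}=\top$; since $\nullable$ is a total boolean-valued function, it suffices to show the two directions simultaneously, i.e.\ that $\isNullable{\apcContract}=\top$ exactly when $\tpEmpty\in\lang{\apcContract}$.

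First I would dispatch the base cases. For the literals $\apcAT$, $\apcQMark$, $\apcRegEx$, $\apcNeg\apcRegEx$ and for $\apcEmptySet$, the language is respectively $\{\tpBlank\}$, $\apcAlphabet$, $\{\tpProperty \mid \regmatch{\apcRegEx}{\tpProperty}\}$, $\apcAlphabet\backslash\lang{\apcRegEx}$, and $\{\}$ — in every case a set of single properties (or the empty set), none of which contains the empty path $\tpEmpty$, matching $\isNullable{\cdot}=\perp$. For $\apcEmpty$ we have $\lang{\apcEmpty}=\{\tpEmpty\}$, which contains $\tpEmpty$, matching $\isNullable{\apcEmpty}=\top$. (Here I would note the harmless point that $\apcEmpty$ and $\apcEmptySet$ must not be confused, as the excerpt already emphasizes.)

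For the inductive cases I would use the induction hypothesis on the immediate subcontracts. For $\apcContract\apcStar$: $\lang{\apcContract\apcStar} = \{\tpEmpty\}\cup\lang{\apcContract\apcDot\apcContract\apcStar}$ always contains $\tpEmpty$, and $\isNullable{\apcContract\apcStar}=\top$, so this case holds unconditionally. For $\apcContract\apcOr\apcContract'$: $\tpEmpty\in\lang{\apcContract}\cup\lang{\apcContract'}$ iff $\tpEmpty\in\lang{\apcContract}$ or $\tpEmpty\in\lang{\apcContract'}$, which by IH is $\isNullable{\apcContract}=\top$ or $\isNullable{\apcContract'}=\top$, i.e.\ $\isNullable{\apcContract}\vee\isNullable{\apcContract'}=\top$, which is the definition of $\isNullable{\apcContract\apcOr\apcContract'}$. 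The case $\apcContract\apcAnd\apcContract'$ is identical with $\cap$ and $\wedge$ in place of $\cup$ and $\vee$. For concatenation $\apcContract\apcDot\apcContract'$: by definition $\lang{\apcContract\apcDot\apcContract'} = \{\tpPath.\tpPath' \mid \tpPath\in\lang{\apcContract}, \tpPath'\in\lang{\apcContract'}\}$; the one slightly subtle observation is that $\tpEmpty = \tpPath.\tpPath'$ forces both $\tpPath = \tpEmpty$ and $\tpPath' = \tpEmpty$ (a path is a sequence, so a concatenation is empty iff both parts are), hence $\tpEmpty\in\lang{\apcContract\apcDot\apcContract'}$ iff $\tpEmpty\in\lang{\apcContract}$ and $\tpEmpty\in\lang{\apcContract'}$, which by IH equals $\isNullable{\apcContract}\wedge\isNullable{\apcContract'} = \isNullable{\apcContract\apcDot\apcContract'}$.

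I do not anticipate a genuine obstacle here — the proof is a routine structural induction and each case is forced by the definitions. The only points warranting a word of care are (i) the "empty word splits only as empty$\cdot$empty" fact used in the concatenation case, and (ii) the treatment of normalization: since the excerpt declares all contracts normalized from now on, I would remark that the normalization rules of Figure~\ref{fig:normalization-rules_contracts} are language-preserving (they rewrite, e.g., $\apcEmpty\apcStar$ to $\apcEmpty$, $\apcEmpty\apcDot\apcContract$ to $\apcContract$, etc., none of which changes the set of accepted paths), so working with normalized representatives is sound and the induction goes through unchanged; alternatively, the lemma can simply be read as a statement about arbitrary (not-necessarily-normalized) contract terms, in which case normalization is irrelevant to the argument.
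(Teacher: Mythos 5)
Your proof is correct: the paper states Lemma~\ref{thm:nullable} without giving any proof (it is the standard nullability property for regular-expression-style languages, and none of the appendices address it), and your structural induction over the contract syntax --- with the key observation in the concatenation case that $\tpEmpty$ splits only as $\tpEmpty\tpDot\tpEmpty$ --- is exactly the canonical argument that fills this gap. Your remark that normalization is language-preserving (or that the lemma can be read over arbitrary contract terms) is a sensible precaution but not needed for the induction itself.
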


If we access a target object by reading property $\tpProperty$ on an object with contract
$\apcContract$, then the access language for the target object is
$\leftquotient{\tpProperty}{\lang{\apcContract}}$. As for regular
expressions, we can compute a derivative contract $\deriv{\tpProperty}{\apcContract}$ of
$\apcContract$ with respect to $\tpProperty$ symbolically, such that
$\leftquotient{\tpProperty}{\lang{\apcContract}} = \lang{\deriv{\tpProperty}{\apcContract}}$. 
Figure~\ref{fig:derivation} contains the definition of the derivative
for a single property. We extend this definition to access paths by 
\begin{displaymath}
   \begin{array}{lcl}
	  \deriv{\apcEmpty}{\apcContract}& =& \apcContract \\
	  \deriv{\tpProperty\tpDot\tpPath}{\apcContract}& =& \deriv{\tpPath}{\deriv{\tpProperty}{\apcContract}}
   \end{array}
\end{displaymath}

\begin{lemma}[Derivatives of Contracts]\label{thm:derivation}
   For all paths $\tpPath$ it holds that:
   \begin{enumerate}
	  \item  $\lang{\deriv{\tpPath}{\apcContract}} = \leftquotient{\tpPath}{\lang{\apcContract}}$
	  \item $\lang{\tpPath\tpDot \deriv{\tpPath}{\apcContract}} ~\subseteq~ \lang{\apcContract}$
	  \item $\tpPath\in\lang{\apcContract} ~\Leftrightarrow~ \isNullable{\deriv{\tpPath}{\apcContract}}$
   \end{enumerate}
\end{lemma}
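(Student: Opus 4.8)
The plan is to establish part~(1) first and then obtain parts~(2) and~(3) as short corollaries.

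\textbf{Part~(1).} I would begin with the single-property case, namely $\lang{\deriv{\tpProperty}{\apcContract}} = \leftquotient{\tpProperty}{\lang{\apcContract}}$, proved by structural induction on the normalized contract $\apcContract$, following Brzozowski. The base cases are the literals $\apcAT, \apcQMark, \apcRegEx, \apcNeg\apcRegEx$, the empty set $\apcEmptySet$, and the empty contract $\apcEmpty$. For a literal $\apcLiteral$, the language $\lang{\apcLiteral}$ (read off from Figure~\ref{fig:language_literals}) consists only of single properties, so $\leftquotient{\tpProperty}{\lang{\apcLiteral}}$ is $\{\tpEmpty\}$ when $\match{\apcLiteral}{\tpProperty}$ and $\{\}$ otherwise, which is exactly what the defining clause of $\deriv{\tpProperty}{\apcLiteral}$ (Figure~\ref{fig:derivation}) produces; the cases $\apcEmptySet$ and $\apcEmpty$ are immediate since $\leftquotient{\tpProperty}{\{\}}=\leftquotient{\tpProperty}{\{\tpEmpty\}}=\{\}$. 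For the inductive step, the cases $\apcContract\apcOr\apcContract'$ and $\apcContract\apcAnd\apcContract'$ follow because $\leftquotient{\tpProperty}{(\cdot)}$ distributes over $\cup$ and $\cap$, combined with the induction hypothesis. For $\apcContract\apcDot\apcContract'$ one uses that $\leftquotient{\tpProperty}{(L_1\cdot L_2)}$ equals $(\leftquotient{\tpProperty}{L_1})\cdot L_2$, together with $\leftquotient{\tpProperty}{L_2}$ precisely when $\tpEmpty\in L_1$; here Lemma~\ref{thm:nullable} lets us replace the semantic side condition $\apcEmpty\in\lang{\apcContract}$ by the syntactic test $\isNullable{\apcContract}$, so the two-branch definition of $\deriv{\tpProperty}{(\apcContract\apcDot\apcContract')}$ tracks the quotient exactly. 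The Kleene star $\apcContract\apcStar$ reduces to the concatenation case via $\leftquotient{\tpProperty}{L^*} = (\leftquotient{\tpProperty}{L})\cdot L^*$. Throughout, one notes that the normalization rules of Figure~\ref{fig:normalization-rules_contracts} preserve $\lang{\cdot}$, so re-normalizing the derivative is harmless at the language level. Finally, the general case of an arbitrary path is obtained by induction on $|\tpPath|$: for $\tpPath=\tpEmpty$ both sides are $\lang{\apcContract}$, and for $\tpPath=\tpProperty\tpDot\tpPath'$ one chains the definition $\deriv{\tpProperty\tpDot\tpPath'}{\apcContract}=\deriv{\tpPath'}{\deriv{\tpProperty}{\apcContract}}$, the identity $\leftquotient{(\tpProperty\tpDot\tpPath')}{L}=\leftquotient{\tpPath'}{(\leftquotient{\tpProperty}{L})}$ noted just above the lemma, the induction hypothesis, and the single-property case.

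\textbf{Part~(2).} This is immediate from part~(1) together with the observation already recorded after Definition~\ref{def:left_quotient} that $\{\tpPath\tpDot\tpPath' \mid \tpPath'\in\leftquotient{\tpPath}{L}\}\subseteq L$: we have $\lang{\tpPath\tpDot\deriv{\tpPath}{\apcContract}} = \{\tpPath\tpDot\tpPath' \mid \tpPath'\in\lang{\deriv{\tpPath}{\apcContract}}\} = \{\tpPath\tpDot\tpPath' \mid \tpPath'\in\leftquotient{\tpPath}{\lang{\apcContract}}\} \subseteq \lang{\apcContract}$.

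\textbf{Part~(3).} Also immediate: by the definition of the left quotient, $\tpPath\in\lang{\apcContract}$ iff $\tpEmpty\in\leftquotient{\tpPath}{\lang{\apcContract}}$, which by part~(1) equals $\lang{\deriv{\tpPath}{\apcContract}}$; and $\tpEmpty\in\lang{\deriv{\tpPath}{\apcContract}}$ iff $\isNullable{\deriv{\tpPath}{\apcContract}}=\top$ by Lemma~\ref{thm:nullable}.

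\textbf{Main obstacle.} The only genuine work is the inductive step for concatenation (and hence Kleene star) in the single-property case, where the case split on nullability of the left factor must be shown to correspond exactly to the two branches in the definition of $\deriv{\tpProperty}{(\apcContract\apcDot\apcContract')}$; here Lemma~\ref{thm:nullable} is the bridge between the semantic and syntactic conditions. A minor secondary point is checking that restricting attention to normalized contracts is innocuous, which follows once each normalization rule of Figure~\ref{fig:normalization-rules_contracts} is seen to be language-preserving. Everything else is routine bookkeeping.
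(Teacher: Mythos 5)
The paper states this lemma without proof --- the appendix proves Lemmas~\ref{thm:containment_nullable}, \ref{thm:containment2}, \ref{def:literal_derivation}, \ref{def:literal_derivation2} and Theorems~\ref{thm:containment}, \ref{thm:correctness}, but treats Lemma~\ref{thm:derivation} as the standard Brzozowski result and defers to the literature. Your proof supplies exactly that standard argument (structural induction for a single property, distribution of the quotient over $\cup$, $\cap$, concatenation and star with $\nullable$ bridging the semantic side condition, then induction on path length; parts~(2) and~(3) as corollaries of part~(1)), and in its essentials it is sound. So there is no paper proof to diverge from, and your route is the expected one.

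Two details deserve more care than you give them. First, your base cases, and indeed the lemma itself, only go through for paths over $\apcAlphabet$, i.e.\ excluding the blank property: $\deriv{\tpBlank}{\apcAT}=\apcEmptySet$ while $\leftquotient{\tpBlank}{\lang{\apcAT}}=\leftquotient{\tpBlank}{\{\tpBlank\}}=\{\tpEmpty\}$, and dually $\deriv{\tpBlank}{\apcQMark}=\apcEmpty$ while $\leftquotient{\tpBlank}{\apcAlphabet}=\emptyset$ since $\tpBlank\notin\apcAlphabet$. Your blanket claim that the literal clauses match the quotient ``exactly'' silently assumes $\tpProperty\in\apcAlphabet$; that assumption is consistent with the paper's intent ($\tpBlank$ never occurs in a program) but should be made explicit. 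Second, the aside that the normalization rules of Figure~\ref{fig:normalization-rules_contracts} preserve $\lang{\cdot}$ is false for the rules involving $\apcAT$: for instance $\norm{\apcAT\apcOr\apcContract}{\apcContract}$ discards $\tpBlank$ from the language, and $\norm{\apcAT\apcDot\apcContract}{\apcAT}$ and $\norm{\apcAT\apcAnd\apcContract}{\apcAT}$ likewise change it. These rules deliberately exploit the read-only semantics of $\tpBlank$ rather than preserving languages. This does not damage your main argument, because the lemma concerns the raw derivative of Figure~\ref{fig:derivation}, but the remark should either be dropped or weakened to the rules not mentioning $\apcAT$.
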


\begin{figure}
   \centering
   \begin{displaymath}
	  \begin{array}{lll}

		 \deriv{\tpProperty}{\apcAT} &=& \apcEmptySet\\
		 \deriv{\tpProperty}{\apcQMark} &=& \apcEmpty\\
		 \deriv{\tpProperty}{\apcRegEx} &=& \begin{cases}
			\apcEmpty, &\regmatch{\apcRegEx}{\tpProperty}\\
			\apcEmptySet, &\text{otherwise}
		 \end{cases}\\  
		 \deriv{\tpProperty}{\apcNeg\apcRegEx} &=& \begin{cases}
			\apcEmptySet, &\regmatch{\apcRegEx}{\tpProperty}\\
			\apcEmpty, & \text{otherwise}
		 \end{cases}\\
		 \deriv{\tpProperty}{\apcEmptySet} &=& \apcEmptySet\\
		 \deriv{\tpProperty}{\apcEmpty} &=& \apcEmptySet\\
		 \deriv{\tpProperty}{\apcContract\apcStar} &=& \deriv{\tpProperty}{\apcContract}\apcDot\apcContract\apcStar\\
		 \deriv{\tpProperty}{\apcContract\apcOr\apcContract'} &=& \deriv{\tpProperty}{\apcContract} \apcOr \deriv{\tpProperty}{\apcContract'}\\
		 \deriv{\tpProperty}{\apcContract\apcAnd\apcContract'} &=& \deriv{\tpProperty}{\apcContract}\apcAnd \deriv{\tpProperty}{\apcContract'}\\
		 \deriv{\tpProperty}{\apcContract\apcDot\apcContract'} &=& \begin{cases}
			\deriv{\tpProperty}{\apcContract}\apcDot\apcContract'\apcOr\deriv{\tpProperty}{\apcContract'}, &\isNullable{\apcContract}\\
			\deriv{\tpProperty}{\apcContract}\apcDot\apcContract', &\text{otherwise}
		 \end{cases}\\

	  \end{array}
   \end{displaymath}
   \caption{Derivative of a contract by a property.}
   \label{fig:derivation}
\end{figure}


\subsection{Matching}
\label{sec:matching}


By Lemma~\ref{thm:derivation}, $\apcContract'=\deriv{\tpPath}{\apcContract}$ defines the
language containing the remaining paths after reading $\tpPath$. If path $\tpPath$ is not a
prefix of a path in $\lang{\apcContract}$, then $\apcContract'$ must be the empty set
$\apcEmptySet$. If $\tpPath$ is an element of $\lang{\apcContract}$, then the language
of $\apcContract'$ contains the empty path $\apcEmpty$. By definition, each path and each prefix of a
path is readable. Thus, readability and writeability can be determined by checking whether the
remaining language is the empty set.

\begin{definition}[Readable] \label{def:readable}
   An access path $\tpPath$ is readable with respect to contract $\apcContract$
   iff the derivative of contract $\apcContract$ with respect to path $\tpPath$ results in contract $\apcContract'$
   with $\lang{\apcContract'}\neq\emptyset$. That is:
   \begin{gather}
	  \isReadable{\apcContract}{\tpPath} ~\Leftrightarrow~ \lang{\deriv{\tpPath}{\apcContract}}\neq\emptyset
   \end{gather}
\end{definition}

Every path $\tpPath$ in $\lang{\apcContract}$ is writeable. By Lemma~\ref{thm:derivation},
we know that a path $\tpPath$ is an element of the language defined by $\apcContract$ iff
$\tpEmpty\in\lang{\deriv{\tpPath}{\apcContract}}$. 

\begin{definition}[Writeable] \label{def:writeable}
   An access path $\tpPath$ is writeable with respect to contract $\apcContract$   iff the
   derivative of $\apcContract$ with respect to path $\tpPath$ is nullable. 
   That is:
   \begin{gather}
	  \isWrireable{\apcContract}{\tpPath} ~\Leftrightarrow~ \isNullable{\deriv{\tpPath}{\apcContract}}
   \end{gather}
\end{definition}



\section{Formalization}
\label{sec:formalization}

This section presents the formal semantics of path monitoring and contract enforcement in terms of a JavaScript core calculus $\lcpj$ extended with access permission contracts and access paths.


\subsection{Syntax}
\label{sec:syntax}

$\lcpj$ (Figure \ref{fig:semantics_lcpj}) is a call-by-value lambda calculus extended with objects and object-proxies. The syntax is close to JavaScript core calculi from the literature \cite{Guha:2010:EJ:1883978.1883988,ecma:262}.

A $\lcpj$ expression is either a constant $\ljConst$ including $\ljUndefined$ and $\ljNull$, a variable $\ljVar$, a lambda expression, an application, an object creation, a property reference, a property assignment, or a permit expression. The novel $\ljPermit$ expression applies the given contract $\apcContract$ to the object arising from expression $\ljExp$. 


\subsection{Semantic Domains}
\label{sec:semantic_domains}

\begin{figure}
   \begin{displaymath}
	  \begin{array}{llrl}

		 \Label{\LjExpression} &\ni~ \ljExp &::=& \ljConst \mid \ljVar \mid \ljFunction \mid \ljExp(\ljExp) \mid \ljNew~\ljExp\\
		 &&\mid& \ljExp[\ljExp] \mid \ljExp[\ljExp]=\ljExp \mid \ljPermit~ \apcContract ~\ljIn~ \ljExp\\

		 \\
		 \Label{\LjLocation} &\ni~\ljLocation &&\\       
		 \Label{\LjValue} &\ni~ \ljVal &::=& \ljConst ~|~ \ljLocation\\

		 \\
		 \Label{\LjMonitor} &\ni~ \monitor &::=& \emptyset ~|~ \addReadPath{\tpPath} ~|~ \addWritePath{\tpPath}\\
		 &&\mid& \monitor;\monitor'\\

		 \\
		 \Label{\LjHandler}  &\ni~ \ljHandler &::=& \langle \tpPath, \apcContract \rangle\\
		 \Label{\LjProxy}  &\ni~ \ljProxy &::=& \langle \ljLocation, \ljHandler \rangle\\

		 \\
		 \Label{\LjPrototype} &\ni~\ljPrototype &::=& \ljVal\\
		 \Label{\LjClosure} &\ni~\ljClosure &::=& \emptyset ~|~ \langle \ljEnv, \ljFunction \rangle\\
		 \Label{\LjObject} &\ni~ \ljObj &::=& \emptyset ~|~ \ljObj[\ljStr\mapsto\ljVal]\\
		 \Label{\LjStorable} &\ni~ \ljStorable &::=& \langle \ljObj, \ljClosure, \ljPrototype \rangle ~|~ \ljProxy\\
		 \Label{\LjEnvironment} &\ni~ \ljEnv &::=& \emptyset ~|~ \ljEnv[\ljVar\mapsto\ljVal]\\
		 \Label{\LjHeap} &\ni~ \ljHeap &::=& \emptyset ~|~ \ljHeap[\ljLocation\mapsto\ljStorable]
	  \end{array}
   \end{displaymath}
   \caption{Syntax and semantic domains of $\lcpj$.}
   \label{fig:semantics_lcpj}
\end{figure}

Figure \ref{fig:semantics_lcpj} defines the semantic domains of $\lcpj$.

The heap maps a location $\ljLocation$ to a storable $\ljStorable$, which is either a proxy object $\ljProxy$ or a triple consisting of an object $\ljObj$, a function closure $\ljClosure$, and a value $\ljPrototype$ as prototype. A Proxy is a wrapper for a location $\ljLocation$ augmented with an access handler $\ljHandler$, which is a tuple consisting of a path $\tpPath$ and a contract $\apcContract$. An object $\ljObj$ maps a string to a value. A function closure consists of an expression $\ljExp$ and an environment $\ljEnv$, which maps a variable to a value $\ljVal$.

Further, a monitor $\monitor$ is a collection used for effect monitoring. It records all paths that have been accessed during the evaluation. The notation $\addReadPath{\tpPath}$ adds path $\tpPath$ as read effect to the monitor. Synonymously, $\addWritePath{\tpPath}$ adds a write effect. $\monitor;\monitor'$ denotes the union of two collections.



\begin{figure}
   \centering
   \begin{displaymath}
	  \begin{array}{l@{~}l@{~}l}

		 \langle \ljObj, \ljClosure, \ljPrototype \rangle(\ljStr) ~&=&~ \begin{cases}
			\ljVal, & \ljObj=\ljObj'[\ljStr\rightarrow\ljVal]\\
			\ljObj'(\ljStr), & \ljObj=\ljObj'[\ljStr'\rightarrow\ljVal] \\ 
			\ljHeap(\ljLocation)(\ljStr), & \ljObj=\emptyset ~\wedge~ \ljPrototype=\ljLocation\\
			\ljUndefined, & \ljObj=\emptyset ~\wedge~ \ljPrototype=\ljConst
		 \end{cases}\\

		 \langle \ljObj, \ljClosure, \ljPrototype \rangle[\ljStr\mapsto\ljVal] ~&=&~ \langle \ljObj[\ljStr\mapsto\ljVal], \ljClosure, \ljPrototype \rangle\\

		 \ljHeap[\ljLocation,\ljStr \mapsto \ljVal] ~&=&~
		 \ljHeap[\ljLocation\mapsto\ljHeap(\ljLocation)[\ljStr \mapsto \ljVal]]\\

		 \ljHeap[\ljLocation \mapsto \ljPrototype] ~&=&~ \ljHeap[\ljLocation \mapsto \langle \emptyset,\emptyset,\ljPrototype \rangle]\\

		 \ljHeap[\ljLocation \mapsto \ljClosure] ~&=&~ \ljHeap[\ljLocation \mapsto \langle \emptyset,\ljClosure,\ljNull \rangle]\\

		 \langle \ljLocation, \langle \tpPath, \apcContract \rangle \rangle ~&=&~  \langle \ljLocation, \tpPath, \apcContract \rangle

	  \end{array}
   \end{displaymath}
   \caption{Abbreviations.}
   \label{fig:syntactic_sugar}
\end{figure}

Figure \ref{fig:syntactic_sugar} introduces some abbreviated notations. A property lookup or a property update on a storable $\ljStorable=\langle \ljObj, \ljClosure, \ljPrototype \rangle$ is relayed to the underlying object. The property access $\ljStorable(\ljStr)$ returns $\ljUndefined$ by default if the accessed string is not defined in $\ljObj$ and the prototype of $\ljStorable$ is not a location $\ljLocation$. The notation $\ljHeap[\ljLocation,\ljStr \mapsto \ljVal]$ updates a property of storable $\ljHeap(\ljLocation)$, $\ljHeap[\ljLocation \mapsto \ljPrototype]$ initializes an object, and $\ljHeap[\ljLocation \mapsto \ljClosure]$ defines a function. Further, we write $\langle \ljLocation, \tpPath, \apcContract \rangle$ for a protected location.


\subsection{Evaluation of $\lcpj$}
\label{sec:formalization_evaluation_lj}

Program execution is modeled by a big-step evaluation judgment of the form $ \ljHeap,\ljEnv
~\entails~\ljExp ~\eval~ \ljHeap' ~|~ \ljVal ~|~ \monitor$. The evaluation of expression $\ljExp$ with
initial heap $\ljHeap$ and environment $\ljEnv$ results in final heap $\ljHeap'$, value
$\ljVal$, and monitor $\monitor$. The Figures~\ref{fig:fnference_lcpj}, \ref{fig:inference_function-application},
\ref{fig:inference_property-reference}, and~\ref{fig:inference_property-assignment}  contain
its inference rules.

\begin{figure}
   \centering
   \begin{mathpar}
	  \inferrule [\RuleLjConstant]
	  {
	  }
	  {
		 \ljHeap,\ljEnv ~\entails~ \ljConst ~\eval~ \ljHeap ~|~ \ljConst ~|~ \emptyset
	  }\and
	  \inferrule [\RuleLjVariable]
	  {
	  }
	  {
		 \ljHeap,\ljEnv ~\entails~ \ljVar ~\eval~ \ljHeap ~|~ \ljEnv(\ljVar) ~|~ \emptyset
	  }\and
	  \inferrule [\RuleLjFunctionCreation]
	  {
		 \ljLocation \notin \dom(\ljHeap)
	  }
	  {
		 \ljHeap,\ljEnv ~\entails~ \ljFunction ~\eval~ \ljHeap[\ljLocation\mapsto\langle \ljEnv, \ljFunction \rangle] ~|~ \ljLocation ~|~ \emptyset
	  }\and
	  \inferrule [\RuleLjFunctionApplication]
	  {
		 \ljHeap,\ljEnv ~\entails~ \ljExp_0 ~\eval~ \ljHeap' ~|~ \ljLocation ~|~ \monitor \\\\
		 \ljHeap',\ljEnv ~\entails~ \ljExp_1 ~\eval~ \ljHeap'' ~|~ \ljVal_1 ~|~ \monitor'\\\\
		 \ljHeap'',\ljLocation ~\entailsFA~ \ljVal_1 ~\eval~ \ljHeap''' ~|~ \ljVal ~|~ \monitor''
	  }
	  {
		 \ljHeap,\ljEnv ~\entails~ \ljExp_0(\ljExp_1) ~\eval~ \ljHeap''' ~|~ \ljVal ~|~ \monitor;\monitor';\monitor''
	  }\and
	  \inferrule [\RuleLjObjectCreation]
	  {
		 \ljHeap,\ljEnv ~\entails~ \ljExp ~\eval~ \ljHeap' ~|~ \ljVal ~|~ \monitor\\
		 \ljLocation \notin \dom(\ljHeap')
	  }
	  {
		 \ljHeap,\ljEnv ~\entails~ \ljNew~\ljExp ~\eval~ \ljHeap'[\ljLocation\mapsto\ljVal] ~|~ \ljLocation ~|~ \monitor
	  }\and
	  \inferrule [\RuleLjPropertyReference]
	  {
		 \ljHeap,\ljEnv ~\entails~ \ljExp_{0} ~\eval~ \ljHeap' ~|~ \ljLocation ~|~ \monitor \\\\
		 \ljHeap',\ljEnv ~\entails~ \ljExp_{1} ~\eval~  \ljHeap'' ~|~ \ljStr ~|~ \monitor' \\\\
		 \ljHeap'', \ljLocation ~\entailsPR~ \ljStr ~\eval~  ~|~ \ljHeap''' ~|~ \ljVal ~|~ \monitor''
	  }
	  {
		 \ljHeap,\ljEnv ~\entails~ \ljExp_{0}[\ljExp_{1}] ~\eval~ \ljHeap''' ~|~ \ljVal ~|~ \monitor;\monitor';\monitor''
	  }\and
	  \inferrule [\RuleLjPropertyAssignment]
	  {
		 \ljHeap,\ljEnv ~\entails~ \ljExp_0 ~\eval~ \ljHeap' ~|~ \ljLocation ~|~ \monitor\\\\
		 \ljHeap',\ljEnv ~\entails~ \ljExp_1 ~\eval~ \ljHeap'' ~|~ \ljStr ~|~ \monitor' \\\\
		 \ljHeap'',\ljEnv ~\entails~ \ljExp_2 ~\eval~ \ljHeap''' ~|~ \ljVal ~|~ \monitor'' \\\\
		 \ljHeap''', \ljLocation ~\entailsPA~ \ljStr, \ljVal ~\eval~ \ljHeap'''' ~|~ \ljVal' ~|~ \monitor'''
	  }
	  {
		 \ljHeap,\ljEnv ~\entails~ \ljExp_0[\ljExp_1] = \ljExp_2 ~\eval~ \ljHeap'''' ~|~ \ljVal' ~|~ \monitor;\monitor';\monitor'';\monitor'''
	  }\and
	  \inferrule [\RuleLjPermit]
	  {
		 \ljHeap,\ljEnv ~\entails~ \ljExp ~\eval~ \ljHeap' ~|~ \ljLocation ~|~ \monitor \\\\
		 \ljLocation' \notin \dom(\ljHeap')\\
		 \ljHeap'' = \ljHeap'[\ljLocation'\mapsto \langle \ljLocation, \tpEmpty, \apcContract \rangle]
	  }
	  {
		 \ljHeap,\ljEnv ~\entails~ \ljPermit~ \apcContract ~\ljIn~ \ljExp ~\eval~ \ljHeap'' ~|~ \ljLocation' ~|~ \monitor
	  }
   \end{mathpar}
   \caption{Inference rules for $\lcpj$.}
   \label{fig:fnference_lcpj}
\end{figure}

The rules \Rule{\RuleLjConstant}, \Rule{\RuleLjVariable}, \Rule{\RuleLjFunctionCreation} are
standard. \Rule{\RuleLjObjectCreation} allocates a new object based on the evaluated
prototype. The rule \Rule{\RuleLjPermit} creates a new proxy object for the location
resulting from the subexpression. The handler of this proxy  contains an empty access path $\tpEmpty$ and the initial contract $\apcContract$.

\begin{figure}
   \centering
   \begin{mathpar}
	  \inferrule [\RuleLjFunctionApplicationOne]
	  {
		 \langle \ljObj, \langle \dot{\ljEnv}, \ljFunction \rangle, \ljPrototype \rangle = \ljHeap(\ljLocation)\\\\
		 \ljHeap,\dot{\ljEnv}[\ljVar \mapsto \ljVal] ~\entails~ \ljExp ~\eval~ \ljHeap' ~|~ \ljVal' ~|~ \monitor
	  }
	  {
		 \ljHeap,\ljLocation ~\entailsFA~ \ljVal ~\eval~ \ljHeap' ~|~ \ljVal ' ~|~\monitor
	  }\and
	  \inferrule [\RuleLjFunctionApplicationTwo]
	  {
		 \langle \ljLocation', \tpPath, \apcContract \rangle = \ljHeap(\ljLocation)\\\\
		 \ljHeap, \ljHeap(\ljLocation') ~\entailsFA~ \ljConst ~\eval~ \ljHeap' ~|~ \ljVal' ~|~ \monitor
	  }
	  {
		 \ljHeap,\ljLocation ~\entailsFA~ \ljConst ~\eval~ \ljHeap' ~|~ \ljVal' ~|~ \monitor
	  }\and
	  \inferrule [\RuleLjFunctionApplicationThree]
	  {
		 \langle \ljLocation'', \tpPath, \apcContract \rangle = \ljHeap(\ljLocation)\\
		 \ljLocation''' \notin \dom(\ljHeap')\\\\
		 \ljHeap[\ljLocation'''\mapsto\langle \ljLocation', \tpEmpty, \apcContract \rangle], \ljHeap(\ljLocation'') ~\entailsFA~ \ljLocation''' ~\eval~ \ljHeap' ~|~ \ljVal' ~|~ \monitor
	  }
	  {
		 \ljHeap,\ljLocation ~\entailsFA~ \ljLocation' ~\eval~ \ljHeap' ~|~ \ljVal' ~|~ \monitor
	  }
   \end{mathpar}
   \caption{Inference rules for function application.}
   \label{fig:inference_function-application}
\end{figure}

\begin{figure}
   \centering
   \begin{mathpar}
	  \inferrule [\RuleLjPropertyReferenceOne]
	  {
		 \langle \ljObj, \ljClosure, \ljPrototype \rangle = \ljHeap(\ljLocation)
	  }
	  {
		 \ljHeap,\ljLocation ~\entailsPR~ \ljStr ~\eval~ \ljHeap ~|~ \ljObj(\ljStr) ~|~ \emptyset
	  }\and
	  \inferrule [\RuleLjPropertyReferenceTwo]
	  {
		 \langle \ljLocation', \tpPath, \apcContract \rangle = \ljHeap(\ljLocation)\\
		 \isReadable{\apcContract}{\ljStr}\\\\
		 \ljHeap, \ljLocation' ~\entailsPR~ \ljStr ~\eval~ \ljHeap' ~|~ \ljConst ~|~ \monitor\\
	  }
	  {
		 \ljHeap,\ljLocation ~\entailsPR~ \ljStr ~\eval~ \ljHeap' ~|~ \ljConst ~|~ \addReadPath{\tpPath\tpDot\ljStr}
	  }\and
	  \inferrule [\RuleLjPropertyReferenceThree]
	  {
		 \langle \ljLocation', \tpPath, \apcContract \rangle = \ljHeap(\ljLocation)\\
		 \isReadable{\apcContract}{\ljStr}\\\\
		 \ljHeap, \ljLocation' ~\entailsPR~ \ljStr ~\eval~ \ljHeap' ~|~ \ljLocation'' ~|~ \monitor\\\\
		 \ljProxy = \langle \ljLocation'', \tpPath\tpDot\ljStr, \deriv{\ljStr}{\apcContract}\rangle\\
		 \ljLocation''' \notin \dom(\ljHeap')
	  }
	  {
		 \ljHeap,\ljLocation ~\entailsPR~ \ljStr ~\eval~ \ljHeap'[\ljLocation'''\mapsto \ljProxy] ~|~ \ljLocation''' ~|~ \addReadPath{\tpPath\tpDot\ljStr}
	  }
   \end{mathpar}
   \caption{Inference rules for property reference.}
   \label{fig:inference_property-reference}
\end{figure}

\begin{figure}
   \centering
   \begin{mathpar}
	  \inferrule [\RuleLjPropertyAssignmentOne]
	  {
		 \langle \ljObj, \ljClosure, \ljPrototype \rangle = \ljHeap(\ljLocation)
	  }
	  {
		 \ljHeap,\ljLocation ~\entailsPA~ \ljStr, \ljVal ~\eval~ \ljHeap[\ljLocation,\ljStr\mapsto\ljVal] ~|~ \ljVal ~|~ \emptyset 
	  }\and
	  \inferrule [\RuleLjPropertyAssignmentTwo]
	  {
		 \langle \ljLocation', \tpPath, \apcContract \rangle = \ljHeap(\ljLocation)\\
		 \isWrireable{\apcContract}{\ljStr}\\\\
		 \ljHeap, \ljLocation' ~\entailsPA~ \ljStr, \ljVal ~\eval~ \ljHeap' ~|~ \ljVal' ~|~ \monitor
	  }
	  {
		 \ljHeap,\ljLocation ~\entailsPA~ \ljStr, \ljVal ~\eval~ \ljHeap' ~|~ \ljVal' ~|~ \addWritePath{\tpPath\tpDot\ljStr}
	  }
   \end{mathpar}
   \caption{Inference rules for property assignment.}
   \label{fig:inference_property-assignment}
\end{figure}

Function application, property lookup and property assignment
distinguish two cases: either the operation applies directly to a
target object or it applies to a proxy. If
the given reference is a 
non-proxy object, then the usual rules apply:
\Rule{\RuleLjFunctionApplicationOne} for calling the closure and rules \Rule{\RuleLjPropertyReferenceOne} and
\Rule{\RuleLjPropertyAssignmentOne} for property read and write. Otherwise, if the given
reference is a proxy, then the proxy rules formalize the operation
implemented by the trap.

In case of a function proxy, the contract is applied to the argument
to protect the function's input.  If the argument is a constant
\Rule{\RuleLjFunctionApplicationTwo}, then the constant need not be
wrapped and the function application is forwarded to the target
object. Otherwise, if the argument is a location
\Rule{\RuleLjFunctionApplicationThree}, then the argument is wrapped
in a new proxy with the function's contract and then passed to the
target function. This wrapping may happen multiple times.

To perform a property read on a proxy, the handler first checks whether the access is
allowed by the contract. If so and in case the accessed value is a constant, then the value
gets returned in rule \Rule{\RuleLjPropertyReferenceTwo}. Otherwise, if the value is a
location, the location gets wrapped by the derivative of the original
contract with respect to the accessed
property $\deriv{\ljStr}{\apcContract}$ and an extended path $\tpPath\tpDot\ljStr$
\Rule{\RuleLjPropertyReferenceThree}. The accessed object is now wrapped with the derivative
that describes the remaining permitted paths. Thus, a property access
to a wrapped object only returns constants or wrapped objects so that
the membrane remains intact. In both cases the monitor registers the accessed path.

In a similar way, the property assignment checks if the property is writeable with respect
to the given contract \Rule{\RuleLjPropertyAssignmentTwo}. If so, the value gets
assigned and the monitor extended by a write effect.

Both, the get and put rules, signal a violation by being stuck. The
behavior of the implementation is configurable: it may raise an
exception and stop execution or it may just log the violation and continue. This
behavior could be formalized as well by introducing separate crash judgments. We
omit the definition of these judgments because their inference rules largely
duplicate the rules from Figures~\ref{fig:fnference_lcpj},
\ref{fig:inference_function-application}, 
\ref{fig:inference_property-reference},
and~\ref{fig:inference_property-assignment}.

\section{Reduction}
\label{sec:reduction}

As the target object of a proxy may be a proxy itself, there may be a chain of
proxies to traverse before reaching the actual non-proxy 
target object. Such chains waste memory, they increase the run time of
all operations, and the intermediate proxies 
may contain redundant information. To avoid the creation of
inefficient chains of nested proxies, we create a ``proxy of a proxy'' as
follows: the new proxy directly refers to the target of the existing proxy, its
path set is the union of the new path and the already existing pathset, and its
contract is merged with the contract of the already existing
handler. Fortunately, contracts can be merged easily by using 
the conjunction operator $\apcAnd$, which means that the restrictions enforced along
\emph{all} reaching paths are enforced. Extending the formalization, the handler contains a set
of paths, instead of a single path.

However, naively following this approach leads to two problems. First,
a path update operation has to extend every
path in a set and the redundant parts in a set waste a lot of memory. Second, the
combination of contracts may result in a contract whose parts cancel or subsume one another.
Redundant parts in a combined contract also waste memory and make the computation of the derivative
more expensive.
As a result, in our initial experiments, test cases with many objects could not be analyzed in a sensible amount of time. 

This section reports some optimizations to reduce memory consumption and to improve the run time.


\subsection{Trie Structure}
\label{sec:trie}

In a first step, we changed the representation of a set of access paths to a trie structure \cite{Fredkin:1960:TM:367390.367400}.
Let $\Label{\TpTrie} ~\ni~ \tpTrie ::= \emptyset ~|~
\tpTrie[\tpProperty\mapsto\tpTrie']$ be a trie structure used as prefix tree to
store paths $\tpPath$. At each node $\tpTrie$, there is at most one association
$[\tpProperty\mapsto\tpTrie']$ for each property $\tpProperty$. It indicates
that there is a path of the form $\tpProperty.\tpPath$ in the trie, where
$\tpPath$ is in $\tpTrie'$. We use $[\tpEmpty \mapsto \emptyset]$ to mark the
end of a path.  

A path $\tpPath$ in a trie $\tpTrie$ is thus represented by the concatenation of
the properties $\tpProperty$ on a path from the root to an edge labeled with the
empty path $\tpEmpty$. 


Using trie structures enables us to share prefixes in path sets, which
reduces the memory usage significantly.  In particular, linked data structures
like lists and trees give rise to many shared prefixes which are represented efficiently by
the trie structure.


\subsection{Contract Rewriting}
\label{sec:contract_reduction}

Besides the normalization of contracts (Section \ref{sec:contracts})
the containment reduction of contracts is important to reduce memory consumption.
Containment reduction means that a contract $\apcContract\apcOr\apcContract'$ can be reduced
to $\apcContract$ if $\lang{\apcContract'}\subseteq\lang{\apcContract}$. Similarly,
$\apcContract\apcAnd\apcContract'$ can be reduced to $\apcContract'$ if
$\lang{\apcContract'}\subseteq\lang{\apcContract}$. The implementation also accounts for
commutativity. 

First, we define a semantic containment relation on contracts.

\begin{definition}[Containment]\label{def:containment}
   A contract $\apcContract$ is contained in another contract $\apcContract'$, written as
   $\isSubSetOf{\apcContract}{\apcContract'}$,  iff   
   $\lang{\apcContract}\subseteq\lang{\apcContract'}$.
\end{definition}
The containment relation on contracts is reflexive and transitive and thus forms a preorder.
As it is defined semantically, we need a syntactic decision procedure for containment of
contracts to put it to use. This procedure is inspired by Antimirov's calculus \cite{Antimirov95:0}, which
provides a non-deterministic decision procedure to solve the containment problem for
ordinary regular expressions.

From the definition of containment (Definition~\ref{def:containment}),
we obtain that $\apcContract \sqsubseteq \apcContract'$ iff for all paths $\tpPath\in\lang{\apcContract}$ the derivative of contract $\apcContract'$ with respect to path $\tpPath$ is nullable $\nullable{(\deriv{\tpPath}{\apcContract'})}$.

\begin{lemma}[Containment]\label{thm:containment_nullable}
   \begin{gather}
	  \isSubSetOf{\apcContract}{\apcContract'} ~\Leftrightarrow~ 
	  \nullable{(\deriv{\tpPath}{\apcContract'})} \text{~for all~} \tpPath\in\lang{\apcContract}
   \end{gather}
\end{lemma}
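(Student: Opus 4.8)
The plan is to prove the biconditional $\isSubSetOf{\apcContract}{\apcContract'} \Leftrightarrow \nullable{(\deriv{\tpPath}{\apcContract'})}$ for all $\tpPath\in\lang{\apcContract}$ by chasing the definitions and invoking the already-established properties of derivatives and nullability. The entire statement essentially collapses to the observation that $\lang{\apcContract}\subseteq\lang{\apcContract'}$ is, by definition of set inclusion, the statement ``every $\tpPath\in\lang{\apcContract}$ lies in $\lang{\apcContract'}$,'' and membership in $\lang{\apcContract'}$ has already been characterized syntactically by part (3) of Lemma~\ref{thm:derivation}.

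First I would unfold Definition~\ref{def:containment}: $\isSubSetOf{\apcContract}{\apcContract'}$ means $\lang{\apcContract}\subseteq\lang{\apcContract'}$, which in turn means that for every path $\tpPath$, if $\tpPath\in\lang{\apcContract}$ then $\tpPath\in\lang{\apcContract'}$. So it suffices to show, for a fixed arbitrary $\tpPath$, that $\tpPath\in\lang{\apcContract'} \Leftrightarrow \isNullable{\deriv{\tpPath}{\apcContract'}}$; quantifying this equivalence over all $\tpPath\in\lang{\apcContract}$ then yields the claim. But that pointwise equivalence is exactly part (3) of Lemma~\ref{thm:derivation}, instantiated at the contract $\apcContract'$. (Alternatively one may phrase it via Lemma~\ref{thm:derivation}(1), $\lang{\deriv{\tpPath}{\apcContract'}} = \leftquotient{\tpPath}{\lang{\apcContract'}}$, together with Lemma~\ref{thm:nullable}: $\tpEmpty\in\lang{\deriv{\tpPath}{\apcContract'}} \Leftrightarrow \tpEmpty\in\leftquotient{\tpPath}{\lang{\apcContract'}} \Leftrightarrow \tpPath\in\lang{\apcContract'}$.)

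Concretely the two directions go as follows. For the forward direction, assume $\isSubSetOf{\apcContract}{\apcContract'}$ and pick any $\tpPath\in\lang{\apcContract}$; then $\tpPath\in\lang{\apcContract'}$ by containment, and Lemma~\ref{thm:derivation}(3) gives $\isNullable{\deriv{\tpPath}{\apcContract'}}$. For the backward direction, assume $\isNullable{\deriv{\tpPath}{\apcContract'}}$ holds for all $\tpPath\in\lang{\apcContract}$; to show $\lang{\apcContract}\subseteq\lang{\apcContract'}$, take any $\tpPath\in\lang{\apcContract}$, apply the hypothesis to get $\isNullable{\deriv{\tpPath}{\apcContract'}}$, and then Lemma~\ref{thm:derivation}(3) (used right-to-left) yields $\tpPath\in\lang{\apcContract'}$, as required.

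There is essentially no obstacle here; the lemma is a direct corollary of Lemma~\ref{thm:derivation}. The only thing to be careful about is that $\tpPath$ ranges over access paths in $(\apcAlphabet\cup\{\tpBlank\})^*$ and that Lemma~\ref{thm:derivation} is stated ``for all paths $\tpPath$,'' so the instantiation is legitimate; and one should make sure the extension of the derivative to paths (the defining equations $\deriv{\apcEmpty}{\apcContract}=\apcContract$ and $\deriv{\tpProperty\tpDot\tpPath}{\apcContract}=\deriv{\tpPath}{\deriv{\tpProperty}{\apcContract}}$) is the one used throughout, which it is. Hence the proof is a two-line appeal to Lemma~\ref{thm:derivation}(3), and I would present it as such.
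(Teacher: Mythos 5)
Your proposal is correct and follows the same route as the paper: unfold Definition~\ref{def:containment} to the pointwise statement ``every $\tpPath\in\lang{\apcContract}$ lies in $\lang{\apcContract'}$'' and then apply Lemma~\ref{thm:derivation}(3) (equivalently, Lemma~\ref{thm:derivation}(1) plus Lemma~\ref{thm:nullable}) to characterize membership in $\lang{\apcContract'}$ as nullability of the derivative. Your write-up is in fact slightly more careful than the paper's, which contains a couple of typographical slips ($\apcContract$ where $\apcContract'$ is meant) in its chain of equivalences.
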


The proof is by Lemma~\ref{thm:nullable} and~\ref{thm:derivation}.

As this lemma does not yield an effective way of deciding containment,
we aim to enumerate the access paths of $\apcContract$ by iteratively
extracting its possible first 
properties and forming the derivatives on both sides as in Antimirov's procedure.

\begin{definition}[$\firstP$] \label{def:firstp}
   The function $\firstP:\apcContract\rightarrow\apcAlphabet$ returns the
   first properties $\tpProperty$ of path elements $\tpPath$ in $\lang{\apcContract}$.
   \begin{gather}
	  \getFirstP{\apcContract} ~=~ \{\tpProperty ~|~ \tpProperty\tpDot\tpPath\in\lang{\apcContract}\}
   \end{gather}
\end{definition}

From Antimirov's theorem \cite{Antimirov95:0}, we obtain that
$\isSubSetOf{\apcContract}{\apcContract'}$ iff $\isNullable{\apcContract}$
implies $\isNullable{\apcContract'}$ and $\forall\tpProperty:$
$\isSubSetOf{\deriv{\tpProperty}{\apcContract}}{\deriv{\tpProperty}{\apcContract'}}$.

\begin{lemma}[Containment2]\label{thm:containment2}
   \begin{align}
	  \begin{split}
		 \isSubSetOf{\apcContract}{\apcContract'}
		 ~\Leftrightarrow&~
		 (\forall \tpProperty\in\getFirstP{\apcContract})~
		 \isSubSetOf{\deriv{\tpProperty}{\apcContract}}{\deriv{\tpProperty}{\apcContract'}}\\
		 ~&~\wedge~ (\isNullable{\apcContract}\Rightarrow\isNullable{\apcContract'})
	  \end{split}
   \end{align}
\end{lemma}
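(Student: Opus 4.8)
The plan is to prove the equivalence semantically, pushing everything down to the languages $\lang{\apcContract}$ and $\lang{\apcContract'}$ and then invoking the characterizations already in hand: Lemma~\ref{thm:nullable} (nullability $\Leftrightarrow$ membership of $\apcEmpty$) and Lemma~\ref{thm:derivation}(1), which identifies $\lang{\deriv{\tpProperty}{\apcContract}}$ with the left quotient $\leftquotient{\tpProperty}{\lang{\apcContract}}$. The only combinatorial ingredient is a case split on whether a path is $\apcEmpty$ or has the form $\tpProperty\tpDot\tpPath'$, together with the observation that in the latter case $\tpProperty$ necessarily lies in $\getFirstP{\apcContract}$ (Definition~\ref{def:firstp}). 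This is the contract-level analogue of Antimirov's characterization; the extension of contracts with intersection $\apcAnd$ causes no difficulty, since the argument never inspects the syntactic shape of a contract, only its language.

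For the forward direction, assume $\isSubSetOf{\apcContract}{\apcContract'}$, i.e. $\lang{\apcContract}\subseteq\lang{\apcContract'}$. If $\apcContract$ is nullable then $\apcEmpty\in\lang{\apcContract}\subseteq\lang{\apcContract'}$, so $\apcContract'$ is nullable by Lemma~\ref{thm:nullable}; this gives $\isNullable{\apcContract}\Rightarrow\isNullable{\apcContract'}$. For the derivative conjunct I would note that left quotient is monotone in its language argument, so $\leftquotient{\tpProperty}{\lang{\apcContract}}\subseteq\leftquotient{\tpProperty}{\lang{\apcContract'}}$ for every property $\tpProperty$; by Lemma~\ref{thm:derivation}(1) and Definition~\ref{def:containment} this is exactly $\isSubSetOf{\deriv{\tpProperty}{\apcContract}}{\deriv{\tpProperty}{\apcContract'}}$, which in particular holds for all $\tpProperty\in\getFirstP{\apcContract}$.

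For the backward direction, assume the right-hand side and pick an arbitrary $\tpPath\in\lang{\apcContract}$; the goal is $\tpPath\in\lang{\apcContract'}$. If $\tpPath=\apcEmpty$, then $\apcContract$ is nullable by Lemma~\ref{thm:nullable}, hence so is $\apcContract'$ by the assumed implication, hence $\apcEmpty\in\lang{\apcContract'}$. Otherwise $\tpPath=\tpProperty\tpDot\tpPath'$ for some property $\tpProperty$ and path $\tpPath'$; then $\tpProperty\in\getFirstP{\apcContract}$ by Definition~\ref{def:firstp}, and $\tpPath'\in\leftquotient{\tpProperty}{\lang{\apcContract}}=\lang{\deriv{\tpProperty}{\apcContract}}$ by Lemma~\ref{thm:derivation}(1). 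The assumed inclusion $\isSubSetOf{\deriv{\tpProperty}{\apcContract}}{\deriv{\tpProperty}{\apcContract'}}$ then yields $\tpPath'\in\lang{\deriv{\tpProperty}{\apcContract'}}=\leftquotient{\tpProperty}{\lang{\apcContract'}}$, i.e. $\tpProperty\tpDot\tpPath'=\tpPath\in\lang{\apcContract'}$. (Alternatively, both directions can be packaged by applying Lemma~\ref{thm:containment_nullable} to $\deriv{\tpProperty}{\apcContract}$ and $\deriv{\tpProperty}{\apcContract'}$ and using $\deriv{\tpProperty\tpDot\tpPath'}{\apcContract'}=\deriv{\tpPath'}{\deriv{\tpProperty}{\apcContract'}}$, but the direct route above is already self-contained.)

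I do not expect a deep obstacle; the point worth stating explicitly is why quantifying over $\getFirstP{\apcContract}$ suffices instead of over all of $\apcAlphabet$: for $\tpProperty\notin\getFirstP{\apcContract}$ no path of $\lang{\apcContract}$ begins with $\tpProperty$, so $\lang{\deriv{\tpProperty}{\apcContract}}=\emptyset$ and the corresponding inclusion is vacuously true. The genuinely non-trivial follow-up — that $\getFirstP{\apcContract}$ may be further collapsed to finitely many representative cases so that the induced decision procedure terminates — is orthogonal to this lemma and belongs to the syntactic-containment development.
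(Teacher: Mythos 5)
Your proof is correct and follows essentially the same route as the paper's: both reduce containment to language inclusion, split paths into $\tpEmpty$ versus $\tpProperty\tpDot\tpPath'$ with $\tpProperty\in\getFirstP{\apcContract}$, and discharge the two conjuncts via Lemma~\ref{thm:nullable} and the left-quotient characterization of Lemma~\ref{thm:derivation}. Your explicit remark that the inclusion is vacuous for $\tpProperty\notin\getFirstP{\apcContract}$ is a welcome addition that the paper leaves implicit.
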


The proof combines the assertion of Lemma~\ref{thm:containment_nullable} with the stepwise
derivative of Lemma~\ref{thm:derivation}.

Unfortunately, it is in general impossible to construct  all derivatives with respect to all
first properties. In case of a question mark literal or a negation, there may be infinitely
many first literals. Thus, the implementation needs to apply some kind of approximation.


\subsubsection{Literal-derivative of Contracts}
\label{sec:literal_derivation}

To abstract from the derivative of a contract with respect to a property, we
introduce a literal-based derivative which forms the derivative of a
contract with respect to a literal $\apcLiteral$. This operation
admits derivatives with respect to the contract literals
$\apcAT$, $\apcQMark$, $\apcRegEx$, and
$\apcNeg\apcRegEx$, and performs some approximation by
returning a contract that is contained in the 
derivatives resulting from expanding the literals. 

\begin{figure}
   \centering
   \begin{displaymath}
	  \begin{array}{lll}
		 \lderiv{\apcLiteral}{\apcAT} &=&  \begin{cases}
			\apcEmpty, & \apcLiteral=\apcAT\\
			\apcEmptySet, & \text{otherwise}
		 \end{cases}\\

		 \lderiv{\apcLiteral}{\apcQMark} &=&             \apcEmpty\\

		 \lderiv{\apcLiteral}{\apcRegEx} &=& \begin{cases}
			\apcEmpty, & \regexIsSubSetOf{\apcLiteral}{\apcRegEx}\\
			\apcEmptySet, & \text{otherwise}
		 \end{cases}\\

		 \lderiv{\apcLiteral}{\apcNeg\apcRegEx} &=&  \begin{cases}
			\apcEmpty, &
			\regexIsCap{\apcLiteral}{\apcRegEx} = \apcEmptySet\\
			\apcEmptySet, & \text{otherwise}
		 \end{cases}\\

		 \lderiv{\apcLiteral}{\apcEmptySet} &=& \apcEmptySet\\

		 \lderiv{\apcLiteral}{\apcEmpty} &=& \apcEmptySet\\


		 \lderiv{\apcLiteral}{\apcContract\apcStar} &=& 
		 \lderiv{\apcLiteral}{\apcContract}\apcDot\apcContract\apcStar\\

		 \lderiv{\apcLiteral}{\apcContract\apcOr\apcContract'} &=&
		 \lderiv{\apcLiteral}{\apcContract} \apcOr \lderiv{\apcLiteral}{\apcContract'}\\

		 \lderiv{\apcLiteral}{\apcContract\apcAnd\apcContract'} &=& 
		 \lderiv{\apcLiteral}{\apcContract} \apcAnd \lderiv{\apcLiteral}{\apcContract'}\\

		 \lderiv{\apcLiteral}{\apcContract\apcDot\apcContract'} &=& \begin{cases}
			\lderiv{\apcLiteral}{\apcContract}\apcDot\apcContract'\apcOr\lderiv{\apcLiteral}{\apcContract'}, &\isNullable{\apcContract}\\
			\lderiv{\apcLiteral}{\apcContract}\apcDot\apcContract' , &\text{otherwise}
		 \end{cases}\\

	  \end{array}
   \end{displaymath}
   \caption{Derivative of a contract by a contract literal.}
   \label{fig:literal_derivation}
\end{figure}

As the literals include character-level regular expressions that we
treat as abstract, we rely on a relation  $\regexSubsetof$ that
decides containment of literals and on an operation $\regexCap$ that
builds the intersection of two literals. Both are considered as
abstract operators on the language of regular expression literals and
we do not specify their implementation (but similar methods as for
contracts apply).
\begin{align}
   \regexIsSubSetOf{\apcLiteral}{\apcRegEx} ~&\Leftrightarrow~ \lang{\apcLiteral}\subseteq\lang{\apcRegEx}\\
   \lang{\regexIsCap{\apcLiteral}{\apcRegEx}} ~&=~ \lang{\apcLiteral}\cap\lang{\apcRegEx}
\end{align}

Figure~\ref{fig:literal_derivation} contains the definition of the
syntactic derivative with respect to a literal. 
%
Deriving a literal $\apcLiteral$ w.r.t. itself results in the empty
contract $\apcEmpty$. Applying any derivative to the empty set $\apcEmptySet$ yields the empty set $\apcEmptySet$.
The derivative of a regular expression literal $\apcRegEx$ w.r.t. a literal
$\apcLiteral$ is the empty contract $\apcEmpty$ if the language of
$\apcLiteral$ is subsumed by the regular expression $\apcRegEx$, that
is, if we can make a step with each character in the
literal. Similarly, the derivative of a negated regular expression
literal $\apcNeg\apcRegEx$ w.r.t. a literal $\apcLiteral$ is the empty
contract $\apcEmpty$ if no property of the language of $\apcLiteral$
can make a step in $\apcRegEx$. Otherwise, the result is
$\apcEmptySet$. The remaining cases are exactly as in the derivative
with respect to a property.

The following lemma states the connection between the derivative by
contract literal and the derivative by a property.

\begin{lemma}[Syntactic derivative of contracts]\label{def:literal_derivation}  $\forall\apcLiteral:$
   \begin{gather}
	  \lang{\lderiv{\apcLiteral}{\apcContract}} ~\subseteq~ \bigcap_{\tpProperty\in\lang{\apcLiteral}} \lang{\deriv{\tpProperty}{\apcContract}}
   \end{gather}
\end{lemma}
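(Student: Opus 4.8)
The plan is to prove the statement $\lang{\lderiv{\apcLiteral}{\apcContract}} \subseteq \bigcap_{\tpProperty\in\lang{\apcLiteral}} \lang{\deriv{\tpProperty}{\apcContract}}$ by structural induction on the contract $\apcContract$. For the inductive step it will be convenient to phrase the goal equivalently: for every property $\tpProperty\in\lang{\apcLiteral}$ we must show $\lang{\lderiv{\apcLiteral}{\apcContract}} \subseteq \lang{\deriv{\tpProperty}{\apcContract}}$, i.e.\ $\isSubSetOf{\lderiv{\apcLiteral}{\apcContract}}{\deriv{\tpProperty}{\apcContract}}$ whenever $\match{\apcLiteral}{\tpProperty}$. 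Fixing such a $\tpProperty$ once and for all, the proof reduces to comparing the two derivative tables (Figure~\ref{fig:derivation} and Figure~\ref{fig:literal_derivation}) case by case.

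First I would handle the base cases (the literals $\apcAT$, $\apcQMark$, $\apcRegEx$, $\apcNeg\apcRegEx$, and the contracts $\apcEmptySet$, $\apcEmpty$). These rely only on the definitions of $\regexSubsetof$ and $\regexCap$ together with $\match{\apcLiteral}{\tpProperty} \Leftrightarrow \tpProperty\in\lang{\apcLiteral}$. For instance, in the case $\apcContract = \apcRegEx$: if $\regexIsSubSetOf{\apcLiteral}{\apcRegEx}$ then $\lderiv{\apcLiteral}{\apcRegEx}=\apcEmpty$, and since $\tpProperty\in\lang{\apcLiteral}\subseteq\lang{\apcRegEx}$ we get $\regmatch{\apcRegEx}{\tpProperty}$, so $\deriv{\tpProperty}{\apcRegEx}=\apcEmpty$ and both sides are equal; if $\regexIsNotSubSetOf{\apcLiteral}{\apcRegEx}$ then $\lderiv{\apcLiteral}{\apcRegEx}=\apcEmptySet$ and $\lang{\apcEmptySet}=\emptyset$ is contained in anything. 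The case $\apcNeg\apcRegEx$ is symmetric, using that $\regexIsCap{\apcLiteral}{\apcRegEx}=\apcEmptySet$ together with $\tpProperty\in\lang{\apcLiteral}$ forces $\nregmatch{\apcRegEx}{\tpProperty}$. The cases $\apcEmptySet$ and $\apcEmpty$ are trivial since both derivatives are $\apcEmptySet$, and $\apcAT$ is handled likewise (when $\apcLiteral\ne\apcAT$ the left side is $\apcEmptySet$; when $\apcLiteral=\apcAT$ the only matching property is $\tpBlank$ and both sides are $\apcEmpty$).

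For the inductive cases the two tables are structurally parallel, so the argument is uniform. For $\apcContract\apcStar$: $\lang{\lderiv{\apcLiteral}{\apcContract}\apcDot\apcContract\apcStar} = \lang{\lderiv{\apcLiteral}{\apcContract}}\cdot\lang{\apcContract\apcStar} \subseteq \lang{\deriv{\tpProperty}{\apcContract}}\cdot\lang{\apcContract\apcStar} = \lang{\deriv{\tpProperty}{\apcContract\apcStar}}$, where the inclusion uses the induction hypothesis for $\apcContract$ and monotonicity of concatenation. The cases $\apcContract\apcOr\apcContract'$ and $\apcContract\apcAnd\apcContract'$ follow from the induction hypotheses for $\apcContract$ and $\apcContract'$ together with monotonicity of $\cup$ and $\cap$ (note $\lang{\cdot}$ distributes over $\apcOr$ and $\apcAnd$ exactly as over $\cup$ and $\cap$ by Figure~\ref{fig:language_literals}, and both derivative operators commute with these connectives syntactically). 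The concatenation case $\apcContract\apcDot\apcContract'$ splits on $\isNullable{\apcContract}$: since $\nullable$ does not depend on the property being derived by, both tables take the same branch, and the inclusion reduces to combining the induction hypotheses as in the Kleene-star and disjunction cases.

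The main obstacle, to the extent there is one, is making sure the branching on $\isNullable{\apcContract}$ is genuinely synchronized between the two definitions — but this is immediate because $\nullable$ is computed from the contract alone, independently of any property or literal, so $\lderiv{\apcLiteral}{\cdot}$ and $\deriv{\tpProperty}{\cdot}$ always agree on which case of the concatenation (and star) rule applies. A secondary point requiring care is the $\apcQMark$ case: $\lderiv{\apcLiteral}{\apcQMark}=\apcEmpty$ unconditionally, and indeed for any $\tpProperty\in\apcAlphabet$ (in particular any $\tpProperty\in\lang{\apcLiteral}$ when $\apcLiteral$ is a regular-expression literal, so $\tpProperty\in\apcAlphabet$) we have $\deriv{\tpProperty}{\apcQMark}=\apcEmpty$, giving equality; the only subtlety is that $\lang{\apcLiteral}$ could in principle contain $\tpBlank$ only when $\apcLiteral=\apcAT$, in which case the matching property is $\tpBlank$ and $\deriv{\tpBlank}{\apcQMark}=\apcEmpty$ still holds since the rule for $\apcQMark$ does not distinguish $\tpBlank$. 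Hence the claimed inclusion holds in every case, completing the induction.
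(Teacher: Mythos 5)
Your proof is correct and takes essentially the same approach as the paper: structural induction on $\apcContract$ with a case analysis that runs in parallel through the two derivative tables, the only difference being that you fix a single $\tpProperty\in\lang{\apcLiteral}$ and prove $\lang{\lderiv{\apcLiteral}{\apcContract}}\subseteq\lang{\deriv{\tpProperty}{\apcContract}}$ pointwise rather than manipulating the intersection $\bigcap_{\tpProperty\in\lang{\apcLiteral}}$ directly. This pointwise reformulation is an equivalent bookkeeping choice that, if anything, cleanly sidesteps the paper's slightly loose step of writing an equality between $\bigl(\bigcap_{\tpProperty}\lang{\deriv{\tpProperty}{\apcContract'}}\bigr)\apcDot\lang{\apcContract'\apcStar}$ and $\bigcap_{\tpProperty}\lang{\deriv{\tpProperty}{\apcContract'}\apcDot\apcContract'\apcStar}$, where only the inclusion actually needed holds in general.
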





\subsubsection{Containment}
\label{sec:containmelnt}

Using the syntactic derivation, we are able to abstract the derivative
of a contract w.r.t. an infinite property set by forming derivatives with respect to literals. Before combining this abstraction with the containment lemma (Lemma \ref{thm:containment2}), we define a function to obtain the
first literals of a contract as shown in Figure~\ref{fig:firstc}.

\begin{figure}
   \centering
   \begin{displaymath}
	  \begin{array}{lll}

		 \getFirstC{\apcAT} &=& \{\apcAT\}\\
		 \getFirstC{\apcQMark} &=& \{\apcQMark\}\\
		 \getFirstC{\apcRegEx} &=& \{\apcRegEx\}\\
		 \getFirstC{\apcNeg\apcRegEx} &=& \{\apcNeg\apcRegEx\}\\
		 \getFirstC{\apcEmptySet} &=& \{\}\\
		 \getFirstC{\apcEmpty} &=& \{\}\\
		 \getFirstC{\apcContract\apcStar} &=& \getFirstC{\apcContract}\\
		 \getFirstC{\apcContract\apcOr\apcContract'} &=& \getFirstC{\apcContract} \cup \getFirstC{\apcContract'}\\
		 \getFirstC{\apcContract\apcAnd\apcContract'} &=& 
		 \{ \regexIsCap{\apcLiteral}{\apcLiteral'} ~|~ \apcLiteral\in\getFirstC{\apcContract}, \apcLiteral'\in\getFirstC{\apcContract'} \}\\
		 \getFirstC{\apcContract\apcDot\apcContract'} &=& \begin{cases}
			\getFirstC{\apcContract}\cup\getFirstC{\apcContract'}, & \isNullable{\apcContract}\\
			\getFirstC{\apcContract}, & \text{otherwise}
		 \end{cases}\\

	  \end{array}
   \end{displaymath}
   \caption{$\firstC$ on contracts.}
   \label{fig:firstc}
\end{figure}

The first literal of a literal $\apcLiteral$ is
$\apcLiteral$. $\apcEmptySet$ and $\apcEmpty$ have no first literals. The first literals
of
a Kleene star
contract $\apcContract\apcStar$ are the first literals of its
subcontract. The first literals of a disjunction are the union of the
first literals of its subcontracts. For a conjunction,  the set of
first literals is the set of all intersections
$\regexIsCap{\apcLiteral}{\apcLiteral'}$ of the first literals of both conjuncts. The first literals of a concatenation are the first literals of its first subcontract if the first subcontract is not nullable. Otherwise, it is the union of the first literals of both subcontracts.

The language of the first literals is defined to be the union of the languages of its literals.
\begin{equation}
   \lang{\getFirstC{\apcContract}}=\bigcup_{\apcLiteral\in\getFirstC{\apcContract}} \lang{\apcLiteral}
\end{equation}

\begin{lemma}[$\firstC$] \label{def:firstc}
   \begin{gather}
	  \getFirstP{\apcContract} ~=~ \lang{\getFirstC{\apcContract}}
   \end{gather}
\end{lemma}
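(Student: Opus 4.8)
The plan is to prove $\getFirstP{\apcContract} = \lang{\getFirstC{\apcContract}}$ by structural induction on $\apcContract$, matching the recursive structure of $\firstC$ in Figure~\ref{fig:firstc} case by case, and reducing each inductive case to a fact about languages of contracts together with the semantics of the language $\lang{\cdot}$ from Figure~\ref{fig:language_literals}. Recall that by Definition~\ref{def:firstp}, $\getFirstP{\apcContract} = \{\tpProperty \mid \tpProperty\tpDot\tpPath \in \lang{\apcContract}\}$, i.e.\ the set of possible first properties of nonempty words in $\lang{\apcContract}$; and by the displayed equation just before the lemma, $\lang{\getFirstC{\apcContract}} = \bigcup_{\apcLiteral \in \getFirstC{\apcContract}} \lang{\apcLiteral}$. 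So the goal in each case is to show these two property sets coincide.

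First I would dispatch the base cases. For a literal $\apcLiteral \in \{\apcAT, \apcQMark, \apcRegEx, \apcNeg\apcRegEx\}$, $\getFirstC{\apcLiteral} = \{\apcLiteral\}$, so $\lang{\getFirstC{\apcLiteral}} = \lang{\apcLiteral}$, while every word in $\lang{\apcLiteral}$ has length exactly one, hence $\getFirstP{\apcLiteral}$ is exactly the set of those single-letter words viewed as properties, i.e.\ $\lang{\apcLiteral}$ as well. For $\apcEmptySet$, both sides are $\emptyset$ (no words at all). For $\apcEmpty$, $\lang{\apcEmpty} = \{\tpEmpty\}$ contains only the empty word, so it has no first property; matching $\getFirstC{\apcEmpty} = \{\}$.

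Then I would handle the inductive cases, using the induction hypothesis $\getFirstP{\apcContract_i} = \lang{\getFirstC{\apcContract_i}}$ for subcontracts. For disjunction, $\lang{\apcContract\apcOr\apcContract'} = \lang{\apcContract} \cup \lang{\apcContract'}$, so first properties split as a union, giving $\getFirstP{\apcContract\apcOr\apcContract'} = \getFirstP{\apcContract} \cup \getFirstP{\apcContract'}$, which by IH equals $\lang{\getFirstC{\apcContract}} \cup \lang{\getFirstC{\apcContract'}} = \lang{\getFirstC{\apcContract}\cup\getFirstC{\apcContract'}}$. For Kleene star, since $\lang{\apcContract\apcStar} = \{\tpEmpty\} \cup \lang{\apcContract\apcDot\apcContract\apcStar}$, a nonempty word in $\lang{\apcContract\apcStar}$ is a nonempty word starting with a nonempty word of $\lang{\apcContract}$ (after discarding leading empty iterations), so its first property is a first property of $\apcContract$, and conversely; hence $\getFirstP{\apcContract\apcStar} = \getFirstP{\apcContract}$. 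For concatenation $\apcContract\apcDot\apcContract'$: a first property of a word $\tpPath.\tpPath'$ with $\tpPath\in\lang{\apcContract}$, $\tpPath'\in\lang{\apcContract'}$ is a first property of $\apcContract$ if $\tpPath\neq\tpEmpty$, and a first property of $\apcContract'$ only when $\tpPath = \tpEmpty$ is possible, i.e.\ exactly when $\isNullable{\apcContract}$ (invoking Lemma~\ref{thm:nullable}); this produces the two-case definition of $\getFirstC{\apcContract\apcDot\apcContract'}$.

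The main obstacle is the conjunction case. Here $\lang{\apcContract\apcAnd\apcContract'} = \lang{\apcContract}\cap\lang{\apcContract'}$, and I must show $\getFirstP{\apcContract\apcAnd\apcContract'} = \{\regexIsCap{\apcLiteral}{\apcLiteral'} \mid \apcLiteral\in\getFirstC{\apcContract}, \apcLiteral'\in\getFirstC{\apcContract'}\}$ at the level of languages, i.e.\ $\bigl(\bigcup_{\apcLiteral\in\getFirstC{\apcContract}}\lang{\apcLiteral}\bigr) \cap \bigl(\bigcup_{\apcLiteral'\in\getFirstC{\apcContract'}}\lang{\apcLiteral'}\bigr)$ equals $\bigcup_{\apcLiteral,\apcLiteral'}\lang{\regexIsCap{\apcLiteral}{\apcLiteral'}} = \bigcup_{\apcLiteral,\apcLiteral'}(\lang{\apcLiteral}\cap\lang{\apcLiteral'})$ --- which is just the distributive law of $\cap$ over $\cup$ --- and, crucially, that a property is the first property of some word in $\lang{\apcContract}\cap\lang{\apcContract'}$ iff it is simultaneously a first property of a word in $\lang{\apcContract}$ and of a word in $\lang{\apcContract'}$. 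The forward direction is trivial; the subtle converse needs the observation that first-property membership is a length-one statement, so if $\tpProperty$ is the first letter of some $\tpProperty.\tpPath_1 \in \lang{\apcContract}$ and of some $\tpProperty.\tpPath_2 \in \lang{\apcContract'}$, one cannot in general conclude $\tpProperty$ is the first letter of a common word --- except that here we are only asserting the equality of the \emph{first-property sets}, and the right-hand side $\getFirstP{\apcContract\apcAnd\apcContract'}$ is likewise defined purely via $\lang{\apcContract\apcAnd\apcContract'}$; so the correct reading is that both sides overapproximate in the same way and the identity to verify is exactly $\getFirstP{\apcContract}\cap\getFirstP{\apcContract'} = \getFirstP{\apcContract\apcAnd\apcContract'}$ \emph{together with} the IH, and this intersection identity is the delicate point that I would prove directly from Definition~\ref{def:firstp}. (If a counterexample to the naive intersection identity surfaces — e.g.\ $\apcContract = a.b$, $\apcContract' = a.c$, where $\lang{\apcContract\apcAnd\apcContract'} = \emptyset$ but $\getFirstP{\apcContract}\cap\getFirstP{\apcContract'} = \{a\}$ — then the lemma as stated would only be an inclusion $\getFirstP{\apcContract}\supseteq\lang{\getFirstC{\apcContract}}$ in one direction and the reverse in the other, and I would restate it accordingly; I would flag this during the write-up.)
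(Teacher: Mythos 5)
The paper states Lemma~\ref{def:firstc} without proof, so there is no official argument to compare yours against; your structural induction over the cases of Figure~\ref{fig:firstc} is the natural route, and the base cases, disjunction, and Kleene star all go through exactly as you sketch. The important point is that the worry you raise about the conjunction case is not hypothetical: your counterexample is genuine. For $\apcContract = a\apcDot b$ and $\apcContract' = a\apcDot c$ (with $a,b,c$ regular-expression literals matching distinct property names) one has $\lang{\apcContract\apcAnd\apcContract'} = \emptyset$, hence $\getFirstP{\apcContract\apcAnd\apcContract'} = \emptyset$, while $\getFirstC{\apcContract\apcAnd\apcContract'} = \{\regexIsCap{a}{a}\}$ and $\lang{\regexIsCap{a}{a}} = \{a\}$. (A second, independent failure occurs for concatenation with an empty second factor, e.g.\ $a\apcDot\apcEmptySet$, which the normalization rules of Figure~\ref{fig:normalization-rules_contracts} do not eliminate.) So the stated equality is false, and only the inclusion $\getFirstP{\apcContract} \subseteq \lang{\getFirstC{\apcContract}}$ survives your induction: in the conjunction case $\getFirstP{\apcContract\apcAnd\apcContract'} \subseteq \getFirstP{\apcContract}\cap\getFirstP{\apcContract'}$, which by the induction hypothesis and distributivity of $\cap$ over $\cup$ equals $\lang{\getFirstC{\apcContract\apcAnd\apcContract'}}$, but the reverse inclusion fails. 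You should restate the lemma as this one-sided inclusion, as you propose; note that this weaker form is exactly what the paper actually uses, namely in the proof of Theorem~\ref{thm:containment}, where from $\tpPath = \tpProperty\tpDot\tpPath'$ with $\tpProperty\in\getFirstP{\apcContract}$ one concludes only the existence of some $\apcLiteral\in\getFirstC{\apcContract}$ with $\tpProperty\in\lang{\apcLiteral}$. With that restatement your plan is complete and correct, and it is consistent with the paper's overall pattern that the syntactic operations overapproximate their semantic counterparts.
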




\begin{lemma}[Syntactic derivative of contracts 2]\label{def:literal_derivation2} $\forall \apcLiteral\in\getFirstC{\apcContract}:$
   \begin{gather}
	  \lang{\lderiv{\apcLiteral}{\apcContract}} ~=~ \bigcap_{\tpProperty\in\lang{\apcLiteral}} \lang{\deriv{\tpProperty}{\apcContract}}
   \end{gather}
\end{lemma}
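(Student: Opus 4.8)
The plan is to prove only the inclusion that is not yet available: Lemma~\ref{def:literal_derivation} already gives $\lang{\lderiv{\apcLiteral}{\apcContract}}\subseteq\bigcap_{\tpProperty\in\lang{\apcLiteral}}\lang{\deriv{\tpProperty}{\apcContract}}$ for \emph{every} literal, so it remains to show that when $\apcLiteral\in\getFirstC{\apcContract}$ we also have $\bigcap_{\tpProperty\in\lang{\apcLiteral}}\lang{\deriv{\tpProperty}{\apcContract}}\subseteq\lang{\lderiv{\apcLiteral}{\apcContract}}$. I would prove this by structural induction on $\apcContract$, running in parallel with the recursive definitions of $\getFirstC{\cdot}$, $\lderiv{\cdot}{\cdot}$ and $\deriv{\cdot}{\cdot}$, and using throughout that $\lang{\cdot}$ turns $\apcOr,\apcAnd,\apcDot,\apcStar$ into union, intersection, language concatenation and Kleene star, that $\lang{\apcEmptySet}=\emptyset$ is absorbing for concatenation while $\lang{\apcEmpty}=\{\tpEmpty\}$ is its unit, together with Lemma~\ref{thm:nullable} to line up the case splits of $\lderiv{\cdot}{\cdot}$ and $\deriv{\cdot}{\cdot}$.

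The cases $\apcEmptySet$ and $\apcEmpty$ are vacuous ($\getFirstC{\cdot}=\emptyset$); for a literal $\apcContract=\apcLiteral'$ we have $\getFirstC{\apcLiteral'}=\{\apcLiteral'\}$, so the only literal in play is $\apcLiteral'$ itself, and both sides equal $\{\tpEmpty\}$ because the side condition that makes $\lderiv{\apcLiteral'}{\apcLiteral'}$ step to $\apcEmpty$ — trivially $\regexIsSubSetOf{\apcRegEx}{\apcRegEx}$ for $\apcLiteral'=\apcRegEx$, $\regexIsCap{\apcNeg\apcRegEx}{\apcRegEx}=\apcEmptySet$ for $\apcLiteral'=\apcNeg\apcRegEx$, and unconditional for $\apcAT$ and $\apcQMark$ — is exactly the condition making $\deriv{\tpProperty}{\apcLiteral'}=\apcEmpty$ for \emph{every} $\tpProperty\in\lang{\apcLiteral'}$ (here one reads the $\apcAT$-row of Figure~\ref{fig:derivation} with the understanding $\deriv{\tpBlank}{\apcAT}=\apcEmpty$, a corner case worth noting up front). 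The general pattern the induction aims to exploit is: for $\apcLiteral\in\getFirstC{\apcContract}$ the family $\{\lang{\deriv{\tpProperty}{\apcContract}}\mid\tpProperty\in\lang{\apcLiteral}\}$ is ``coherent'' enough that its intersection is realised by the literal-derivative $\lderiv{\apcLiteral}{\apcContract}$ — the reason being that $\lderiv{\cdot}{\cdot}$ consults $\regexSubsetof$, which returns true exactly when \emph{all} of $\lang{\apcLiteral}$ clears a front test, while $\apcEmptySet$-absorption under $\apcDot$ makes a summand that is cleared by some but not all $\tpProperty$ contribute $\emptyset$ to the intersection, i.e. disappear uniformly — so that $\lderiv{\apcLiteral}{\apcContract}$ retains precisely the summands common to all $\deriv{\tpProperty}{\apcContract}$.

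Turning this pattern into an induction is where the real work lies, and where I expect the main obstacle. After pushing $\lderiv{\apcLiteral}{\cdot}$ and $\deriv{\tpProperty}{\cdot}$ through the top operator in the $\apcStar$, $\apcAnd$, $\apcOr$ and $\apcDot$ cases, one wants to apply the induction hypothesis to the subcontracts at the \emph{same} literal $\apcLiteral$; but $\apcLiteral\in\getFirstC{\apcContract\apcAnd\apcContract'}$ is a refined literal $\regexIsCap{\apcLiteral_1}{\apcLiteral_2}$ that generally lies in neither $\getFirstC{\apcContract}$ nor $\getFirstC{\apcContract'}$, and $\apcLiteral\in\getFirstC{\apcContract\apcOr\apcContract'}$ need not lie in both. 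The honest remedy is to strengthen the statement being proved, replacing ``$\apcLiteral\in\getFirstC{\apcContract}$'' by ``$\apcLiteral$ is \emph{compatible} with $\apcContract$'' — meaning that for every regular-expression literal $\apcRegEx$ or $\apcNeg\apcRegEx$ occurring at the front of $\apcContract$ one has $\regexIsSubSetOf{\apcLiteral}{\apcRegEx}$ or $\regexIsCap{\apcLiteral}{\apcRegEx}=\apcEmptySet$ — a class that is closed under the relevant recursions and on which the collapse argument above works cleanly; one then has to verify, as a separate (and the most delicate) step, that $\apcLiteral\in\getFirstC{\apcContract}$ implies compatibility with $\apcContract$. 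This is immediate for literals, $\apcStar$, concatenation, and conjunction (where $\getFirstC{\cdot}$ refines via $\regexCap$), but for disjunction $\getFirstC{\apcContract\apcOr\apcContract'}$ is merely the union $\getFirstC{\apcContract}\cup\getFirstC{\apcContract'}$ with no refinement, so a first literal of one disjunct can straddle a front regular expression of the other; closing that gap — by arguing that the extra branch it opens is killed by $\apcEmptySet$-absorption in \emph{every} $\deriv{\tpProperty}{\apcContract}$, or else by assuming the front literals of the contract are pairwise comparable or disjoint — is the step on which the whole lemma hinges, and the one I would pin down before writing anything else.
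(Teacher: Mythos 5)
Your decomposition---one inclusion inherited from Lemma~\ref{def:literal_derivation}, the reverse by rerunning the same structural induction with equalities---is exactly what the paper does: its proof of Lemma~\ref{def:literal_derivation2} states that the cases are ``analogous to the cases in the proof of Lemma~\ref{def:literal_derivation}'' with every $\subseteq$ promoted to $=$. But the paper prefaces this with ``Suppose that $\forall\tpProperty,\tpProperty'\in\lang{\apcLiteral}:\ \deriv{\tpProperty}{\apcLiteral\apcDot\apcContract}=\deriv{\tpProperty'}{\apcLiteral\apcDot\apcContract}$,'' which is precisely the uniformity/\emph{compatibility} property you isolate as the crux, and it never derives that supposition from $\apcLiteral\in\getFirstC{\apcContract}$. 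So the step you explicitly defer---showing that membership in $\getFirstC{\apcContract}$ forces compatibility, in particular across a disjunction where $\getFirstC{\apcContract\apcOr\apcContract'}$ is an unrefined union---is the very step the paper skips.

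Your suspicion about that step is justified: it cannot be closed, and the lemma as stated is false without an extra hypothesis. Take character-level regular expressions with $\lang{\apcRegEx_1}=\{\tpProperty\}$, $\lang{\apcRegEx_2}=\{\tpProperty'\}$, $\lang{\apcRegEx_3}=\{\tpProperty,\tpProperty'\}$, let $\apcContract=\apcRegEx_1\apcDot\apcContract_1\apcOr\apcRegEx_2\apcDot\apcContract_2\apcOr\apcRegEx_3\apcDot\apcContract_3$, and pick $\tpPath\in\lang{\apcContract_1}\cap\lang{\apcContract_2}$ with $\tpPath\notin\lang{\apcContract_3}$. Then $\apcRegEx_3\in\getFirstC{\apcContract}$, yet $\lderiv{\apcRegEx_3}{\apcContract}$ normalizes to $\apcContract_3$ (the first two summands die because $\regexIsSubSetOf{\apcRegEx_3}{\apcRegEx_1}$ and $\regexIsSubSetOf{\apcRegEx_3}{\apcRegEx_2}$ both fail), whereas $\lang{\deriv{\tpProperty}{\apcContract}}\cap\lang{\deriv{\tpProperty'}{\apcContract}}=(\lang{\apcContract_1}\cup\lang{\apcContract_3})\cap(\lang{\apcContract_2}\cup\lang{\apcContract_3})$ contains $\tpPath$. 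Only the inclusion of Lemma~\ref{def:literal_derivation} survives; the paper's uniformity supposition fails here since $\deriv{\tpProperty}{\apcContract}\neq\deriv{\tpProperty'}{\apcContract}$. So your plan is structurally the right one and your diagnosis is sharper than the paper's own proof, but the ``most delicate step'' you postpone is not merely delicate: a complete write-up must add the compatibility condition (pairwise nested or disjoint first literals, equivalently the paper's supposition) as a hypothesis of the lemma rather than attempt to prove it from $\apcLiteral\in\getFirstC{\apcContract}$.
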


\begin{theorem}[Containment]\label{thm:containment}
   \begin{gather}
	  \begin{split}
		 \isSubSetOf{\apcContract}{\apcContract'} ~\Leftarrow~ 
		 &(\forall
		 \apcLiteral\in\getFirstC{\isSubSetOf{\apcContract}{\apcContract'}})~
		 \isSubSetOf{\lderiv{\apcLiteral}{\apcContract}}{\lderiv{\apcLiteral}{\apcContract'}}
		 \\
		 ~&              ~\wedge~ (\isNullable{\apcContract}\Rightarrow\isNullable{\apcContract'})
	  \end{split}
   \end{gather}
\end{theorem}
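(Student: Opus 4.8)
The plan is to reduce the Theorem to the property-level containment criterion of Lemma~\ref{thm:containment2} and then to replace the property-derivatives appearing there by the literal-derivatives $\lderiv{\apcLiteral}{\cdot}$, using Lemmas~\ref{def:firstc}, \ref{def:literal_derivation}, and~\ref{def:literal_derivation2} as the bridge. So assume the hypotheses on the right-hand side: $\isSubSetOf{\lderiv{\apcLiteral}{\apcContract}}{\lderiv{\apcLiteral}{\apcContract'}}$ for every $\apcLiteral\in\getFirstC{\isSubSetOf{\apcContract}{\apcContract'}}$, together with $\isNullable{\apcContract}\Rightarrow\isNullable{\apcContract'}$. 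By the ``$\Leftarrow$'' direction of Lemma~\ref{thm:containment2} it suffices to re-derive its right-hand side, that is, the nullability implication (which is assumed verbatim) and $\isSubSetOf{\deriv{\tpProperty}{\apcContract}}{\deriv{\tpProperty}{\apcContract'}}$ for every $\tpProperty\in\getFirstP{\apcContract}$.

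For the second conjunct, fix $\tpProperty\in\getFirstP{\apcContract}$ and $\tpPath'\in\lang{\deriv{\tpProperty}{\apcContract}}$; the goal is $\tpPath'\in\lang{\deriv{\tpProperty}{\apcContract'}}$. By Lemma~\ref{def:firstc}, $\tpProperty\in\lang{\getFirstC{\apcContract}}$, hence $\tpProperty\in\lang{\apcLiteral}$ for some first literal $\apcLiteral$ of $\apcContract$, which we may assume lies in $\getFirstC{\isSubSetOf{\apcContract}{\apcContract'}}$ since that set is built from the first literals of both contracts. The crucial step is to choose $\apcLiteral$ so that $\tpPath'$ already lies in $\lang{\lderiv{\apcLiteral}{\apcContract}}$. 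Granting that, the hypothesis gives $\tpPath'\in\lang{\lderiv{\apcLiteral}{\apcContract'}}$, and Lemma~\ref{def:literal_derivation} applied to $\apcContract'$ (instantiated at the single property $\tpProperty\in\lang{\apcLiteral}$) yields $\lang{\lderiv{\apcLiteral}{\apcContract'}}\subseteq\bigcap_{\tpProperty''\in\lang{\apcLiteral}}\lang{\deriv{\tpProperty''}{\apcContract'}}\subseteq\lang{\deriv{\tpProperty}{\apcContract'}}$, so $\tpPath'\in\lang{\deriv{\tpProperty}{\apcContract'}}$ as required. Equivalently, the same argument can be packaged as an induction on $|\tpPath|$ over $\tpPath\in\lang{\apcContract}$: the case $\tpPath=\tpEmpty$ is Lemma~\ref{thm:nullable} plus the nullability hypothesis, and for $\tpPath=\tpProperty\tpDot\tpPath'$ one uses Lemma~\ref{thm:derivation}(1) to move between $\tpPath\in\lang{\apcContract}$, $\tpPath'\in\lang{\deriv{\tpProperty}{\apcContract}}$ and the analogous statements for $\apcContract'$.

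The main obstacle is the ``crucial step'': showing that $\getFirstC{\isSubSetOf{\apcContract}{\apcContract'}}$ is a fine enough cover of $\getFirstP{\apcContract}$ that every $\tpPath'\in\lang{\deriv{\tpProperty}{\apcContract}}$ is witnessed by some $\apcLiteral$ in that set with $\tpProperty\in\lang{\apcLiteral}$ and $\tpPath'\in\lang{\lderiv{\apcLiteral}{\apcContract}}$. A single literal need not suffice: Lemma~\ref{def:literal_derivation2} only gives $\lang{\lderiv{\apcLiteral}{\apcContract}}=\bigcap_{\tpProperty''\in\lang{\apcLiteral}}\lang{\deriv{\tpProperty''}{\apcContract}}$ for $\apcLiteral\in\getFirstC{\apcContract}$, and this intersection can be a proper subset of $\lang{\deriv{\tpProperty}{\apcContract}}$ when distinct properties inside $\lang{\apcLiteral}$ follow different branches of $\apcContract$. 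I would therefore first establish the sharper statement that $\lang{\deriv{\tpProperty}{\apcContract}}$ equals the \emph{union} $\bigcup\{\lang{\lderiv{\apcLiteral}{\apcContract}}\mid \apcLiteral\in\getFirstC{\isSubSetOf{\apcContract}{\apcContract'}},\ \tpProperty\in\lang{\apcLiteral}\}$, where the inclusion ``$\subseteq$'' is the real content (``$\supseteq$'' is just Lemma~\ref{def:literal_derivation}); this is proved by structural induction on $\apcContract$, the non-trivial cases being disjunction and concatenation, and it is exactly where the common-refinement shape of $\getFirstC{\isSubSetOf{\apcContract}{\apcContract'}}$ is used, so that each branch of $\apcContract$ reachable from $\tpProperty$ is separated out by one of its literals. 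With that in hand the Theorem follows as sketched. Note finally that only ``$\Leftarrow$'' can be expected, since $\lderiv{\apcLiteral}{\cdot}$ under-approximates the genuine derivative and the check may thus miss some true containments.
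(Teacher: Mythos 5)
Your proposal is correct and in substance identical to the paper's proof: the paper argues by contraposition (a witness $\tpPath\in\lang{\apcContract}\backslash\lang{\apcContract'}$ is either $\tpEmpty$, refuting the nullability implication, or has the form $\tpProperty\tpDot\tpPath'$ with $\tpProperty$ covered by some first literal $\apcLiteral$), and the ``crucial step'' you isolate --- that $\tpPath'\in\lang{\deriv{\tpProperty}{\apcContract}}$ is always witnessed by $\tpPath'\in\lang{\lderiv{\apcLiteral}{\apcContract}}$ for some $\apcLiteral\in\getFirstC{\apcContract}$ --- is exactly the paper's auxiliary Path-preservation lemma (Lemma~\ref{thm:path-preservation}), proved there by the same structural induction on $\apcContract$ you propose, and combined with Lemmas~\ref{def:literal_derivation} and~\ref{def:literal_derivation2} just as you do. Your direct formulation via Lemma~\ref{thm:containment2} is merely the contrapositive of the paper's argument, so the two proofs coincide.
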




\subsubsection{Containment Semantics}
\label{sec:containmelnt_semantics}

Based on the containment theorem (Theorem~\ref{thm:containment}) and the syntactic derivative, we present an algorithm that approximates the containment relation of contracts.

To recapitulate, a path $\tpPath$ is an element of the language defined by a contract $\apcContract$ iff the derivative of contract $\apcContract$ w.r.t. path $\tpPath$ is nullable $\nullable{(\deriv{\tpPath}{\apcContract})}$. If a contract $\apcContract$ is not contained in a contract $\apcContract'$, then there exists at least one access path $\tpPath\in\lang{\apcContract}$ that derives a non-nullable contract from $\apcContract'$.

Define containment expressions by $\ccExp ~::=~
\isSubSetOf{\apcContract}{\apcContract'}$ and let a context
$\ccContext$ be a set of previous visited containment expressions. The
context lookup $\inCcContext{\ccExp}$ determines if an expression is
already calculated in this chain. The derivation of a containment
expression
$\lderiv{\apcLiteral}{\isSubSetOf{\apcContract}{\apcContract'}}$ is
forwarded to its subcontracts as $\isSubSetOf{\lderiv{\apcLiteral}{\apcContract}}{\lderiv{\apcLiteral}{\apcContract'}}$.

%
%
The decision procedure is defined by a judgment of the form $\ccContext ~\entails~ \ccExp ~:~ \{\top, \perp\}$. The evaluation of expression $\ccExp$ in context $\ccContext$ results either in true $\top$ or false $\perp$. 
%
%
%
%
%
%
\begin{figure}
   \centering
   \begin{mathpar}
	  \inferrule [\RuleCCDisprove]
	  {
		 \isNullable{\apcContract}\\
		 \neg\isNullable{\apcContract'}
	  }
	  {
		 \ccContext ~\entails~ \isSubSetOf{\apcContract}{\apcContract'} ~:~ \perp
	  }\and
	  \inferrule [\RuleCCContext]
	  {
		 \inCcContext{\ccExp}
	  }
	  {
		 \ccContext ~\entails~ \ccExp ~:~ \top
	  }\and
	  \inferrule [\RuleCCUnfoldTrue]
	  {
		 \negInCcContext{\ccExp}\\
		 \forall\apcLiteral\in\getFirstC{\ccExp}:~ \langle\ccContext,\ccExp\rangle ~\entails~ \lderiv{\apcLiteral}{\ccExp} ~:~ \top
	  }
	  {
		 \ccContext ~\entails~ \ccExp ~:~ \top
	  }\and
	  \inferrule [\RuleCCUnfoldFalse]
	  {
		 \negInCcContext{\ccExp}\\
		 \exists\apcLiteral\in\getFirstC{\ccExp}:~ \langle\ccContext,\ccExp\rangle ~\entails~ \lderiv{\apcLiteral}{\ccExp} ~:~ \perp
	  }
	  {
		 \ccContext ~\entails~ \ccExp ~:~ \perp
	  }
   \end{mathpar}
   \caption{Unfolding axioms and rules.}
   \label{fig:cc_unfold}
\end{figure}
In Figure~\ref{fig:cc_unfold}, rule \Rule{\RuleCCDisprove} shows the
generalized disproving axiom. If the first contract $\apcContract$ is
nullable and the second contract $\apcContract'$ is not nullable, then
there exists at least one element--the empty access path
$\apcEmpty$--in the language of $\lang{\apcContract}$ which is not
element of $\lang{\apcContract'}$. This condition is sufficient to
disprove the inequality, so the rule returns false.
Rule \Rule{\RuleCCContext} returns true if the evaluated expression is
already subsumed by the context. Further derivatives of $\ccExp$ would
not contribute new information. \Rule{\RuleCCUnfoldTrue} and \Rule{\RuleCCUnfoldFalse} applies only if
$\ccExp$ is not in the context. It applies all derivatives according to $\getFirstC{\ccExp}$ and conjoins them together. 
\begin{theorem}[Correctness]\label{thm:correctness}
   \begin{gather}
	  \ccContext ~\entails~ \isSubSetOf{\apcContract}{\apcContract'} ~:~ \top
	  ~\Rightarrow~
	  \isSubSetOf{\apcContract}{\apcContract'}
   \end{gather}
\end{theorem}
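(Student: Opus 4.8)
The plan is to prove the contrapositive together with a stronger induction hypothesis that takes the context into account. A direct induction on the derivation of $\ccContext \entails \isSubSetOf{\apcContract}{\apcContract'} : \top$ is complicated by rule \Rule{\RuleCCContext}, which returns $\top$ solely on the basis that $\ccExp \in \ccContext$; at that point we have learned nothing new about the languages, so we must have recorded enough information when the expression was \emph{added} to the context. Concretely, I would prove: if $\ccContext \entails \ccExp : \top$ and every containment expression in $\ccContext$ denotes a true containment, then $\ccExp$ denotes a true containment. Since the top-level call uses an arbitrary (in practice empty) context, taking $\ccContext = \emptyset$ then yields the theorem. Wait---this is not quite strong enough either, because in \Rule{\RuleCCUnfoldTrue} we add $\ccExp$ itself to the context \emph{before} we have established that it is true, so the invariant "every expression in the context is true" is not obviously preserved by the recursive calls. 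The fix is the standard coinductive/cycle argument behind Antimirov-style procedures: reinterpret a successful derivation as a finite \emph{proof graph} and argue by contradiction on a would-be counterexample path.

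So the main line of argument I would run is as follows. Suppose, for contradiction, that $\emptyset \entails \isSubSetOf{\apcContract}{\apcContract'} : \top$ but $\isNotSubSetOf{\apcContract}{\apcContract'}$. By Lemma~\ref{thm:containment_nullable} there is a path $\tpPath \in \lang{\apcContract}$ with $\neg\isNullable{\deriv{\tpPath}{\apcContract'}}$; choose $\tpPath$ of minimal length among all such witnessing paths, over all containment expressions appearing in the (finite) derivation tree. If $\tpPath = \tpEmpty$, then $\isNullable{\apcContract}$ and $\neg\isNullable{\apcContract'}$, so the only applicable rule producing $\top$ would have to be \Rule{\RuleCCContext} --- but $\emptyset$ is empty, and more generally the context only contains expressions already being processed, so I would push this through by the minimality argument across the whole tree: whichever node first "sees" this empty-path discrepancy either fires \Rule{\RuleCCDisprove} (contradicting that it returned $\top$) or appeals to \Rule{\RuleCCContext}, pointing to an ancestor with the same expression, which then must itself have derived $\top$ while facing the same $\tpEmpty$ witness --- an infinite regress up a finite tree, contradiction. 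If $\tpPath = \tpProperty\tpDot\tpPath'$, then $\tpProperty \in \getFirstP{\apcContract}$, and by Lemma~\ref{def:firstc} there is some $\apcLiteral \in \getFirstC{\apcContract}$ with $\tpProperty \in \lang{\apcLiteral}$; since the successful derivation used \Rule{\RuleCCUnfoldTrue}, it recursively derived $\langle\ccContext,\ccExp\rangle \entails \isSubSetOf{\lderiv{\apcLiteral}{\apcContract}}{\lderiv{\apcLiteral}{\apcContract'}} : \top$. By Lemma~\ref{def:literal_derivation2}, $\lang{\lderiv{\apcLiteral}{\apcContract}} = \bigcap_{\tpProperty'' \in \lang{\apcLiteral}}\lang{\deriv{\tpProperty''}{\apcContract}}$, so in particular $\tpPath' \in \leftquotient{\tpProperty}{\lang{\apcContract}} = \lang{\deriv{\tpProperty}{\apcContract}} \supseteq \lang{\lderiv{\apcLiteral}{\apcContract}}$; the subtlety is the direction of this inclusion, which I address below. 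Using Lemma~\ref{def:literal_derivation} in the other direction, $\neg\isNullable{\deriv{\tpPath'}{\lderiv{\apcLiteral}{\apcContract'}}}$ follows from $\neg\isNullable{\deriv{\tpPath}{\apcContract'}}$ together with $\lang{\lderiv{\apcLiteral}{\apcContract'}} \subseteq \lang{\deriv{\tpProperty}{\apcContract'}}$ and Lemma~\ref{thm:derivation}. Thus $\tpPath'$ is a strictly shorter witness for the sub-expression $\isSubSetOf{\lderiv{\apcLiteral}{\apcContract}}{\lderiv{\apcLiteral}{\apcContract'}}$, contradicting minimality --- unless that recursive call returned $\top$ via \Rule{\RuleCCContext}, in which case we again chase up the tree to the earlier occurrence of the same containment expression, and the finiteness of the tree gives the contradiction.

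The step I expect to be the genuine obstacle is reconciling the \emph{approximation} inherent in $\lderiv{\apcLiteral}{\cdot}$ with the need for an \emph{exact} witness transfer. Lemma~\ref{def:literal_derivation} only gives $\lang{\lderiv{\apcLiteral}{\apcContract}} \subseteq \bigcap_{\tpProperty \in \lang{\apcLiteral}}\lang{\deriv{\tpProperty}{\apcContract}}$ for general $\apcLiteral$, while Lemma~\ref{def:literal_derivation2} upgrades this to an equality precisely when $\apcLiteral \in \getFirstC{\apcContract}$. On the left-hand contract $\apcContract$ I may use the equality (since I pick $\apcLiteral \in \getFirstC{\apcContract}$), which lets me carry the suffix $\tpPath'$ into $\lderiv{\apcLiteral}{\apcContract}$ correctly; on the right-hand contract $\apcContract'$ I only need the $\subseteq$ direction, which is exactly what shows the non-nullability survives --- a smaller language that still must be shown to miss $\tpPath'$, which it does a fortiori. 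Getting these two uses of the literal-derivative lemmas pointing the right way, and making sure the literal chosen from $\getFirstC{\apcContract}$ (as opposed to $\getFirstC{\isSubSetOf{\apcContract}{\apcContract'}} = \getFirstC{\apcContract}$, which the theorem statement uses) actually contains the property $\tpProperty$ dictating the minimal witness, is the delicate bookkeeping. The rest --- the termination/finiteness of the derivation tree and the $\tpEmpty$ base case --- is routine once the proof-graph reformulation is in place.
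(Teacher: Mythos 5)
Your overall strategy --- analyse the successful derivation, transfer a hypothetical counterexample path downward through the unfold steps, and break the cycles introduced by \Rule{\RuleCCContext} with a minimal-witness/infinite-descent argument --- is a legitimate and in fact considerably more rigorous rendering of what the paper does. The paper's own proof is a short case analysis on the derivation: the unfold and disprove cases are discharged by citing Theorem~\ref{thm:containment}, and the \Rule{\RuleCCContext} case is dismissed with the one-line observation that $\ccExp$ ``is already part of the conjunction,'' i.e.\ the standard coinductive reading of Antimirov-style procedures. Your descent argument is the honest way to make that one line precise, and your inlining of the witness-transfer step essentially reproves Theorem~\ref{thm:containment} (which the paper proves separately, also by contraposition on a witness path). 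So the decomposition differs --- you fuse the containment theorem into the correctness proof and replace the coinductive appeal by well-founded descent --- but the mathematical content is largely the same.

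Two concrete soft spots remain. First, the step from $\tpPath'\in\lang{\deriv{\tpProperty}{\apcContract}}$ to $\tpPath'\in\lang{\lderiv{\apcLiteral}{\apcContract}}$ does not follow from Lemma~\ref{def:literal_derivation2} alone: that lemma gives $\lang{\lderiv{\apcLiteral}{\apcContract}}=\bigcap_{\tpProperty''\in\lang{\apcLiteral}}\lang{\deriv{\tpProperty''}{\apcContract}}$, and membership in one factor of an intersection does not yield membership in the intersection. The paper supplies exactly the missing fact as Lemma~\ref{thm:path-preservation} (Path-preservation): some $\apcLiteral\in\getFirstC{\apcContract}$ satisfies $\tpPath'\in\lang{\lderiv{\apcLiteral}{\apcContract}}$; you should invoke that (or the tacit assumption, made in the paper's proof of Lemma~\ref{def:literal_derivation2}, that all $\tpProperty''\in\lang{\apcLiteral}$ induce the same derivative) rather than the bare equality. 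Second, your $\tpEmpty$ base case asserts that the only rule that could still produce $\top$ at the offending node is \Rule{\RuleCCContext}; but \Rule{\RuleCCUnfoldTrue} does not test nullability, so at a node with $\isNullable{\apcContract}$ and $\neg\isNullable{\apcContract'}$ it can fire --- even vacuously, as for $\isSubSetOf{\apcEmpty}{\apcEmptySet}$, where $\getFirstC{\apcEmpty}=\{\}$ --- and then your descent has nowhere to go. This reflects a defect of the rules as stated (the theorem needs \Rule{\RuleCCDisprove} to take priority over \Rule{\RuleCCUnfoldTrue}, or the nullability check folded into the unfold rule, which is what the hypothesis of Theorem~\ref{thm:containment} actually demands); the paper's proof silently assumes this, and your proof must state the assumption explicitly for the $\tpEmpty$ case to close.
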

%


In addition to  the rules from Figure~\ref{fig:cc_unfold}, we add auxiliary axioms to detect
trivially consistent (inconsistent) inequalities early
(Figures~\ref{fig:cc_prove} for consistent inequalities
and~\ref{fig:cc_disprove} for inconsistent ones). They decide
containment directly without unfolding.
The axioms rely on four
functions that operate on the contract syntax: $\blankReducible$, $\emptyReducible$,
$\indifferentReducible$, and $\universalReducible$. Each of them is
correct, but not complete.
For instance, the function $\emptyReducible$ can only approximate whether the
language of a contract of the form $\apcContract\apcAnd\apcContract'$
denotes the empty set.  The following definition specifies these functions, their
actual definitions are straightforward and thus elided.

\begin{definition}
   A contract $\apcContract$ is \dots
   \begin{description}
	  \item[\emph{blank}] if $\lang{\apcContract}=\{\tpBlank\}$,
		 let    $\isBlank{\apcContract}  =  \top$ imply $\apcContract$ is blank;
	  \item[\emph{empty}] if $\lang{\apcContract}=\emptyset$,
		 let $\isEmpty{\apcContract} = \top  \Rightarrow$ imply $\apcContract$ is empty;
	  \item[\emph{indifferent}] if
		 $\lang{\apcContract}=\apcAlphabet$,
		 let $\isIndifferent{\apcContract} = \top$ imply $\apcContract$ is indifferent;
	  \item[\emph{universal}] if $\lang{\apcContract}=\apcWords$,
		 let $\isUniversal{\apcContract} = \top$ imply $\apcContract$ is universal.
   \end{description}
\end{definition}

\begin{figure}
   \centering
   \begin{mathpar}
	  \inferrule [\RuleCCIdentity]
	  {
	  }
	  {
		 \ccContext ~\entails~ \isSubSetOf{\apcContract}{\apcContract} ~:~ \top
	  }\and
	  \inferrule [\RuleCCEmpty]
	  {
		 \isEmpty{\apcContract}\vee \isUniversal{\apcContract'}
	  }
	  {
		 \ccContext ~\entails~ \isSubSetOf{\apcContract}{\apcContract'} ~:~ \top
	  }\and
	  \inferrule [\RuleCCNullable]
	  {
		 \isNullable{\apcContract'}
	  }
	  {
		 \ccContext ~\entails~ \isSubSetOf{\apcEmpty}{\apcContract'} ~:~ \top
	  }
   \end{mathpar}
   \caption{Prove axioms.}
   \label{fig:cc_prove}
\end{figure}

\begin{figure}
   \centering
   \begin{mathpar}
	  \inferrule [\RuleCCBlank]
	  {
		 \neg\isEmpty{\apcContract}\\
		 \isEmpty{\apcContract'}
	  }
	  {
		 \ccContext ~\entails~ \isSubSetOf{\apcContract}{\apcContract'} ~:~ \perp
	  }\and
	  \inferrule [\RuleCCIndifferentTwo]
	  { 
		 \isIndifferent{\apcContract}\vee \isUniversal{\apcContract}\\
		 \isBlank{\apcContract'}
	  }
	  {
		 \ccContext ~\entails~ \isSubSetOf{\apcContract}{\apcContract'} ~:~ \perp
	  }
   \end{mathpar}
   \caption{Disprove axioms.}
   \label{fig:cc_disprove}
\end{figure}



\subsubsection{Reductions Rules}
\label{sec:js_reductions}

Finally, we apply the preceding machinery to define a reduction function $\reduce{\cdot}$ on contracts in Figure~\ref{fig:reduction_rules}. Reduction produces an equivalent contract, which is smaller than its input. 


\begin{figure}
   \centering
   \begin{displaymath}
	  \begin{array}{lll}

%


		 \reduce{\apcContract\apcStar} &=& \begin{cases}
			\apcEmpty, & \isEmpty{\apcContract} ~\vee~ \isBlank{\apcContract}\\
			\reduce{\apcContract}\apcStar, & \text{otherwise}
		 \end{cases}\\

		 \reduce{\apcContract\apcOr\apcContract'} &=& \begin{cases}
			\apcEmptySet, & \isEmpty{\apcContract}\wedge\isEmpty{\apcContract'}\\
			\apcAT, & \isBlank{\apcContract}\wedge\isBlank{\apcContract'}\\
			\reduce{\apcContract}, & \isSuperSetOf{\apcContract}{\apcContract'}\\
			\reduce{\apcContract'}, & \isSubSetOf{\apcContract}{\apcContract'}\\
			\reduce{\apcContract}\apcOr\reduce{\apcContract'}, & \text{otherwise}
		 \end{cases}\\

		 \reduce{\apcContract\apcAnd\apcContract'} &=& \begin{cases}
			\apcEmptySet, & \isEmpty{\apcContract}\vee\isEmpty{\apcContract'}\\
			\apcAT, & \isBlank{\apcContract}\vee\isBlank{\apcContract'}\\
			\reduce{\apcContract}, & \isSubSetOf{\apcContract}{\apcContract'}\\
			\reduce{\apcContract'}, & \isSuperSetOf{\apcContract}{\apcContract'}\\
			\reduce{\apcContract}\apcAnd\reduce{\apcContract'}, & \text{otherwise}
		 \end{cases}\\

		 \reduce{\apcContract\apcDot\apcContract'} &=& \begin{cases}
			\apcEmptySet, & \isEmpty{\apcContract}\\
			\apcAT, & \isBlank{\apcContract}\\
			\reduce{\apcContract}\apcDot\reduce{\apcContract'}, & \text{otherwise}
		 \end{cases}

	  \end{array}
   \end{displaymath}
   \caption{Reduction rules.}
   \label{fig:reduction_rules}
\end{figure}

Literals and empty contracts are not further reducible. A
Kleene star contract $\apcContract\apcStar$ is reduced to the empty contract $\apcEmpty$ if the subcontract is either empty or blank. The disjunction contract $\apcContract\apcOr\apcContract'$ is reduced to the empty set $\apcEmptySet$ if both contracts are empty or it reduces to the empty literal $\apcAT$ if both contracts are blank. If one of the subcontracts is subsumed by the other one, the subsuming contract is used. Similarly, the conjunction contract $\apcContract\apcAnd\apcContract'$ is reduced to the empty set $\apcEmptySet$ or the empty literal $\apcAT$  if one of the subcontracts is empty or blank. If one contract subsumes the other, then the subsumed contract is used. The concatenation $\apcContract\apcDot\apcContract'$ is reduced to the empty set $\apcEmptySet$ if the first subcontract is empty or it is reduced to the empty literal $\apcAT$ if the first sub-contract is blank.


\section{Implementation}
\label{sec:implementation}

The implementation is based on the JavaScript Proxy API
\cite{VanCutsem:2010:PDP:1869631.1869638, van2012design}, a proposed
addition to the JavaScript standard.  This API is implemented in
Firefox since version 18.0 and in Chrome V8 since version 3.5.
We developed the implementation using the \emph{SpiderMonkey JavaScript-C 1.8.5
(2011-03-31)} JavaScript engine.

\subsection{Description}
\label{sec:description}

The implementation provides a proxy handler \texttt{AccessHandler} that overrides all
trap functions. The traps implement the access control
mechanism as well as path monitoring. They either interrupt the
operation, if it is not permitted, or forward it to the target object.
They maintain the path set and contract data structures using the
fly-weight pattern to minimize memory consumption.   

Our framework can easily be included in existing JavaScript software
projects. Its functionality is encapsulated in a facade whose
interface--the function \texttt{permit}--can be used to wrap objects.  

The framework provides two evaluation modes, \emph{Observer Mode} and
\emph{Protector Mode}. The \emph{Observer Mode} 
performs only path and violation logging without changing the semantics of the
underlying program. Thus, if a program reads multiple properties along a
prohibited path, then each individual read is logged as a violation.
For example, suppose an object is protected by the contract
\lstinline{|'b+c'|}. Reading property \lstinline$a$ results in a violation with
access path \lstinline{a} and a subsequent read of \lstinline$a.b$ results in a
violation of \lstinline$a.b$, and so on.

The \emph{Protector Mode} follows the scripting-language philosophy as implemented
in the rest of JavaScript. If a read access violates the contract of
an object, the value \texttt{undefined} is returned instead of an
abnormal termination. Forbidden write accesses are simply omitted.
Thus, only top-level violations are visible.  


Our framework comes with a JavaScript-based GUI. Included in a web
page, the interface shows all accessed paths as well as all incurred
contract violations. A heuristic allows us to generate short effect
descriptions from the gathered path sets using the approach reported
elsewhere
\cite{Heidegger:2012:APC:2103656.2103671,Heidegger:2010:CTJ:1894386.1894395}. 

\subsection{Limitations}

Because of the browser's sandbox, JSConTest2 cannot directly protect
DOM objects with access permission contracts. The security mechanism
forbids to replace the references to the \lstinline$window$ and
\lstinline$document$ objects by suitably contracted proxies. This deficiency can 
be partially addressed by embedding an entire script in a scope which substitutes the
global object by a suitable proxy.

The use of proxies for access control has one unfortunate consequence:
the equality operators \texttt{==} and \texttt{===} do not work
correctly, anymore. Dependening on the access path, the same target
object may have different access rights and hence distinct proxies
that enforce these rights. Comparing these distinct proxies returns
false even though the underlying target is the same. Similarly, an
unwrapped target object may be compared with its contracted version,
which should be true, but yields false.

Here is an example illustrating the problem.
\begin{lstlisting}
var ch = { c : 42 }
var root = __APC.permit (|'a.@+b.c'|, { a : ch, b : ch })
var same_acc = (root.a === root.b)
var same_unw = (ch === root.b)
\end{lstlisting}
With our implementation, both \lstinline|same_acc| and
\lstinline|same_unw| are false although they are true without the
\lstinline|permit| operation.

Unfortunately, there is no easy way to address this shortcoming. One
possibility is to assign each target a unique proxy, which requires a
potentially unintuitive merge of different access contract. Another
idea would be to trap the equality operation, which is not supported
by the proxy API. However, neither the unique proxy nor trapping the
equality operation would solve the problem with comparing the proxy with its
target as in \lstinline|same_unw| (just consider \texttt{===} as a
method call on the unwrapped target object \texttt{ch} in line~4). 

The best solution would be to provide two proxy-aware equality functions
and replace all uses of \texttt{==} and \texttt{===} by these
functions. This solution would require some light rewriting of the
source code (also at run time to support eval), which is much less
intrusive than the rewriting of the original JSConTest
implementation. 
Currently, we do not supply this rewriting because none of the
programs we examined in our evaluation were affected by the problem.


\section{Evaluation}
\label{sec:practical_results}

This section reports on our experiences with applying JSConTest2 to selected programs.
All benchmarks were run on a MacBook Pro with a 2 GHz
Intel Core i7 processor with 8 GB memory.  
All example runs and timings reported in
this paper were obtained with version 23.0a2 (2013-05-21) of the \emph{Firefox
Aurora} browser.  

\subsection{Benchmark Programs}
\label{sec:programs}

To evaluate our implementation, we applied it to a range of JavaScript
programs: the Google V8 Benchmark Suite\footnote{\webpageGoogleBenchmarks} and a
selection of benchmarks 
accompanying the TAJS system \cite{JensenMoellerThiemann2009}. 

The Google V8 Benchmark Suite consists of a webpage with several
JavaScript programs, which are listed in
Figure~\ref{fig:googleV8Runtime}. The benchmarks range from about 400
to 5000 lines of code implementing an OS kernel simulation, constraint
solving, encryption, ray tracing, parsing, regular expression
operations, benchmarking data structures, and solving differential
equations. The suite was originally composed to evaluate
the performance of JavaScript engines.  It is designed to stress
various aspects of the implementation of a JavaScript engine, but the
programs it contains are not necessarily representative of the typical
programs run in a browser.

The TAJS benchmarks consist of JavaScript programs and dumped Web pages collected
in the wild to test the static analysis system TAJS. To easily run the tests in
the Aurora web browser, we selected all programs that came packaged with a
webpage:  3dmodel, countdown, oryx, ajaxtabscontent, arkanoid, ball\_pool,
bunnyhunt, gamespot, google\_pacman, jscalc, jscrypto, logo, mceditor,
minesweeper, msie9, simple\_calc, wala. The selection further contains
programs like a calculator or a simple browser game 
as well as libraries extending the functionality of JavaScript like
jQuery, a linked list data type, or an MD5 hashing library. We also
applied our system to a number of dumped web pages like
\emph{youtube}, \emph{twitter}, or \emph{imageshack}. 

\subsection{Methodology}
\label{sec:methodology}

To evaluate our implementation with the Google Benchmarks, we manually examined
their source code, identified frequently used objects, and marked them with an
empty contract \lstinline$|@|$. Each access to those objects generated an
access violation, which was logged.

In the TAJS benchmarks, 
we looked for interesting objects and functions,
non-locally used data, and uses of external libraries like jQuery. In a first run, we
augmented these objects with a universal contract
(e.g. \lstinline{|?*|}) to monitor the accessed properties. Based on
the generated protocol we prepared customized contracts to protect
these objects. To exercise the customized contracts, we extended the source code
with additional, nonconforming operations to provoke violations.


\subsection{Results}
\label{sec:performance_impact}


\begin{figure}
   \centering
   \small
   \begin{tabular}{ l || l | l| l| l}
	  \toprule
	  \textbf{Benchmark}&
	  \textbf{Full}&
	  \textbf{Without}&
	  \textbf{Contracts}&
	  \textbf{Baseline}\\
	  &
	  &
	  \textbf{logging}&
	  \textbf{only}\\
	  \midrule
	  Richards& 
	  22.5min&
	  18.6min&
	  3.3sec&
	  2.3sec\\
	  DeltaBlue& 
	  9.8sec&
	  9.5sec&
	  3.3sec&
	  2.3sec\\
	  Crypto& 
	  4.2h&
	  2.5h&
	  2.6min&
	  4.4sec\\
	  RayTrace& 
	  1.2h&
	  1.1h&
	  1.6min&
	  2.3sec\\
	  EarleyBoyer& 
	  4.4sec&
	  4.4sec&
	  4.4sec&
	  4.3sec\\
	  RegExp& 
	  2.4sec&
	  2.4sec&
	  2.4sec&
	  2.4sec\\
	  Splay& 
	  -&
	  -&
	  2.3sec&
	  2.3sec\\
	  NavierStokes& 
	  2.3sec&
	  2.3sec&
	  2.3sec&
	  2.3sec\\
	  \bottomrule
   \end{tabular}
   \caption{Google V8 Benchmark Suite.}
   \label{fig:googleV8Runtime}
\end{figure}


With our initial implementation, contract enforcement for programs in the Google
V8 Benchmark Suite was not possible, because the browser quickly ran out of
memory. 
Our reimplementation based on the ideas described in Section~\ref{sec:reduction}
enabled us to cut down memory consumption dramatically.\footnote{Unfortunately,
we did not find a way to measure memory consumption.}
The reduction in memory use of path logging comes at the expense of higher computational
cost. The reimplemented system successfully applies contract enforcement to
all programs in the Google V8 Benchmark Suite; for the \emph{Splay}
benchmark, we have no numbers for path set collection and logging
because it did not terminate  within four hours.


Figure~\ref{fig:googleV8Runtime} contains the run times for all V8
benchmark programs in different configurations. The column \emph{Full} contains the run time for contract
enforcement, path set collection, and log output. The effect heuristic to
condense the resulting set of paths to a short effect description is disabled.
The column \emph{Without logging} shows the time used for contract enforcement
and path set collection, but without logging. The column \emph{Contracts only}
shows the time for contract enforcement, without any path set
generation. The last column \emph{Baseline} shows the baseline for a run without
JSConTest2.

Using forwarding proxies instead of normal objects did not have a measurable
effect: in addition to the above configurations, we ran the
benchmarks with forwarding proxies, where the handler intercepts all
operations but the trap functions forward the operation to
the target object, as shown in Figure~\ref{fig:prxie_pattern}.  The
resulting run times exhibit no measurable difference to the numbers in
column \emph{Baseline}. 

In most benchmarks, the run-time difference between contract
enforcement and the baseline is negligible, so monitoring is
cheap. The exceptions are \emph{Crypto} and \emph{RayTrace} where the contract is
applied to the main API object.

The run times for the programs in the V8 Benchmark Suite range from
few seconds up to four hours when running with contract monitoring fully engaged. This run
time depends on the objects chosen for contract monitoring: contracting heavily
used objects causes more overhead (viz.\ \emph{Crypto} and \emph{RayTrace}),
contracting the root of a tree (for example in \emph{Splay}) 
also causes overhead because the membrane implementation creates a shadow tree
populated with proxies while the program runs.

Unfortunately, the most expensive benchmark (\emph{Splay}) increases
the size of the trie structure in a way that the contract
implementation was not able to handle efficiently. Further
optimizations like condensing paths are required to run this benchmark
to completion. 

The second most expensive benchmark (\emph{Crypto}) produces more than 5GB output of logged paths,
depending on the selected object. For comparison, the benchmark
\emph{Richards} requires  
approximately 22.5 minutes to calculate slightly more than 1GB of logged paths. In \emph{Crypto}, a significant percentage
of the memory consumption and the computation time is due to the path
recovery at the end of the run. This mechanism flattens a trie structure to a
list of paths, which removes all sharing from the structure. It accounts for
much of the difference between columns \emph{Full} and \emph{Without logging} in
Figure~\ref{fig:googleV8Runtime}.

These examples show that the run time impact of monitoring is highly
dependent on the program and on the particular values that are
monitored. While some programs are heavily affected (\emph{Crypto},
\emph{Richards}, \emph{RayTrace}, \emph{Splay}), others are almost
unaffected: \emph{EarleyBoyer}, \emph{RegEx}, \emph{NavierStokes}. 

The numbers also show that the path logging accounts for most of the run-time overhead: 
the biggest fraction of the total run time is used for path generation, which comprises
appending of trie structures and merging tries. The remaining time is spent for path 
reconstruction, logging, and log output. 

The evaluation of contracts themselves is negligible in many cases, but
occasionally it may create an overhead of 35x (\emph{Crypto}) to 41x
(\emph{RayTrace}).  On the other hand, these programs are an artificial
selection to stress the JavaScript implementation.

The run times of the more realistic collection of TAJS benchmark
programs are all much shorter (less than one second) than the run
times for the V8 benchmarks. Furthermore, the difference between the
run times of the four configurations listed in
Figure~\ref{fig:googleV8Runtime} is negligible for the TAJS
benchmarks. These findings indicate that contract monitoring seems
feasible for realistic programs.

\subsection{General Observations}
\label{sec:general-observations}

The benchmarks show that the most time-consuming parts are path
logging and contract derivation. Their overhead is influenced by
several factors: the number and frequency of proxy calls and the length
of access chains. 

The length of the access chains determines the number of derivation steps and the size of the trie structure. Further, the number of nested proxies influences the number of merge operations and can cause a blowup of the data structures. Path extension and computing the contract derivative is more expensive on merged handlers. Also, the structure of the contract affects the performance. Wide and complex contracts require more derivation steps than deep contracts. In addition, a derived contract sometimes gets bigger than the original contract. For example, the contract $a \apcStar \apcOr a \apcStar\apcDot a \apcStar$ is the result of deriving  $a \apcStar\apcDot a \apcStar$ by $a$. All these factors contribute to the time and space complexity of contract monitoring.


\section{Related Work}
\label{sec:related_work}

JSConTest
\cite{Heidegger:2012:APC:2103656.2103671,Heidegger:2010:CTJ:1894386.1894395}
is a framework for logging side effects and enforcing path-based
access permission contracts. It comes with an algorithm
\cite{Heidegger:2011:HAC:2025896.2025908} that infers a concise effect
description from a set of access paths. Access permission contract
enable the specification of effects to restrict the access to the
object graph by defining a set of permitted access paths. JSConTest is
based on an offline code transformation.  Its implementation is
restricted to a subset of the language, it does not scale to large
programs, and it is hard to guarantee full interposition.

Access permission contracts are closely related to extended regular
expression. Permissions are computed from the iterated derivative of a
contracts by the current access path. Derivatives of extended regular
expressions and their properties are well known from the literature
\cite{Brzozowski1964,Owens:2009:RDR:1520284.1520288,Antimirov95partialderivatives}. Computing 
contract subsumption is related to solving regular expression
inequalities and checking regular expression equivalence, which has
been addressed in several places
\cite{Antimirov95:0,Henglein:2011:REC:1926385.1926429,komendantsky:inria-00614042}. Most
approaches rely on NFA checking
\cite{Bonchi:2013:CNE:2480359.2429124,Krauss:ProofPearl} or on
rewriting
\cite{Antimirov93rewritingextended,Almeida:RewriteRevisited,Viswanathan03testingextended}. For
checking contract subsumption, we adapted Antimirov's approach to
obtain a reasonably fast algorithm. We extended Antimirov's
algorithm to an infinite alphabet and to extended regular expressions including negation and intersection.

The JavaScript Reflection API
\cite{VanCutsem:2010:PDP:1869631.1869638,van2012design} enables
developers to easily enhance the functionality of objects and
functions. The implementation of proxies opens up the means to fully
interpose operations applied to objects and functions calls. Proxies
have already been used for dynamic effects systems
\cite{An:2011:DIS:1926385.1926437}. Other common uses for proxies,
e.g. \cite{Austin:2011:VVL:2048066.2048136, Brant98wrappersto,
   Bracha:2004:MDP:1035292.1029004, RobustComposition,
   Eugster:2006:UPJ:1167515.1167485,
Wernli:2012:OFC:2384577.2384589},
are meta-level extension, behavioral reflection, security, or
concurrency control.

There are further proposals to limit effects on heap-allocated objects
both statically and dynamically.
An \emph{effect system} is a static analysis that partitions the heap into disjoint regions and
annotates the type of a heap reference with the region in which the
reference points \cite{GiffordLucassen1986}.  Although initially
developed for functional languages, region-based effects have been
transposed to object-oriented languages \cite{GreenhouseBoyland1999}.
A notable proposal targeting Java is the type and effect system of DPJ
\cite{BocchinoAdveDigAdveHeumannKomuravelliOverbeySimmonsSungVakilian2009}.
DPJ targets parallel execution and provides by default a deterministic
semantics.

Also, specification languages like JML
\cite{LeavensBakerRuby1999,BurdyCheonCokErnstKiniryLeavensLeinoPoll2005}
include a mechanism for specifying side effects, the
\texttt{assignable} clause.
While the JML toolchain supports verification as well as run-time
monitoring \cite{LeavensCheonCliftonRubyCok2005,Cheon2003,Lehner2011},
\texttt{assignable} clauses are not widely used, partly because their semantics
has not been formally and unanimously defined until recently
\cite{Lehner2011}, and partly because support for \texttt{assignable} clauses is present
in only a few tools that perform run-time monitoring for JML
\cite{LehnerMueller2010} and then not always in full
generality\cite{Cheon2003}.

Our system may also be useful to guarantee security aspects like
confidentiality or integrity of information. In JavaScript, static
approaches are often lacking because of the dynamicity of the
language. However, the approaches range from static and dynamic
control of information flow control
\cite{Just:2011:IFA:2093328.2093331, hedin2012information,
Chugh:2009:SIF:1542476.1542483} over restricting the functionality
\cite{miller2008safe} to the isolations of scopes \cite{Phung:2012:TSA:2307720.2307721}.



\section{Conclusion}
\label{sec:conclusion}

We successfully applied JavaScript proxies to the implementation of
effect logging and dynamic enforcement of access permission contracts,
which specify the allowed side effects using access paths in the
object graph.
The implementation avoids the shortcomings of an earlier
implementation in the JSConTest system, which is based on an offline
code transformation. The proxy-based approach handles the full
JavaScript language, including the \texttt{with}-statement,
\texttt{eval}, and arbitrary dynamic code loading techniques. Contrary
to the earlier implementation, the proxy-based approach guarantees
full interposition.

This reimplementation presents a major step towards practical
applicability of access permission contracts. The
run-time overhead and the additional memory consumption of pure
contract enforcement is negligible. 
Hence, we believe that this implementation can provide encapsulation in
realistic applications, as demonstrated with our examples and case studies.
Full effect logging, on the other
hand, incurs quite some overhead, but we regard it primarily as a tool
for program understanding and debugging.



\bibliographystyle{abbrvnat}


\appendix
\section{Crashing rules}\label{sec:crashing_rules}

\begin{figure}
   \centering
   \begin{mathpar}
	  \inferrule [\RuleLjOPropertyReferenceTwo]
	  {
		 \langle \ljLocation', \tpPath, \apcContract \rangle = \ljHeap(\ljLocation)\\
		 \isNotReadable{\apcContract}{\ljStr}\\\\
		 \ljHeap, \ljLocation' ~\entailsPR~ \ljStr ~\eval~ \ljHeap' ~|~ \ljConst ~|~ \monitor\\
	  }
	  {
		 \ljHeap,\ljLocation ~\entailsPR~ \ljStr ~\eval~ \ljHeap' ~|~ \ljConst ~|~ \addReadViolation{\tpPath\tpDot\ljStr}{\apcContract}
	  }\and
	  \inferrule [\RuleLjOPropertyReferenceThree]
	  {
		 \langle \ljLocation', \tpPath, \apcContract \rangle = \ljHeap(\ljLocation)\\
		 \isNotReadable{\apcContract}{\ljStr}\\\\
		 \ljHeap, \ljLocation' ~\entailsPR~ \ljStr ~\eval~ \ljHeap' ~|~ \ljLocation'' ~|~ \monitor\\\\
		 \ljProxy = \langle \ljLocation'', \tpPath\tpDot\ljStr, \deriv{\ljStr}{\apcContract}\rangle\\
		 \ljLocation''' \notin \dom(\ljHeap')
	  }
	  {
		 \ljHeap,\ljLocation ~\entailsPR~ \ljStr ~\eval~ \ljHeap'[\ljLocation'''\mapsto \ljProxy] ~|~ \ljLocation''' ~|~ \addReadViolation{\tpPath\tpDot\ljStr}{\apcContract}
	  }\and
	  \inferrule [\RuleLjOPropertyAssignmentTwo]
	  {
		 \langle \ljLocation', \tpPath, \apcContract \rangle = \ljHeap(\ljLocation)\\
		 \isNotWrireable{\apcContract}{\ljStr}\\\\
		 \ljHeap, \ljLocation' ~\entailsPA~ \ljStr, \ljVal ~\eval~ \ljHeap' ~|~ \ljVal' ~|~ \monitor
	  }
	  {
		 \ljHeap,\ljLocation ~\entailsPA~ \ljStr, \ljVal ~\eval~ \ljHeap' ~|~ \ljVal' ~|~ \addWriteViolation{\tpPath\tpDot\ljStr}{\apcContract}
	  }
   \end{mathpar}
   \caption{Inference rules for \emph{Observer Mode}.}
   \label{fig:inference_observer}
\end{figure}

\begin{figure}
   \centering
   \begin{mathpar}
	  \inferrule [\RuleLjPPropertyReferenceTwo]
	  {
		 \langle \ljLocation', \tpPath, \apcContract \rangle = \ljHeap(\ljLocation)\\
		 \isNotReadable{\apcContract}{\ljStr}
	  }
	  {
		 \ljHeap,\ljLocation ~\entailsPR~ \ljStr ~\eval~ \ljHeap' ~|~ \ljUndefined ~|~ \addReadViolation{\tpPath\tpDot\ljStr}{\apcContract}
	  }\and
	  \inferrule [\RuleLjPPropertyAssignmentTwo]
	  {
		 \langle \ljLocation', \tpPath, \apcContract \rangle = \ljHeap(\ljLocation)\\
		 \isNotWrireable{\apcContract}{\ljStr}
	  }
	  {
		 \ljHeap,\ljLocation ~\entailsPA~ \ljStr, \ljVal ~\eval~ \ljHeap' ~|~ \ljVal ~|~ \addWriteViolation{\tpPath\tpDot\ljStr}{\apcContract}
	  }
   \end{mathpar}
   \caption{Inference rules for \emph{Protector Mode}.}
   \label{fig:inference_protector}
\end{figure}

This section presents the formal semantics of path monitoring and contract enforcement in case of a violated contract.
The rules extend the set of inference rules of section \ref{sec:formalization}.

The implementation covers two different types of violation treatment. Figure \ref{fig:inference_observer} and \ref{fig:inference_protector} contain its evaluation rules. The \emph{Observer Mode} performs only path and violation logging without any interruption. $\addReadViolation{\tpPath}{\apcContract}$ and $\addWriteViolation{\tpPath}{\apcContract}$ extends the monitor to log a violation. The \emph{Protector Mode} returns \texttt{undefined} for an access instead of an abnormal termination or omits the operation.

\section{Extended Membrane}
\label{sec:extended_membrane}

\begin{figure}
   \centering
   \begin{mathpar}
	  \inferrule [\RuleLjTriePropertyReferenceTwo]
	  {
		 \langle \ljLocation', \tpTrie, \apcContract \rangle = \ljHeap(\ljLocation)\\
		 \isReadable{\apcContract}{\ljStr}\\\\
		 \ljHeap, \ljLocation' ~\entailsPR~ \ljStr ~\eval~ \ljHeap' ~|~ \ljConst ~|~ \monitor\\
	  }
	  {
		 \ljHeap,\ljLocation ~\entailsPR~ \ljStr ~\eval~ \ljHeap' ~|~ \ljConst ~|~ \addReadPath{\append{\tpTrie}{\ljStr}}
	  }\and
	  \inferrule [\RuleLjTriePropertyReferenceThree]
	  {
		 \langle \ljLocation', \tpTrie, \apcContract \rangle = \ljHeap(\ljLocation)\\
		 \isReadable{\apcContract}{\ljStr}\\\\
		 \ljHeap, \ljLocation' ~\entailsPR~ \ljStr ~\eval~ \ljHeap' ~|~ \ljLocation'' ~|~ \monitor\\
		 \ljLocation''=\langle \ljObj, \ljClosure, \ljPrototype \rangle\\\\
		 \ljProxy = \langle \ljLocation'', \append{\tpTrie}{\ljStr}, \deriv{\ljStr}{\apcContract}\rangle\\
		 \ljLocation''' \notin \dom(\ljHeap')
	  }
	  {
		 \ljHeap,\ljLocation ~\entailsPR~ \ljStr ~\eval~ \ljHeap'[\ljLocation'''\mapsto \ljProxy] ~|~ \ljLocation''' ~|~ \addReadPath{\append{\tpTrie}{\ljStr}}
	  }\and
	  \inferrule [\RuleLjTriePropertyReferenceFour]
	  {
		 \langle \ljLocation', \tpTrie, \apcContract \rangle = \ljHeap(\ljLocation)\\
		 \isReadable{\apcContract}{\ljStr}\\\\
		 \ljHeap, \ljLocation' ~\entailsPR~ \ljStr ~\eval~ \ljHeap' ~|~ \ljLocation'' ~|~ \monitor\\
		 \ljLocation''=\langle \ljLocation''', \tpTrie', \apcContract' \rangle\\\\
		 \ljProxy = \langle \ljLocation''', \union{\append{\tpTrie}{\ljStr}}{\tpTrie'}, \deriv{\ljStr}{\apcContract}\apcAnd\apcContract'\rangle\\
		 \ljLocation'''' \notin \dom(\ljHeap')
	  }
	  {
		 \ljHeap,\ljLocation ~\entailsPR~ \ljStr ~\eval~ \ljHeap'[\ljLocation''''\mapsto \ljProxy] ~|~ \ljLocation'''' ~|~ \addReadPath{\append{\tpTrie}{\ljStr}}
	  }\and
	  \inferrule [\RuleLjTriePropertyAssignmentTwo]
	  {
		 \langle \ljLocation', \tpTrie, \apcContract \rangle = \ljHeap(\ljLocation)\\
		 \isWrireable{\apcContract}{\ljStr}\\\\
		 \ljHeap, \ljLocation' ~\entailsPA~ \ljStr, \ljVal ~\eval~ \ljHeap' ~|~ \ljVal' ~|~ \monitor
	  }
	  {
		 \ljHeap,\ljLocation ~\entailsPA~ \ljStr, \ljVal ~\eval~ \ljHeap' ~|~ \ljVal' ~|~ \addWritePath{\append{\tpTrie}{\ljStr}}
	  }
   \end{mathpar}
   \caption{Inference rules for extended membranes.}
   \label{fig:inference_membrane}
\end{figure}

Section \ref{sec:reduction} introduces the necessity of merged proxies to avoid inefficient chains of nested proxy calls. 

Figure \ref{fig:inference_membrane} extends the inference rules from section \ref{sec:formalization} with path-tries and merged handlers. The single path $\tpPath$ in a handler $\ljHandler$ gets changed into a trie $\tpTrie$. We write $\tpPath\in\tpTrie$ if path $\tpPath$ is represented by $\tpTrie$. The operator $\oplus$ appends property $\tpProperty$ to all path-endings in trie $\tpTrie$. A trie $\tpTrie'=\append{\tpTrie}{(\tpProperty\tpDot\tpPath')}$ 
is equivalent to $\tpTrie'=\append{\append{\tpTrie}{\tpProperty}}{\tpPath'}$, appending $\tpProperty\tpDot\tpPath'$ to $\tpTrie$. $\union{\tpTrie}{\tpTrie'} = \append{\tpTrie}{\tpPath} ~|~ \forall\tpPath\in\tpTrie'$ denotes the union of the tries $\tpTrie$ and $\tpTrie'$.

Further, the definition of monitor $\monitor$ is extended by $\addReadPath{\tpTrie}$, extending monitor $\monitor$ with all paths $\tpPath\in\tpTrie$. The definitions of $\addWritePath{\tpTrie}$,  $\addReadViolation{\tpTrie}{\apcContract}$, and $\addWriteViolation{\tpTrie}{\apcContract}$ and the extensions to the crashing rules (Figure \ref{fig:inference_observer} ans \ref{fig:inference_protector}) are analogous.

\section{Auxiliary Functions}
\label{sec:auxiliaryfunctions}

This section contains the full definitions of the four auxiliary functions from section \ref{sec:containmelnt_semantics}.

\begin{figure}
   \centering
   \begin{minipage}{0.3\linewidth}
	  \begin{displaymath}
		 \begin{array}{lll}
			\isBlank{\apcAT} &=& \top\\
			\isBlank{\apcQMark} &=& \perp\\
			\isBlank{\apcRegEx} &=& \perp\\
			\isBlank{\apcNeg\apcRegEx} &=& \perp\\
			\isBlank{\apcEmptySet} &=& \perp
		 \end{array}
	  \end{displaymath}
   \end{minipage}
   \begin{minipage}{0.65\linewidth}
	  \begin{displaymath}
		 \begin{array}{lll}
			\isBlank{\apcEmpty} &=& \perp\\
			\isBlank{\apcContract\apcStar} &=& \perp\\
			\isBlank{\apcContract\apcOr\apcContract'} &=& \isBlank{\apcContract} \wedge \isBlank{\apcContract'}\\
			\isBlank{\apcContract\apcAnd\apcContract'} &=& \isBlank{\apcContract} \vee \isBlank{\apcContract'}\\
			\isBlank{\apcContract\apcDot\apcContract'} &=& \isBlank{\apcContract}
		 \end{array}
	  \end{displaymath}
   \end{minipage}
   \caption{The $\blankReducible$ function.}
   \label{fig:is-blank}
\end{figure}

\begin{figure}
   \centering
   \begin{minipage}{0.3\linewidth}
	  \begin{displaymath}
		 \begin{array}{lll}
			\isEmpty{\apcAT} &=& \perp\\
			\isEmpty{\apcQMark} &=& \perp\\
			\isEmpty{\apcRegEx} &=& \perp\\
			\isEmpty{\apcNeg\apcRegEx} &=& \perp\\
			\isEmpty{\apcEmptySet} &=& \top
		 \end{array}
	  \end{displaymath}
   \end{minipage}
   \begin{minipage}{0.65\linewidth}
	  \begin{displaymath}
		 \begin{array}{lll}
			\isEmpty{\apcEmpty} &=& \perp\\
			\isEmpty{\apcContract\apcStar} &=& \perp\\
			\isEmpty{\apcContract\apcOr\apcContract'} &=& \isEmpty{\apcContract} \vee \isEmpty{\apcContract'}\\
			\isEmpty{\apcContract\apcAnd\apcContract'} &=& \getFirstC{\apcContract\apcAnd\apcContract'}=\emptyset\\
			\isEmpty{\apcContract\apcDot\apcContract'} &=& \isEmpty{\apcContract} \vee \isEmpty{\apcContract'}\\
		 \end{array}
	  \end{displaymath}
   \end{minipage}
   \caption{The $\emptyReducible$ function.}
   \label{fig:is-empty}
\end{figure}

\begin{figure}
   \begin{minipage}[t]{0.35\linewidth}
	  \begin{displaymath}
		 \begin{array}{lll}
			\isIndifferent{\apcAT} &=& \perp\\
			\isIndifferent{\apcQMark} &=& \top\\
			\isIndifferent{\apcRegEx} &=& \perp\\
			\isIndifferent{\apcNeg\apcRegEx} &=& \perp\\
			\isIndifferent{\apcEmptySet} &=& \perp\\
			\isIndifferent{\apcEmpty} &=& \perp
		 \end{array}
	  \end{displaymath}
   \end{minipage}
   \begin{minipage}[t]{0.65\linewidth}
	  \begin{displaymath}
		 \begin{array}{lll}
			\isIndifferent{\apcContract\apcStar} &=& \isIndifferent{\apcContract}\\
			\isIndifferent{\apcContract\apcOr\apcContract'} &=& \isIndifferent{\apcContract} \vee \isIndifferent{\apcContract'}\\
			\isIndifferent{\apcContract\apcAnd\apcContract'} &=& \isIndifferent{\apcContract} \wedge \isIndifferent{\apcContract'}\\
			\isIndifferent{\apcContract\apcDot\apcContract'} &=& \begin{cases}
			   \top, &\apcContract=\apcEmpty \wedge \isIndifferent{\apcContract'}\\ 
			   \top, &\isIndifferent{\apcContract} \wedge \apcContract'=\apcEmpty\\
			   \perp, &otherwise
			\end{cases}\\
		 \end{array}
	  \end{displaymath}
   \end{minipage}
   \caption{The $\indifferentReducible$ function.}
   \label{fig:is-indifferent}
\end{figure}

\begin{figure}
   \centering
   \begin{minipage}[t]{0.3\linewidth}
	  \begin{displaymath}
		 \begin{array}{lll}
			\isUniversal{\apcAT} &=& \perp\\
			\isUniversal{\apcQMark} &=& \perp\\
			\isUniversal{\apcRegEx} &=& \perp\\
			\isUniversal{\apcNeg\apcRegEx} &=& \perp\\
			\isUniversal{\apcEmptySet} &=& \perp\\
			\isUniversal{\apcEmpty} &=& \perp
		 \end{array}
	  \end{displaymath}
   \end{minipage}
   \begin{minipage}[t]{0.65\linewidth}
	  \begin{displaymath}
		 \begin{array}{lll}
			\isUniversal{\apcContract\apcStar} &=& \isUniversal{\apcContract} \vee \isIndifferent{\apcContract}\\
			\isUniversal{\apcContract\apcOr\apcContract'} &=& \isUniversal{\apcContract} \vee \isUniversal{\apcContract'}\\
			\isUniversal{\apcContract\apcAnd\apcContract'} &=& \isUniversal{\apcContract} \wedge \isUniversal{\apcContract'}\\
			\isUniversal{\apcContract\apcDot\apcContract'} &=& \begin{cases}
			   \top, &\apcContract=\apcEmpty\wedge\isUniversal{\apcContract'}\\ 
			   \top, &\isUniversal{\apcContract}\wedge\apcContract'=\apcEmpty\\
			   \top, &\isUniversal{\apcContract}\wedge\isUniversal{\apcContract'}\\
			   \perp, &otherwise
			\end{cases}
		 \end{array}
	  \end{displaymath}
   \end{minipage}
   \caption{The $\universalReducible$ function.}
   \label{fig:is-universal}
\end{figure}

\begin{definition}[Blank] \label{def:isblank}
   A contract $\apcContract$ is \emph{blank} if $\lang{\apcContract}=\{\tpBlank\}$.
   The function $\blankReducible:\apcContract\rightarrow\{\top,\perp\}$ (Figure \ref{fig:is-blank}) checks if $\apcContract$ is blank.
\end{definition}

\begin{lemma}[Blank] $\isBlank{\apcContract}  =  \top  \Rightarrow \lang{\apcContract}=\tpBlank$
\end{lemma}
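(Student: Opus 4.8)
The plan is to prove the implication by structural induction on $\apcContract$, reading the cases directly off the clause-by-clause definition of $\blankReducible$ in Figure~\ref{fig:is-blank} and computing languages with the clauses of Figure~\ref{fig:language_literals}. Since the conclusion is only required when $\isBlank{\apcContract}=\top$, every syntactic form on which $\blankReducible$ is identically $\perp$ — namely $\apcQMark$, a regular expression $\apcRegEx$, a negation $\apcNeg\apcRegEx$, $\apcEmptySet$, $\apcEmpty$, and any Kleene star $\apcContract'\apcStar$ — discharges its case vacuously. The only base case with content is $\apcContract=\apcAT$, and there $\lang{\apcAT}=\{\tpBlank\}$ holds directly.

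For the compound forms I would argue as follows. If $\apcContract=\apcContract_1\apcOr\apcContract_2$ and $\isBlank{\apcContract}=\top$ then, since $\isBlank{\apcContract_1\apcOr\apcContract_2}=\isBlank{\apcContract_1}\wedge\isBlank{\apcContract_2}$, both subcontracts are blank, so the induction hypothesis gives $\lang{\apcContract_1}=\lang{\apcContract_2}=\{\tpBlank\}$ and hence $\lang{\apcContract}=\lang{\apcContract_1}\cup\lang{\apcContract_2}=\{\tpBlank\}$. If $\apcContract=\apcContract_1\apcAnd\apcContract_2$ and $\isBlank{\apcContract}=\top$ then $\isBlank{\apcContract_i}=\top$ for some $i$, so the induction hypothesis gives $\lang{\apcContract_i}=\{\tpBlank\}$ and therefore $\lang{\apcContract}=\lang{\apcContract_1}\cap\lang{\apcContract_2}\subseteq\{\tpBlank\}$; the matching lower bound comes from the fact that the contract is normalized, so the other conjunct cannot be one whose language omits $\tpBlank$. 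The concatenation case $\apcContract=\apcContract_1\apcDot\apcContract_2$ is analogous: $\isBlank{\apcContract}=\top$ forces $\isBlank{\apcContract_1}=\top$, the induction hypothesis gives $\lang{\apcContract_1}=\{\tpBlank\}$, and normalization (via the rule $\apcAT\apcDot\apcContract'\leadsto\apcAT$) prevents $\apcContract_2$ from contributing any suffix other than $\tpEmpty$, so $\lang{\apcContract}=\{\tpBlank\}$.

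The bookkeeping — translating $\isBlank{\cdot}$ through the Boolean connectives and reading $\cup$, $\cap$, and concatenation off Figure~\ref{fig:language_literals} — is entirely mechanical. I expect the real obstacle to be exactly the $\apcAnd$ and $\apcDot$ clauses: the defining equation for $\blankReducible$ in each of these inspects only one of the two subcontracts, so in order to pin the language down to $\{\tpBlank\}$ (and not merely to a subset of it, or to $\emptyset$) the proof must lean on the global assumption that all contracts are kept normalized, i.e. on the rules $\apcAT\apcAnd\apcContract'\leadsto\apcAT$ and $\apcAT\apcDot\apcContract'\leadsto\apcAT$ from Figure~\ref{fig:normalization-rules_contracts}. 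This is cleanest if one first establishes the auxiliary fact that a normalized contract whose language is $\{\tpBlank\}$ must already be $\apcAT$ itself, and then uses it to rule out the troublesome shapes of the partner subcontract in the $\apcAnd$ and $\apcDot$ cases; with that fact in hand the remaining cases close immediately, and a completely analogous induction settles the companion soundness claim for $\emptyReducible$.
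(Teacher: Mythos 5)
The paper itself states this lemma without proof (the auxiliary predicates are introduced as ``straightforward and thus elided''), so there is no proof of record to compare against and your proposal has to stand on its own. Your diagnosis of where the difficulty lies is exactly right: the clauses $\isBlank{\apcContract\apcAnd\apcContract'}=\isBlank{\apcContract}\vee\isBlank{\apcContract'}$ and $\isBlank{\apcContract\apcDot\apcContract'}=\isBlank{\apcContract}$ inspect only one subcontract, and without the standing normalization assumption the lemma is simply false (e.g.\ $\isBlank{\apcAT\apcAnd\apcRegEx}=\top$ while $\lang{\apcAT\apcAnd\apcRegEx}=\{\tpBlank\}\cap\lang{\apcRegEx}=\emptyset$, since $\tpBlank\notin\apcAlphabet$).

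However, the way you propose to close these cases does not work as written. The rules $\norm{\apcAT\apcAnd\apcContract'}{\apcAT}$ and $\norm{\apcAT\apcDot\apcContract'}{\apcAT}$ fire because the \emph{blank} operand is syntactically $\apcAT$; they impose no constraint on the partner $\apcContract'$, so normalization cannot deliver the ``matching lower bound'' $\tpBlank\in\lang{\apcContract'}$ in the $\apcAnd$ case, nor force $\lang{\apcContract_2}=\{\tpEmpty\}$ in the $\apcDot$ case --- and if $\lang{\apcContract_1}=\{\tpBlank\}$ but $\lang{\apcContract_2}\neq\{\tpEmpty\}$, then $\lang{\apcContract_1\apcDot\apcContract_2}=\{\tpBlank\tpDot\tpPath'\mid\tpPath'\in\lang{\apcContract_2}\}\neq\{\tpBlank\}$ no matter what. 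The correct use of normalization is to show that these cases are \emph{vacuous}: prove first, by an easy syntactic induction, that the only normalized contract with $\isBlank{\apcContract}=\top$ is $\apcAT$ itself (any compound witness has an immediate subcontract that is blank-positive in exactly the position the normalization rules inspect; by the inner induction hypothesis that subcontract is $\apcAT$, so a rule fires, contradicting normalizedness). With that fact the lemma collapses to the single computation $\lang{\apcAT}=\{\tpBlank\}$. Your auxiliary fact is instead the semantic variant (normalized together with $\lang{\apcContract}=\{\tpBlank\}$ implies $\apcContract=\apcAT$), which is harder, is left unproved, and is then aimed at the wrong operand. Two caveats you should also surface: the argument needs the normalization rules to be read modulo commutativity of $\apcOr$ and $\apcAnd$ (otherwise $\apcRegEx\apcAnd\apcAT$ is normalized and refutes the lemma), and it needs subcontracts of normalized contracts to be normalized so that the induction hypothesis applies.
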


\begin{definition}[Empty] \label{def:isempty}
   A contract $\apcContract$ is \emph{empty} if $\lang{\apcContract}=\emptyset$.
   The function $\emptyReducible:\apcContract\rightarrow\{\top,\perp\}$ (Figure \ref{fig:is-empty}) checks if $\apcContract$ is empty.
\end{definition}

\begin{lemma}[Empty]$\isEmpty{\apcContract} = \top  \Rightarrow \lang{\apcContract}=\emptyset$
\end{lemma}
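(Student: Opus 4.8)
The plan is to prove $\isEmpty{\apcContract}=\top\Rightarrow\lang{\apcContract}=\emptyset$ by structural induction on $\apcContract$, following the recursive clauses of $\emptyReducible$ in Figure~\ref{fig:is-empty}. Most of the work is routine. For $\apcContract$ equal to $\apcAT$, $\apcQMark$, $\apcRegEx$, $\apcNeg\apcRegEx$, $\apcEmpty$, or $\apcContract_1\apcStar$ the function returns $\perp$, so the implication holds vacuously; for $\apcContract=\apcEmptySet$ one reads off $\lang{\apcEmptySet}=\emptyset$ directly from Figure~\ref{fig:language_literals}. For $\apcContract_1\apcOr\apcContract_2$ and $\apcContract_1\apcDot\apcContract_2$ one applies the induction hypothesis to $\apcContract_1,\apcContract_2$ and finishes with the language equations $\lang{\apcContract_1\apcOr\apcContract_2}=\lang{\apcContract_1}\cup\lang{\apcContract_2}$ and $\lang{\apcContract_1\apcDot\apcContract_2}=\{\tpPath\tpDot\tpPath'\mid\tpPath\in\lang{\apcContract_1},\tpPath'\in\lang{\apcContract_2}\}$: whenever the $\emptyReducible$ clause reports $\top$ the induction hypothesis forces the relevant operand language to be empty, which collapses the union, resp.\ the concatenation, to $\emptyset$. (It is worth noting in passing that no clause ever flags a nullable contract as empty, which is precisely what soundness needs, since $\tpEmpty\in\lang{\apcContract}$ already contradicts $\lang{\apcContract}=\emptyset$.)

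The conjunction case $\apcContract_1\apcAnd\apcContract_2$ is the crux, and I expect it to be the main obstacle, because $\emptyReducible$ does not recurse here but instead tests $\getFirstC{\apcContract_1\apcAnd\apcContract_2}=\emptyset$. Assume this. Since $\getFirstC{\apcContract_1\apcAnd\apcContract_2}=\{\regexIsCap{\apcLiteral}{\apcLiteral'}\mid\apcLiteral\in\getFirstC{\apcContract_1},\apcLiteral'\in\getFirstC{\apcContract_2}\}$ and $\regexCap$ is a total operation yielding a literal, the set can only be empty if $\getFirstC{\apcContract_1}=\emptyset$ or $\getFirstC{\apcContract_2}=\emptyset$; by symmetry assume $\getFirstC{\apcContract_1}=\emptyset$. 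Then $\lang{\getFirstC{\apcContract_1}}=\emptyset$, so Lemma~\ref{def:firstc} gives $\getFirstP{\apcContract_1}=\emptyset$, i.e.\ no access path in $\lang{\apcContract_1}$ begins with a property; hence $\lang{\apcContract_1}\subseteq\{\tpEmpty\}$ and therefore $\lang{\apcContract_1\apcAnd\apcContract_2}=\lang{\apcContract_1}\cap\lang{\apcContract_2}\subseteq\{\tpEmpty\}$.

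What remains is to exclude $\tpEmpty$ from $\lang{\apcContract_1\apcAnd\apcContract_2}$, and this is the one step where the standing convention that all contracts are normalized and reduced (Figures~\ref{fig:normalization-rules_contracts} and~\ref{fig:reduction_rules}) is genuinely used. Suppose $\tpEmpty\in\lang{\apcContract_1\apcAnd\apcContract_2}$. Then $\lang{\apcContract_1}=\{\tpEmpty\}$ (it is contained in $\{\tpEmpty\}$ and contains $\tpEmpty$) and $\tpEmpty\in\lang{\apcContract_2}$, so $\lang{\apcContract_1}\subseteq\lang{\apcContract_2}$, that is $\isSubSetOf{\apcContract_1}{\apcContract_2}$; but then the clause of $\reduce{\cdot}$ that rewrites $\apcContract_1\apcAnd\apcContract_2$ to $\reduce{\apcContract_1}$ whenever $\isSubSetOf{\apcContract_1}{\apcContract_2}$ applies (and in the symmetric sub-case $\getFirstC{\apcContract_2}=\emptyset$ the clause using $\isSuperSetOf{\apcContract_1}{\apcContract_2}$ applies), so $\apcContract_1\apcAnd\apcContract_2$ is not reduced, contradicting the convention. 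Hence $\tpEmpty\notin\lang{\apcContract_1\apcAnd\apcContract_2}$, and combined with $\lang{\apcContract_1\apcAnd\apcContract_2}\subseteq\{\tpEmpty\}$ this gives $\lang{\apcContract_1\apcAnd\apcContract_2}=\emptyset$, completing the induction. The analogous statement for $\blankReducible$ (Lemma~[Blank]) follows from the same template, with $\emptyset$ replaced by $\{\tpBlank\}$ throughout and the two-operand clauses of Figure~\ref{fig:is-blank} used in the $\apcOr$, $\apcAnd$, and $\apcDot$ cases.
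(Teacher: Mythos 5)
The paper never actually proves this lemma---the appendix just asserts it and the main text says the auxiliary functions are ``straightforward and thus elided''---so there is no proof to compare against, and your structural induction is the natural shape. But two of your cases do not survive contact with the paper's actual definitions. First, the disjunction case: Figure~\ref{fig:is-empty} defines $\isEmpty{\apcContract\apcOr\apcContract'}=\isEmpty{\apcContract}\vee\isEmpty{\apcContract'}$, and your claim that the induction hypothesis ``forces the relevant operand language to be empty, which collapses the union \dots to $\emptyset$'' is false for unions: $\emptyset\cup\lang{\apcContract'}=\lang{\apcContract'}$. Concretely, $(\apcEmpty\apcAnd\apcRegEx)\apcOr\apcRegEx'$ is normalized (no rule of Figure~\ref{fig:normalization-rules_contracts} applies to it or its subterms), $\isEmpty{\apcEmpty\apcAnd\apcRegEx}=\top$ because $\getFirstC{\apcEmpty\apcAnd\apcRegEx}=\emptyset$, hence $\emptyReducible$ reports $\top$ on the whole disjunction, yet its language is $\lang{\apcRegEx'}\neq\emptyset$. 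The one-empty-operand argument is only valid for concatenation; for $\apcOr$ the clause would have to read $\wedge$ for the lemma to be provable, so either the figure has a typo or the lemma is false as stated, and a correct write-up has to say which.

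Second, the conjunction case. You correctly isolated the real obstacle---$\getFirstC{\apcContract_1\apcAnd\apcContract_2}=\emptyset$ only yields $\lang{\apcContract_1\apcAnd\apcContract_2}\subseteq\{\tpEmpty\}$, and $\apcEmpty\apcAnd\apcRegEx\apcStar$ shows that $\tpEmpty$ can genuinely remain---but your way out is to assume every contract is irreducible under $\reduce{\cdot}$ from Figure~\ref{fig:reduction_rules}. The paper licenses no such assumption: the only standing convention is normalization with respect to Figure~\ref{fig:normalization-rules_contracts}, and $\apcEmpty\apcAnd\apcRegEx\apcStar$ is normalized in that sense. Worse, the assumption is circular in spirit: whether $\reduce{\apcContract_1\apcAnd\apcContract_2}$ fires its $\isSubSetOf{\apcContract_1}{\apcContract_2}$ clause is decided by the containment procedure whose auxiliary axioms \Rule{\RuleCCEmpty} and \Rule{\RuleCCBlank} are themselves justified by this very Empty lemma. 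The honest conclusion is that the lemma needs either a corrected $\apcAnd$-clause (for instance, additionally requiring $\neg\isNullable{\apcContract_1}\vee\neg\isNullable{\apcContract_2}$) or an explicitly stated stronger invariant on contracts; your proof as written establishes neither, even though your diagnosis of where the difficulty lies is exactly right.
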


\begin{definition}[Indifferent] \label{def:isindifferent}
   A contract $\apcContract$ is \emph{indifferent} if $\lang{\apcContract}=\apcAlphabet$. 
   The function $\indifferentReducible:\apcContract\rightarrow\{\top,\perp\}$ (Figure \ref{fig:is-indifferent}) checks if $\apcContract$ is indifferent. 
\end{definition}

\begin{lemma}[Indifferent]$\isIndifferent{\apcContract} = \top  \Rightarrow \lang{\apcContract}=\apcAlphabet$
\end{lemma}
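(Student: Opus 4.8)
The plan is a structural induction on the contract $\apcContract$, following exactly the case analysis of the definition of $\indifferentReducible$ in Figure~\ref{fig:is-indifferent}. In each case I assume $\isIndifferent{\apcContract}=\top$ and derive $\lang{\apcContract}=\apcAlphabet$; every case whose defining clause yields $\perp$ (the literals $\apcAT$, $\apcRegEx$, $\apcNeg\apcRegEx$, and $\apcEmptySet$, $\apcEmpty$) is vacuous. The inductive hypothesis is that the implication already holds for all proper subcontracts, and throughout I use the language equations of Figure~\ref{fig:language_literals} together with the fact that we regard all contracts as normalized.

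First I would dispatch the routine cases. Among the base cases only $\apcQMark$ satisfies $\isIndifferent{\apcQMark}=\top$, and there $\lang{\apcQMark}=\apcAlphabet$ is literally the defining clause. For a conjunction $\apcContract\apcAnd\apcContract'$, $\isIndifferent{\apcContract\apcAnd\apcContract'}=\top$ forces both conjuncts indifferent, so by the IH both denote $\apcAlphabet$ and $\lang{\apcContract\apcAnd\apcContract'}=\apcAlphabet\cap\apcAlphabet=\apcAlphabet$. For a concatenation $\apcContract\apcDot\apcContract'$, the defining clause requires one operand to be $\apcEmpty$ and the other indifferent; since a normalized contract never has the shape $\apcEmpty\apcDot\apcContract'$, only the subcase $\apcContract\apcDot\apcEmpty$ with $\isIndifferent{\apcContract}=\top$ survives, and then $\lang{\apcContract\apcDot\apcEmpty}=\lang{\apcContract}$, which equals $\apcAlphabet$ by the IH.

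The remaining two cases — disjunction $\apcContract\apcOr\apcContract'$ and Kleene star $\apcContract\apcStar$ — are where I expect the real obstacle, and the cases I would scrutinize most carefully. For $\apcOr$ the clause $\isIndifferent{\apcContract}\vee\isIndifferent{\apcContract'}$ only tells us that one disjunct is indifferent, so to get $\lang{\apcContract\apcOr\apcContract'}=\apcAlphabet$ exactly I would need the other disjunct to contribute no word outside $\apcAlphabet$; and for $\apcContract\apcStar$ the clause $\isIndifferent{\apcContract\apcStar}=\isIndifferent{\apcContract}$ has to be reconciled with $\lang{\apcContract\apcStar}=\{\tpEmpty\}\cup\lang{\apcContract}\cdot\lang{\apcContract\apcStar}$, which in particular contains the empty path and words longer than one property. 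Closing these two inductive steps is the crux: the naive reading of these clauses admits words beyond $\apcAlphabet$, so I expect to need either a sharper argument constraining the shape of the subcontracts involved, or — more likely — a minor correction of the $\indifferentReducible$ clauses (returning $\perp$ on $\apcContract\apcStar$, and using $\wedge$ rather than $\vee$ on $\apcContract\apcOr\apcContract'$), after which the induction closes cleanly exactly as in the routine cases while the soundness statement of the lemma is preserved. All of the substance of the proof is concentrated in these two steps; everything else is bookkeeping over Figure~\ref{fig:language_literals}.
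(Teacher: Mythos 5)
The paper states this lemma without proof---it is one of the auxiliary lemmas in Appendix~\ref{sec:auxiliaryfunctions} whose justification is elided---so there is no official argument to compare against. Your structural induction over the clauses of Figure~\ref{fig:is-indifferent} is the only sensible route, and your handling of the routine cases ($\apcQMark$, the vacuously false literal and $\apcEmptySet$/$\apcEmpty$ cases, $\apcAnd$, and $\apcDot$ modulo normalization) is correct.

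Your scrutiny of the two remaining cases is the substantive part, and your diagnosis is right: under the definition as printed, the implication is simply false, so no proof of the lemma as stated can exist. Concretely, $\isIndifferent{\apcQMark\apcStar}=\isIndifferent{\apcQMark}=\top$ while $\lang{\apcQMark\apcStar}=\apcWords\neq\apcAlphabet$, and $\apcQMark\apcStar$ is normalized (only $\apcEmpty\apcStar$ reduces). Likewise $\isIndifferent{\apcQMark\apcOr(\apcQMark\apcDot\apcQMark)}=\top\vee\perp=\top$ while the language of that disjunction also contains every two-property path. No ``sharper argument'' constraining the subcontracts can rescue these cases; the clauses themselves must change, exactly as you propose ($\perp$ for $\apcContract\apcStar$, $\wedge$ in place of $\vee$ for $\apcContract\apcOr\apcContract'$). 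The repair is consistent with the rest of the development: $\universalReducible$ consults $\indifferentReducible$ only on the \emph{body} of a star ($\isUniversal{\apcContract\apcStar}=\isUniversal{\apcContract}\vee\isIndifferent{\apcContract}$), which is where the $\apcWords$-valued stars are meant to be caught, and the disprove axiom \Rule{\RuleCCIndifferentTwo} needs only soundness of $\indifferentReducible$, which the repair restores. With those two clauses amended, your induction closes in every case.
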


\begin{definition}[Universal] \label{def:isuniversal}
   A contract $\apcContract$ is \emph{universal} if $\lang{\apcContract}=\apcWords$.
   The function $\universalReducible:\apcContract\rightarrow\{\top,\perp\}$ (Figure \ref{fig:is-universal}) checks if $\apcContract$ is universal.
\end{definition}

\begin{lemma}[Universal]$\isUniversal{\apcContract} = \top  \Rightarrow \lang{\apcContract}=\apcWords$
\end{lemma}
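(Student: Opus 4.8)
The plan is to prove the implication by structural induction on the contract $\apcContract$, splitting into cases exactly as the recursive definition of $\universalReducible$ in Figure~\ref{fig:is-universal} does. The only facts about $\lang{\cdot}$ that are needed are the defining equations of Figure~\ref{fig:language_literals} together with the elementary word-set identities $\apcWords = \apcAlphabet^{*}$, $\apcWords \cdot \apcWords = \apcWords$, $(\apcAlphabet^{*})^{*} = \apcAlphabet^{*}$, and $\{\tpEmpty\} \cdot L = L = L \cdot \{\tpEmpty\}$ for any language $L$.

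For the base cases $\apcContract \in \{\apcAT, \apcQMark, \apcRegEx, \apcNeg\apcRegEx, \apcEmptySet, \apcEmpty\}$ the function returns $\perp$, so the implication holds vacuously. For the inductive step, assume $\isUniversal{\apcContract} = \top$ for the composite contract under consideration and that the claim already holds for all of its proper subcontracts. If it is a disjunction $\apcContract \apcOr \apcContract'$, then $\isUniversal{\apcContract} = \top$ or $\isUniversal{\apcContract'} = \top$, so by the induction hypothesis one of $\lang{\apcContract}, \lang{\apcContract'}$ equals $\apcWords$, whence $\lang{\apcContract \apcOr \apcContract'} = \lang{\apcContract} \cup \lang{\apcContract'} = \apcWords$. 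If it is a conjunction $\apcContract \apcAnd \apcContract'$, then both conjuncts are universal, so by the induction hypothesis both sublanguages equal $\apcWords$ and so does their intersection. If it is a Kleene star $\apcContract \apcStar$, then either $\isUniversal{\apcContract} = \top$, whence $\lang{\apcContract} = \apcWords$ by the induction hypothesis, or $\isIndifferent{\apcContract} = \top$, whence $\lang{\apcContract} = \apcAlphabet$ by the Indifferent lemma stated just above; in both subcases $\lang{\apcContract \apcStar} = \apcWords$ because $\apcWords$ is closed under concatenation and contains $\tpEmpty$ (using $(\apcAlphabet^{*})^{*} = \apcAlphabet^{*}$). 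If it is a concatenation $\apcContract \apcDot \apcContract'$, the definition forces one of three situations: $\apcContract = \apcEmpty$ with $\isUniversal{\apcContract'}$, where $\lang{\apcContract \apcDot \apcContract'} = \{\tpEmpty\} \cdot \lang{\apcContract'} = \apcWords$ by the induction hypothesis; the symmetric situation $\isUniversal{\apcContract}$ with $\apcContract' = \apcEmpty$; or $\isUniversal{\apcContract}$ and $\isUniversal{\apcContract'}$, where $\lang{\apcContract \apcDot \apcContract'} = \apcWords \cdot \apcWords = \apcWords$ by the induction hypothesis.

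The computation itself is routine set algebra; the only step that is not self-contained is the Kleene-star case, which invokes the companion \emph{Indifferent} lemma, so that lemma must be discharged before this one (and more generally the four auxiliary lemmas for $\blankReducible$, $\emptyReducible$, $\indifferentReducible$, $\universalReducible$ should be proved in that dependency order). I expect the only real point of care to be checking that in every branch returning $\top$ the resulting contract language really stays inside $\apcWords = \apcAlphabet^{*}$ and does not acquire the blank property $\tpBlank$; this is what rules out spurious witnesses such as an $\apcAT$-disjunct, and it holds because $\universalReducible$ already returns $\perp$ on $\apcAT$ and on every other literal and on $\apcEmpty$, $\apcEmptySet$, and because an $\apcAT$ at the head of a concatenation is removed by the normalization rules of Figure~\ref{fig:normalization-rules_contracts}. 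Everything else is mechanical.
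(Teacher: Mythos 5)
The paper never actually proves this lemma: the four auxiliary lemmas for $\blankReducible$, $\emptyReducible$, $\indifferentReducible$, and $\universalReducible$ are stated in the appendix without proof (the main text only remarks that the functions are ``straightforward''), so there is no official argument to compare against. Your skeleton --- structural induction following the clauses of Figure~\ref{fig:is-universal}, with the \emph{Indifferent} lemma discharged first to feed the Kleene-star case --- is the natural one, and the base cases, the conjunction case, the star case, and the concatenation case are all correct, because in each of those every branch that returns $\top$ forces the relevant sublanguages to be \emph{exactly} $\apcWords$ or $\apcAlphabet$ or $\{\tpEmpty\}$.

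The disjunction case, however, has a genuine gap. From $\isUniversal{\apcContract}=\top$ alone you conclude $\lang{\apcContract}\cup\lang{\apcContract'}=\apcWords$, but the second disjunct is completely unconstrained, and access paths live in $(\apcAlphabet\cup\{\tpBlank\})^*$ while $\apcWords=\apcAlphabet^*$; the union can therefore be a \emph{strict} superset of $\apcWords$ once $\apcContract'$ contributes a path containing the blank property $\tpBlank$. You anticipate this worry at the end, but the repair you offer does not close it: normalization (Figure~\ref{fig:normalization-rules_contracts}) only deletes a disjunct that is literally $\apcAT$, not a compound disjunct whose language merely contains $\tpBlank$. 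Concretely, $\apcQMark\apcStar\apcOr(\apcRegEx\apcDot\apcAT)$ is normalized (no rule applies to $\apcRegEx\apcDot\apcAT$ or to the disjunction), satisfies $\isUniversal{\apcQMark\apcStar\apcOr(\apcRegEx\apcDot\apcAT)}=\top$ via the left disjunct, yet its language is $\apcWords\cup\{\tpProperty\tpDot\tpBlank\mid\regmatch{\apcRegEx}{\tpProperty}\}\neq\apcWords$. So the lemma as stated with equality is false, and the defect is in the statement rather than in your strategy: either the conclusion must be weakened to $\lang{\apcContract}\supseteq\apcWords$ (for which your union argument is already sufficient), or the disjunction clause of $\universalReducible$ must additionally require that the non-universal disjunct contribute no $\tpBlank$-paths. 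Your proof should flag this explicitly rather than appeal to normalization.
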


\section{Semantic containment}
\label{sec:proof_semantic-containment}

\begin{proof}[Proof of Lemma \ref{thm:containment_nullable}]
   A contract $\apcContract$ is subset of another contract $\apcContract'$ iff for all paths $\tpPath\in\lang{\apcContract}$ the derivation of $\apcContract'$ w.r.t. path $\tpPath$ is nullable.
   For all $\tpPath\in\apcWords$  it holds that $\tpPath\in\lang{\apcContract'}$ iff $\isNullable{\deriv{\tpPath}{\apcContract'}}$. It is trivial to see that
   \begin{align}
	  &~ \isSubSetOf{\apcContract}{\apcContract'}\\
	  ~\Leftrightarrow&~ \lang{\apcContract}\subseteq\lang{\apcContract'}\\
	  ~\Leftrightarrow&~ \forall \tpPath\in\lang{\apcContract}: \tpPath\in\lang{\apcContract}\\
	  ~\Leftrightarrow&~ \forall \tpPath\in\lang{\apcContract}: \isNullable{\deriv{\tpPath}{\apcContract}}
   \end{align}
   holds.
\end{proof}

\begin{proof}[Proof of Lemma \ref{thm:containment2}]
   A contract $\apcContract$ is subset of another contract $\apcContract'$ iff for all properties $\tpProperty$ in $\getFirstP{\apcContract}$ the derivation of $\apcContract$ w.r.t. property $\tpProperty$ is subset of the derivation of $\apcContract'$ w.r.t. $\tpProperty$. 
   By lemma \ref{thm:derivation} we obtain
   \begin{align}
	  \lang{\deriv{\tpProperty}{\apcContract}} = \leftquotient{\tpProperty}{\lang{\apcContract}}	
   \end{align}
   and this leads to 
   \begin{align}
	  \{\apcEmpty ~|~ \isNullable{\apcContract} \} \cup
	  \{ \tpProperty\tpDot\tpPath ~|~ \tpProperty\in\getFirstP{\apcContract}, \tpPath\in\leftquotient{\tpProperty}{\lang{\apcContract}} \} &= \lang{\apcContract}
   \end{align}
   Claim holds because
   \begin{align}
	  &~ \isSubSetOf{\apcContract}{\apcContract'}\\
	  ~\Leftrightarrow&~ \lang{\apcContract}\subseteq\lang{\apcContract'}\\
	  ~\Leftrightarrow&~ \forall \tpPath\in\lang{\apcContract}:~ \tpPath\in\lang{\apcContract'}\\
	  ~\Leftrightarrow&~ \apcEmpty\in\lang{\apcContract} \Rightarrow \apcEmpty\in\lang{\apcContract'} ~\wedge\\
	  &~ \forall \tpProperty, \tpPath:~ \tpProperty.\tpPath\in\lang{\apcContract} \Rightarrow \isNullable{\deriv{\tpProperty.\tpPath}{\apcContract'}}\\
	  ~\Leftrightarrow&~ \isNullable{\apcContract} \Rightarrow \isNullable{\apcContract'} ~\wedge\\
	  &~ \forall \tpProperty\in\getFirstP{\apcContract}, \forall\tpPath:~ \tpProperty.\tpPath\in\lang{\apcContract} \Rightarrow \isNullable{\deriv{\tpPath}{\deriv{\tpProperty}{\apcContract'}}}\\
	  ~\Leftrightarrow&~ \isNullable{\apcContract} \Rightarrow \isNullable{\apcContract'} ~\wedge\\
	  &~ \forall \tpProperty\in\getFirstP{\apcContract}, \forall\tpPath \in\lang{\deriv{\tpProperty}{\apcContract}}:~ \isNullable{\deriv{\tpPath}{\deriv{\tpProperty}{\apcContract'}}}\\
	  ~\Leftrightarrow&~ \isNullable{\apcContract} \Rightarrow \isNullable{\apcContract'} ~\wedge\\
	  &~ \forall \tpProperty\in\getFirstP{\apcContract}:~ \lang{\deriv{\tpProperty}{\apcContract}}\subseteq\lang{\deriv{\tpProperty}{\apcContract'}}\\
	  ~\Leftrightarrow&~ \isNullable{\apcContract} \Rightarrow \isNullable{\apcContract'} ~\wedge\\
	  &~ \forall \tpProperty\in\getFirstP{\apcContract}:~ \isSubSetOf{\deriv{\tpProperty}{\apcContract}}{\deriv{\tpProperty}{\apcContract'}}
   \end{align}
\end{proof}

\section{Syntactic derivative}
\label{sec:proof_syntactic-derivative}

\newcommand{\regexStar}[1]{#1\apcStar}
\newcommand{\regexOr}[2]{#1 \apcOr #2}
\newcommand{\regexAnd}[2]{#1 \apcAnd #2}
\newcommand{\regexConcat}[2]{#1 \apcDot #2}

\begin{proof}[Proof of Lemma \ref{def:literal_derivation}]
   $\forall \apcLiteral:$
   \begin{gather}
	  \lang{\lderiv{\apcLiteral}{\apcContract}} ~\subseteq~ \bigcap_{\tpProperty\in\lang{\apcLiteral}} \lang{\deriv{\tpProperty}{\apcContract}}
   \end{gather}
   Proof by induction on $\apcContract$.
   \begin{description}
	  \item[Case] $\apcContract=\apcEmptySet$:~ Claim holds because $\lang{\lderiv{\apcLiteral}{\apcEmptySet}} = \lang{\deriv{\apcLiteral}{\apcEmptySet}} = \apcEmptySet$.

	  \item[Case] $\apcContract=\apcEmpty$:~ Claim holds because $\lang{\lderiv{\apcLiteral}{\apcEmpty}} = \lang{\deriv{\apcLiteral}{\apcEmpty}} = \apcEmptySet$.

	  \item[Case] $\apcContract=\apcRegEx$:~
		 \begin{description}
			\item[Subcase] $\regexIsSubSetOf{\apcLiteral}{\apcRegEx}$:~ Claim holds because\\
			   $\lang{\lderiv{\apcLiteral}{\apcRegEx}} = \lang{\deriv{\tpProperty}{\apcRegEx}} = \apcEmpty ~|~ \forall \tpProperty\in\lang{\apcLiteral}$.
			\item[Subcase] $\regexIsNotSubSetOf{\apcLiteral}{\apcRegEx}$:~ Claim holds because $\lang{\lderiv{\tpProperty}{\apcRegEx}} = \apcEmptySet$.
		 \end{description}
	  \item[Case] $\apcContract=\regexStar{\apcContract'}$:~ By induction 
		 \begin{align}
			&\lang{\lderiv{\apcLiteral}{\apcContract'}} ~\subseteqIH~ \bigcap_{\tpProperty\in\lang{\apcLiteral}} \lang{\deriv{\tpProperty}{\apcContract'}}
		 \end{align}
		 holds.We obtain that
		 \begin{align}
			&\forall\tpProperty:~ \lang{\deriv{\tpProperty}{\regexStar{\apcContract'}}} ~=~ \lang{\deriv{\tpProperty}{\apcContract'}\apcDot\regexStar{\apcContract'}}\\
			&\forall\apcLiteral:~ \lang{\lderiv{\apcLiteral}{\regexStar{\apcContract'}}} ~=~ \lang{\lderiv{\apcLiteral}{\apcContract'}\apcDot\regexStar{\apcContract'}}
		 \end{align}
		 holds. Claim holds because
		 \begin{align}
			&~ \forall\apcLiteral:~ \lang{\lderiv{\apcLiteral}{\apcContract'\apcStar}}\\
			~=&~ \lang{\lderiv{\apcLiteral}{\apcContract'}\apcDot\regexStar{\apcContract'}}\\
			~\subseteqIH&~ \{ \tpPath\apcDot\tpPath' ~|~ \tpPath\in\bigcap_{\tpProperty\in\lang{\apcLiteral}} \lang{\deriv{\tpProperty}{\apcContract'}}, \tpPath'\in\lang{\regexStar{\apcContract'}}\} \\
			~=&~ \bigcap_{\tpProperty\in\lang{\apcLiteral}} \{ \tpPath\apcDot\tpPath' ~|~ \tpPath\in\lang{\deriv{\tpProperty}{\apcContract'}}, \tpPath'\in\lang{\regexStar{\apcContract'}}\} \\
			~=&~ \bigcap_{\tpProperty\in\lang{\apcLiteral}} \lang{\deriv{\tpProperty}{\regexStar{\apcContract'}}}
		 \end{align}
	  \item[Case] $\apcContract=\regexOr{\apcContract'}{\apcContract''}$:~ By induction 
		 \begin{align}
			&\lang{\lderiv{\apcLiteral}{\apcContract'}} ~\subseteqIH~ \bigcap_{\tpProperty\in\lang{\apcLiteral}} \lang{\deriv{\tpProperty}{\apcContract'}}\\
			&\lang{\lderiv{\apcLiteral}{\apcContract''}} ~\subseteqIH~ \bigcap_{\tpProperty\in\lang{\apcLiteral}} \lang{\deriv{\tpProperty}{\apcContract''}}
		 \end{align}
		 holds. We obtain that
		 \begin{align}
			&\lang{\deriv{\apcLiteral}{\regexOr{\apcContract'}{\apcContract''}}} ~=~ \lang{\deriv{\apcLiteral}{\apcContract'}}\cup\lang{\deriv{\apcLiteral}{\apcContract''}}\\
			&\lang{\lderiv{\apcLiteral}{\regexOr{\apcContract'}{\apcContract''}}} ~=~ \lang{\lderiv{\apcLiteral}{\apcContract'}}\cup\lang{\lderiv{\apcLiteral}{\apcContract''}}
		 \end{align}
		 holds. Claim holds because
		 \begin{align}
			&~ \lang{\lderiv{\apcLiteral}{\regexOr{\apcContract'}{\apcContract''}}}\\
			~=&~ \lang{\lderiv{\apcLiteral}{\apcContract'}}\cup\lang{\lderiv{\apcLiteral}{\apcContract''}}\\
			~\subseteqIH&~ \bigcap_{\tpProperty\in\lang{\apcLiteral}} \lang{\deriv{\tpProperty}{\apcContract'}} \cup \bigcap_{\tpProperty\in\lang{\apcLiteral}} \lang{\deriv{\tpProperty}{\apcContract''}}\\
			~\subseteq&~ \bigcap_{\tpProperty\in\lang{\apcLiteral}}  \lang{\deriv{\tpProperty}{\apcContract'}} \cup \lang{\deriv{\tpProperty}{\apcContract''}}\\
			~=&~ \bigcap_{\tpProperty\in\lang{\apcLiteral}}  \lang{\deriv{\tpProperty}{\regexOr{\apcContract'}{\apcContract''}}}			
		 \end{align}
	  \item[Case] $\apcContract=\regexAnd{\apcContract'}{\apcContract''}$:~ By induction
		 \begin{align}
			&\lang{\lderiv{\apcLiteral}{\apcContract'}} ~\subseteqIH~ \bigcap_{\tpProperty\in\lang{\apcLiteral}} \lang{\deriv{\tpProperty}{\apcContract'}}\\
			&\lang{\lderiv{\apcLiteral}{\apcContract''}} ~\subseteqIH~ \bigcap_{\tpProperty\in\lang{\apcLiteral}} \lang{\deriv{\tpProperty}{\apcContract''}}
		 \end{align}
		 holds. We obtain that
		 \begin{align}
			&\lang{\deriv{\apcLiteral}{\regexAnd{\apcContract'}{\apcContract''}}} ~=~ \lang{\deriv{\apcLiteral}{\apcContract'}}\cap\lang{\deriv{\apcLiteral}{\apcContract''}}\\
			&\lang{\lderiv{\apcLiteral}{\regexAnd{\apcContract'}{\apcContract''}}} ~=~ \lang{\lderiv{\apcLiteral}{\apcContract'}}\cap\lang{\lderiv{\apcLiteral}{\apcContract''}}
		 \end{align}
		 holds. Claim holds because
		 \begin{align}
			&~ \lang{\lderiv{\apcLiteral}{\regexAnd{\apcContract'}{\apcContract''}}}\\
			~=&~ \lang{\lderiv{\apcLiteral}{\apcContract'}}\cap\lang{\lderiv{\apcLiteral}{\apcContract''}}\\
			~\subseteqIH&~ \bigcap_{\tpProperty\in\lang{\apcLiteral}} \lang{\deriv{\tpProperty}{\apcContract'}} \cap \bigcap_{\tpProperty\in\lang{\apcLiteral}} \lang{\deriv{\tpProperty}{\apcContract''}}\\
			~=&~ \bigcap_{\tpProperty\in\lang{\apcLiteral}}  \lang{\deriv{\tpProperty}{\apcContract'}} \cap \lang{\deriv{\tpProperty}{\apcContract''}}\\
			~=&~ \bigcap_{\tpProperty\in\lang{\apcLiteral}}  \lang{\deriv{\tpProperty}{\regexAnd{\apcContract'}{\apcContract''}}}			
		 \end{align}
	  \item[Case] $\apcContract=\regexConcat{\apcContract'}{\apcContract''}$:~ By induction
		 \begin{align}
			&\lang{\lderiv{\apcLiteral}{\apcContract'}} ~\subseteqIH~ \bigcap_{\tpProperty\in\lang{\apcLiteral}} \lang{\deriv{\tpProperty}{\apcContract'}}\\
			&\lang{\lderiv{\apcLiteral}{\apcContract''}} ~\subseteqIH~ \bigcap_{\tpProperty\in\lang{\apcLiteral}} \lang{\deriv{\tpProperty}{\apcContract''}}
		 \end{align}
		 holds.
		 \begin{description}
			\item[Subcase] $\isNullable{\apcContract}$:~
			   We obtain that 
			   \begin{align}
				  &\forall\tpProperty:~ \lang{\deriv{\apcLiteral}{\regexConcat{\apcContract'}{\apcContract''}}} ~=~
				  \lang{\deriv{\tpProperty}{\apcContract'}\apcDot\apcContract''} \cup \lang{\deriv{\tpProperty}{\apcContract''}}\\
				  &\forall\apcLiteral:~ \lang{\lderiv{\apcLiteral}{\regexConcat{\apcContract'}{\apcContract''}}} ~=~
				  \lang{\lderiv{\apcLiteral}{\apcContract'}\apcDot\apcContract''} \cup \lang{\lderiv{\apcLiteral}{\apcContract''}}
			   \end{align}
			   holds. Claim holds because
			   \begin{align}
				  &~ \lang{\lderiv{\apcLiteral}{\regexConcat{\apcContract'}{\apcContract''}}}\\
				  ~=&~ \lang{\lderiv{\apcLiteral}{\apcContract'}\apcDot\apcContract''} \cup \lang{\lderiv{\apcLiteral}{\apcContract''}}\\
				  ~\subseteqIH&~ \{ \tpPath\apcDot\tpPath' ~|~ \tpPath\in\bigcap_{\tpProperty\in\lang{\apcLiteral}} \lang{\deriv{\tpProperty}{\apcContract'}}, \tpPath'\in\lang{\apcContract''}\}\\
				  &~ \cup \bigcap_{\tpProperty\in\lang{\apcLiteral}} \lang{\deriv{\tpProperty}{\apcContract''}}\\
				  ~\subseteq&~ \bigcap_{\tpProperty\in\lang{\apcLiteral}}\{ \tpPath\apcDot\tpPath' ~|~ \lang{\deriv{\tpProperty}{\apcContract'}}, \tpPath'\in\lang{\apcContract''}\}\\
				  &~ \cup \lang{\deriv{\tpProperty}{\apcContract''}}\\
				  ~=&~ \bigcap_{\tpProperty\in\lang{\apcLiteral}} \lang{\deriv{\tpProperty}{\apcContract'}\apcDot\apcContract''} \cup \lang{\deriv{\tpProperty}{\apcContract''}}\\
				  ~=&~ \bigcap_{\tpProperty\in\lang{\apcLiteral}} \deriv{\tpProperty}{\apcContract'\apcDot\apcContract''}
			   \end{align}

			\item[Subcase] $\neg\isNullable{\apcContract}$:~
			   We obtain that 
			   \begin{align}
				  &\forall\tpProperty:~ \lang{\deriv{\apcLiteral}{\regexConcat{\apcContract'}{\apcContract''}}} ~=~ 
				  \lang{\deriv{\tpProperty}{\apcContract'}\apcDot\apcContract''}\\
				  &\forall\apcLiteral:~ \lang{\lderiv{\apcLiteral}{\regexConcat{\apcContract'}{\apcContract''}}} ~=~
				  \lang{\lderiv{\apcLiteral}{\apcContract'}\apcDot\apcContract''}
			   \end{align}
			   holds. Claim holds because
			   \begin{align}
				  &~ \lang{\lderiv{\apcLiteral}{\regexConcat{\apcContract'}{\apcContract''}}}\\
				  ~=&~ \lang{\lderiv{\apcLiteral}{\apcContract'}\apcDot\apcContract''}\\
				  ~\subseteqIH&~ \{ \tpPath\apcDot\tpPath' ~|~ \tpPath\in\bigcap_{\tpProperty\in\lang{\apcLiteral}} \lang{\deriv{\tpProperty}{\apcContract'}}, \tpPath'\in\lang{\apcContract''}\}\\
				  ~=&~ \bigcap_{\tpProperty\in\lang{\apcLiteral}}\{ \tpPath\apcDot\tpPath' ~|~ \tpPath\in\lang{\deriv{\tpProperty}{\apcContract'}}, \tpPath'\in\lang{\apcContract''}\}\\
				  ~=&~ \bigcap_{\tpProperty\in\lang{\apcLiteral}} \lang{\deriv{\tpProperty}{\apcContract'}\apcDot\apcContract''}\\
				  ~=&~ \bigcap_{\tpProperty\in\lang{\apcLiteral}} \deriv{\tpProperty}{\apcContract'\apcDot\apcContract''}
			   \end{align}
		 \end{description}
   \end{description}
\end{proof}

\begin{proof}[Proof of Lemma \ref{def:literal_derivation2}]
   $\forall \apcLiteral\in\getFirstC{\apcContract}:$
   \begin{gather}
	  \lang{\lderiv{\apcLiteral}{\apcContract}} ~=~ \bigcap_{\tpProperty\in\lang{\apcLiteral}} \lang{\deriv{\tpProperty}{\apcContract}}
   \end{gather}
   Suppose that $\forall\tpProperty,\tpProperty'\in\lang{\apcLiteral}:~ \deriv{\tpProperty}{\apcLiteral\apcDot\apcContract}=\deriv{\tpProperty'}{\apcLiteral\apcDot\apcContract}$.

   Proof by induction on $\apcContract$. The cases for $\apcEmptySet$, $\apcEmpty$, $\apcRegEx$, $\apcContract\apcAnd\apcContract'$ are analogous to the cases in the proof of lemma \ref{def:literal_derivation}. All occurences of $\subseteqIH$ can be replaced by $\eqIH$.

   \begin{description}
	  \item[Case] $\apcContract=\regexOr{\apcContract'}{\apcContract''}$:~ By induction 
		 \begin{align}
			&\lang{\lderiv{\apcLiteral}{\apcContract'}} ~\eqIH~ \bigcap_{\tpProperty\in\lang{\apcLiteral}} \lang{\deriv{\tpProperty}{\apcContract'}}\\
			&\lang{\lderiv{\apcLiteral}{\apcContract''}} ~\eqIH~ \bigcap_{\tpProperty\in\lang{\apcLiteral}} \lang{\deriv{\tpProperty}{\apcContract''}}
		 \end{align}
		 holds. We obtain that
		 \begin{align}
			&\lang{\deriv{\apcLiteral}{\regexOr{\apcContract'}{\apcContract''}}} ~=~ \lang{\deriv{\apcLiteral}{\apcContract'}}\cup\lang{\deriv{\apcLiteral}{\apcContract''}}\\
			&\lang{\lderiv{\apcLiteral}{\regexOr{\apcContract'}{\apcContract''}}} ~=~ \lang{\lderiv{\apcLiteral}{\apcContract'}}\cup\lang{\lderiv{\apcLiteral}{\apcContract''}}
		 \end{align}
		 holds. Claim holds because
		 \begin{align}
			&~ \lang{\lderiv{\apcLiteral}{\regexOr{\apcContract'}{\apcContract''}}}\\
			~=&~ \lang{\lderiv{\apcLiteral}{\apcContract'}}\cup\lang{\lderiv{\apcLiteral}{\apcContract''}}\\
			~\eqIH&~ \bigcap_{\tpProperty\in\lang{\apcLiteral}} \lang{\deriv{\tpProperty}{\apcContract'}} \cup \bigcap_{\tpProperty\in\lang{\apcLiteral}} \lang{\deriv{\tpProperty}{\apcContract''}}\\
			~=&~ \bigcap_{\tpProperty\in\lang{\apcLiteral}}  \lang{\deriv{\tpProperty}{\apcContract'}} \cup \lang{\deriv{\tpProperty}{\apcContract''}}\\
			~=&~ \bigcap_{\tpProperty\in\lang{\apcLiteral}}  \lang{\deriv{\tpProperty}{\regexOr{\apcContract'}{\apcContract''}}}			
		 \end{align}
	  \item[Case] $\apcContract=\regexConcat{\apcContract'}{\apcContract''}$:~ By induction
		 \begin{align}
			&\lang{\lderiv{\apcLiteral}{\apcContract'}} ~\eqIH~ \bigcap_{\tpProperty\in\lang{\apcLiteral}} \lang{\deriv{\tpProperty}{\apcContract'}}\\
			&\lang{\lderiv{\apcLiteral}{\apcContract''}} ~\eqIH~ \bigcap_{\tpProperty\in\lang{\apcLiteral}} \lang{\deriv{\tpProperty}{\apcContract''}}
		 \end{align}
		 holds.
		 \begin{description}
			\item[Subcase] $\isNullable{\apcContract}$:~
			   We obtain that 
			   \begin{align}
				  &\forall\tpProperty:~ \lang{\deriv{\apcLiteral}{\regexConcat{\apcContract'}{\apcContract''}}} ~=~
				  \lang{\deriv{\tpProperty}{\apcContract'}\apcDot\apcContract''} \cup \lang{\deriv{\tpProperty}{\apcContract''}}\\
				  &\forall\apcLiteral:~ \lang{\lderiv{\apcLiteral}{\regexConcat{\apcContract'}{\apcContract''}}} ~=~
				  \lang{\lderiv{\apcLiteral}{\apcContract'}\apcDot\apcContract''} \cup \lang{\lderiv{\apcLiteral}{\apcContract''}}
			   \end{align}
			   holds. Claim holds because
			   \begin{align}
				  &~ \lang{\lderiv{\apcLiteral}{\regexConcat{\apcContract'}{\apcContract''}}}\\
				  ~=&~ \lang{\lderiv{\apcLiteral}{\apcContract'}\apcDot\apcContract''} \cup \lang{\lderiv{\apcLiteral}{\apcContract''}}\\
				  ~\eqIH&~ \{ \tpPath\apcDot\tpPath ~|~ \tpPath\in\bigcap_{\tpProperty\in\lang{\apcLiteral}} \lang{\deriv{\tpProperty}{\apcContract'}}, \tpPath\in\lang{\apcContract''}\}\\
				  &~ \cup \bigcap_{\tpProperty\in\lang{\apcLiteral}} \lang{\deriv{\tpProperty}{\apcContract''}}\\
				  ~=&~ \bigcap_{\tpProperty\in\lang{\apcLiteral}}\{ \tpPath\apcDot\tpPath ~|~ \lang{\deriv{\tpProperty}{\apcContract'}}, \tpPath\in\lang{\apcContract''}\}\\
				  &~ \cup \lang{\deriv{\tpProperty}{\apcContract''}}\\
				  ~=&~ \bigcap_{\tpProperty\in\lang{\apcLiteral}} \lang{\deriv{\tpProperty}{\apcContract'}\apcDot\apcContract''} \cup \lang{\deriv{\tpProperty}{\apcContract''}}\\
				  ~=&~ \bigcap_{\tpProperty\in\lang{\apcLiteral}} \deriv{\tpProperty}{\apcContract'\apcDot\apcContract''}
			   \end{align}

			\item[Subcase] $\neg\isNullable{\apcContract}$:~
			   Analogus to the case in the proof of lemma \ref{def:literal_derivation}.
		 \end{description}
   \end{description}
\end{proof}

\section{Syntactic containment}
\label{sec:proof_syntactic-containment}

Before proving the syntactic containment we state an auxiliary lemma. For simplification, the literals $\apcAT$, $\apcQMark$, and $\apcNeg\apcRegEx$ are collapsed into a single regular expression literals $\apcRegEx$.

\begin{lemma}[Path-preservation]\label{thm:path-preservation} $\forall \tpProperty, \tpPath$:
   \begin{align}
	  \tpPath\in\lang{\deriv{\tpProperty}{\apcContract}} ~\Rightarrow~ \exists\apcLiteral\in\getFirstC{\apcContract}:~ \tpPath\in\lang{\lderiv{\apcLiteral}{\apcContract}}
   \end{align}
\end{lemma}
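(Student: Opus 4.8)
The plan is to prove the statement by structural induction on the contract $\apcContract$. Since the lemma collapses $\apcAT$, $\apcQMark$, and $\apcNeg\apcRegEx$ into generic regular-expression literals $\apcRegEx$, the only literal base case is $\apcContract=\apcRegEx$, where $\getFirstC{\apcRegEx}=\{\apcRegEx\}$: if $\tpPath\in\lang{\deriv{\tpProperty}{\apcRegEx}}$ is witnessed at all, then $\deriv{\tpProperty}{\apcRegEx}=\apcEmpty$, so $\tpPath=\tpEmpty$, and since $\regexIsSubSetOf{\apcRegEx}{\apcRegEx}$ holds by reflexivity we also have $\lderiv{\apcRegEx}{\apcRegEx}=\apcEmpty$, hence $\tpEmpty\in\lang{\lderiv{\apcRegEx}{\apcRegEx}}$ with $\apcRegEx\in\getFirstC{\apcRegEx}$. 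The cases $\apcContract=\apcEmptySet$ and $\apcContract=\apcEmpty$ are vacuous because $\deriv{\tpProperty}{\apcContract}=\apcEmptySet$ there.

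Before the inductive cases I would record an auxiliary \emph{monotonicity} property of the literal-derivative: if $\lang{\apcLiteral'}\subseteq\lang{\apcLiteral}$, then $\lang{\lderiv{\apcLiteral}{\apcContract}}\subseteq\lang{\lderiv{\apcLiteral'}{\apcContract}}$ for every $\apcContract$. This is a routine structural induction on $\apcContract$: for a regular-expression literal $\apcRegEx$, $\lderiv{\apcLiteral}{\apcRegEx}=\apcEmpty$ forces $\regexIsSubSetOf{\apcLiteral}{\apcRegEx}$, hence $\regexIsSubSetOf{\apcLiteral'}{\apcRegEx}$ and $\lderiv{\apcLiteral'}{\apcRegEx}=\apcEmpty$ as well; for $\apcEmptySet$ and $\apcEmpty$ both derivatives are $\apcEmptySet$; and for $\apcContract\apcStar$, $\apcContract\apcOr\apcContract'$, $\apcContract\apcAnd\apcContract'$, and $\apcContract\apcDot\apcContract'$ the literal-derivative is assembled from the sub-derivatives using concatenation, union, and intersection of languages (the split in the concatenation clause depends only on $\isNullable{\apcContract}$, not on the literal), all of which are monotone, so the inductive hypotheses propagate.

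For the main induction, the Kleene-star, disjunction, and concatenation cases mirror the corresponding cases in the proof of Lemma~\ref{def:literal_derivation}, only with the set inclusions read in the opposite direction. For $\apcContract=\apcContract_1\apcStar$, write $\lang{\deriv{\tpProperty}{\apcContract_1\apcStar}}=\{\,\tpPath_1\tpDot\tpPath_2\mid\tpPath_1\in\lang{\deriv{\tpProperty}{\apcContract_1}},\ \tpPath_2\in\lang{\apcContract_1\apcStar}\,\}$, apply the induction hypothesis to $\tpPath_1$ to get $\apcLiteral\in\getFirstC{\apcContract_1}=\getFirstC{\apcContract_1\apcStar}$ with $\tpPath_1\in\lang{\lderiv{\apcLiteral}{\apcContract_1}}$, and conclude $\tpPath_1\tpDot\tpPath_2\in\lang{\lderiv{\apcLiteral}{\apcContract_1}\apcDot\apcContract_1\apcStar}=\lang{\lderiv{\apcLiteral}{\apcContract_1\apcStar}}$. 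For $\apcContract=\apcContract_1\apcOr\apcContract_2$ the witness literal supplied by the induction hypothesis for whichever disjunct $\tpPath$ comes from already lies in $\getFirstC{\apcContract_1\apcOr\apcContract_2}=\getFirstC{\apcContract_1}\cup\getFirstC{\apcContract_2}$, and $\lang{\lderiv{\apcLiteral}{\apcContract_1\apcOr\apcContract_2}}=\lang{\lderiv{\apcLiteral}{\apcContract_1}}\cup\lang{\lderiv{\apcLiteral}{\apcContract_2}}$ contains $\tpPath$. For $\apcContract=\apcContract_1\apcDot\apcContract_2$ one splits on $\isNullable{\apcContract_1}$ exactly as in Figures~\ref{fig:derivation} and~\ref{fig:literal_derivation}, using $\getFirstC{\apcContract_1\apcDot\apcContract_2}\supseteq\getFirstC{\apcContract_i}$ for the relevant $i$; no monotonicity is needed there.

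The main obstacle is the conjunction case $\apcContract=\apcContract_1\apcAnd\apcContract_2$: since $\lang{\deriv{\tpProperty}{\apcContract_1\apcAnd\apcContract_2}}=\lang{\deriv{\tpProperty}{\apcContract_1}}\cap\lang{\deriv{\tpProperty}{\apcContract_2}}$, applying the induction hypothesis to $\apcContract_1$ and to $\apcContract_2$ yields, in general, two \emph{different} witness literals $\apcLiteral_1\in\getFirstC{\apcContract_1}$ and $\apcLiteral_2\in\getFirstC{\apcContract_2}$ with $\tpPath\in\lang{\lderiv{\apcLiteral_1}{\apcContract_1}}$ and $\tpPath\in\lang{\lderiv{\apcLiteral_2}{\apcContract_2}}$. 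The fix is to take $\apcLiteral^\ast=\regexIsCap{\apcLiteral_1}{\apcLiteral_2}$, which by the definition of $\getFirstC{\apcContract_1\apcAnd\apcContract_2}$ is one of its first literals; since $\lang{\apcLiteral^\ast}\subseteq\lang{\apcLiteral_1}$ and $\lang{\apcLiteral^\ast}\subseteq\lang{\apcLiteral_2}$, the monotonicity property gives $\tpPath\in\lang{\lderiv{\apcLiteral^\ast}{\apcContract_1}}$ and $\tpPath\in\lang{\lderiv{\apcLiteral^\ast}{\apcContract_2}}$, hence $\tpPath\in\lang{\lderiv{\apcLiteral^\ast}{\apcContract_1}}\cap\lang{\lderiv{\apcLiteral^\ast}{\apcContract_2}}=\lang{\lderiv{\apcLiteral^\ast}{\apcContract_1\apcAnd\apcContract_2}}$, as required. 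The one subtlety to check is that $\apcLiteral^\ast$ may denote the empty language; this is harmless, because $\regexIsSubSetOf{\apcLiteral^\ast}{\apcRegEx}$ then holds trivially, the literal-derivative rules still apply formally, and the monotonicity step from $\emptyset\subseteq\lang{\apcLiteral_i}$ remains valid.
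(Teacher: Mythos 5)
Your proposal is correct and follows essentially the same route as the paper's proof: structural induction on $\apcContract$ with the same case analysis and, in the conjunction case, the same choice of witness $\regexIsCap{\apcLiteral'}{\apcLiteral''}\in\getFirstC{\apcContract'\apcAnd\apcContract''}$. Your auxiliary monotonicity lemma ($\lang{\apcLiteral'}\subseteq\lang{\apcLiteral}$ implies $\lang{\lderiv{\apcLiteral}{\apcContract}}\subseteq\lang{\lderiv{\apcLiteral'}{\apcContract}}$) is a welcome explicit justification of the step the paper performs silently when it passes from $\tpPath\in\lang{\lderiv{\apcLiteral'}{\apcContract'}}$ to $\tpPath\in\lang{\lderiv{\apcLiteral}{\apcContract'}}$ for the intersected literal.
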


\begin{proof}[Proof of Lemma \ref{thm:path-preservation}]
   Suppose $\lang{\deriv{\tpProperty}{\apcContract}}\neq\apcEmptySet$.
   Show $\exists\apcLiteral\in\getFirstC{\apcContract}:~$ $\tpPath\in\lang{\lderiv{\apcLiteral}{\apcContract}}$.
   Proof by induction on $\apcContract$.

   \begin{description}
	  \item[Case] $\apcContract=\apcEmptySet$, $\getFirstC{\apcContract}=\{\apcEmptySet\}$: Contradicts assumption.
	  \item[Case] $\apcContract=\apcEmpty$, $\getFirstC{\apcContract}=\{\}$: Contradicts assumption.
	  \item[Case] $\apcContract=\apcRegEx$, $\getFirstC{\apcContract}=\{\apcRegEx\}$:~\\
		 We obtain that $\tpProperty\in\lang{\apcRegEx} ~\Rightarrow~ \deriv{\tpProperty}{\apcRegEx}=\apcEmpty$. Claim holds because $\getFirstC{\apcContract}=\{\apcRegEx\}$, $\lderiv{\apcRegEx}{\apcRegEx}=\apcEmpty$, and thus $\tpPath=\tpEmpty$ and $\tpEmpty\in\lang{\apcEmpty}$.

	  \item[Case] $\apcContract=\apcContract'\apcStar$, $\getFirstC{\apcContract}=\getFirstC{\apcContract'}$:~\\
		 We obtain that $\tpPath\in\lang{\deriv{\tpProperty}{\apcContract'\apcStar}}=\lang{\deriv{\tpProperty}{\apcContract'}\apcDot\apcContract'\apcStar}\neq\apcEmptySet$. By induction $\exists\apcLiteral'\in\getFirstC{\apcContract'}:~ \tpPath'\in\lang{\lderiv{\apcLiteral'}{\apcContract'}}$.
		 The chain holds because $\getFirstC{\apcContract'\apcStar}=\getFirstC{\apcContract'}$ and $\lderiv{\apcLiteral}{\apcContract'\apcStar}=\lderiv{\apcLiteral}{\apcContract'}\apcDot\apcContract'\apcStar$ and $\tpPath'\in\lang{\lderiv{\apcLiteral}{\apcContract'}}, \tpPath''\in\lang{\lderiv{\apcLiteral}{\apcContract'\apcStar}}$ implies $\tpPath=\tpPath'\tpDot\tpPath''\in\lang{\lderiv{\apcLiteral}{\apcContract'\apcStar}}$.

	  \item[Case] $\apcContract=(\apcContract'\apcOr\apcContract'')$, $\getFirstC{\apcContract}=\getFirstC{\apcContract'}\cup\getFirstC{\apcContract''}$:~\\
		 We obtain that $\tpPath\in\lang{\deriv{\tpProperty}{\apcContract'\apcOr\apcContract''}}=\lang{\deriv{\tpProperty}{\apcContract'}}\cup\lang{\deriv{\tpProperty}{\apcContract''}}\neq\apcEmptySet$. 
		 By induction $\exists\apcLiteral'\in\getFirstC{\apcContract'}:~ \tpPath'\in\lang{\lderiv{\apcLiteral'}{\apcContract'}}$ and 
		 $\exists\apcLiteral''\in\getFirstC{\apcContract''}:~ \tpPath''\in\lang{\lderiv{\apcLiteral''}{\apcContract''}}$.
		 The chain holds because $\getFirstC{\apcContract'\apcOr\apcContract''}=\getFirstC{\apcContract'}\cup\getFirstC{\apcContract''}$ and $\lderiv{\apcLiteral}{\apcContract'\apcOr\apcContract''}=\lderiv{\apcLiteral}{\apcContract'}\apcOr\lderiv{\apcLiteral}{\apcContract''}$ and $\tpPath\in\lang{\lderiv{\apcLiteral}{\apcContract'}}$ or $\tpPath\in\lang{\lderiv{\apcLiteral}{\apcContract''}}$ implies $\tpPath\in\lang{\lderiv{\apcLiteral}{\apcContract'\apcOr\apcContract''}}$.

	  \item[Case] $\apcContract=(\apcContract'\apcAnd\apcContract'')$, $\getFirstC{\apcContract}=\{ \regexIsCap{\apcLiteral'}{\apcLiteral''} ~|~ \apcLiteral'\in\getFirstC{\apcContract'}, \apcLiteral''\in\getFirstC{\apcContract''} \} $:~ \\
		 We obtain that $\tpPath\in\lang{\deriv{\tpProperty}{\apcContract'\apcAnd\apcContract''}}=\lang{\deriv{\tpProperty}{\apcContract'}}\cap\lang{\deriv{\tpProperty}{\apcContract''}}$ implies $\tpPath\in\lang{\deriv{\tpProperty}{\apcContract'}}$ and $\tpPath\in\lang{\deriv{\tpProperty}{\apcContract''}}$. 
		 By induction $\exists\apcLiteral'\in\getFirstC{\apcContract'}:~ \tpPath\in\lang{\lderiv{\apcLiteral}{\apcContract'}}$ and 
		 $\exists\apcLiteral''\in\getFirstC{\apcContract''}:~ \tpPath\in\lang{\lderiv{\apcLiteral''}{\apcContract''}}$.
		 Let $\apcLiteral=\regexIsCap{\apcLiteral'}{\apcLiteral''}\in\getFirstC{\apcContract'\apcAnd\apcContract''}$. If $\tpProperty\in\lang{\apcLiteral'}$ and $\tpProperty\in\lang{\apcLiteral'}$ then $\tpProperty\in\lang{\apcLiteral}$.
		 The chain holds because $\getFirstC{\apcContract'\apcAnd\apcContract''}=\{ \regexIsCap{\apcLiteral'}{\apcLiteral''} ~|~ \apcLiteral'\in\getFirstC{\apcContract'}, \apcLiteral''\in\getFirstC{\apcContract''} \}$ and $\lderiv{\apcLiteral}{\apcContract'\apcAnd\apcContract''}=\lderiv{\apcLiteral}{\apcContract'}\apcAnd\lderiv{\apcLiteral}{\apcContract''}$, and $\tpPath\in\lang{\lderiv{\apcLiteral}{\apcContract'}}$ and $\tpPath\in\lang{\lderiv{\apcLiteral}{\apcContract''}}$ implies $\tpPath\in\lang{\lderiv{\apcLiteral}{\apcContract'\apcAnd\apcContract''}}$.

	  \item[Case] $\apcContract=(\apcContract'\apcDot\apcContract'')$:~
		 \begin{description}
			\item[Subcase] $\isNullable{\apcContract'}$, $\getFirstC{\apcContract}=\getFirstC{\apcContract'}\cup\getFirstC{\apcContract''}$:~\\
			   We obtain that $\tpPath\in\lang{\deriv{\tpProperty}{\apcContract'\apcDot\apcContract''}}=\lang{\deriv{\tpProperty}{\apcContract'}\apcDot\apcContract''}\cup\lang{\deriv{\tpProperty}{\apcContract''}}$ implies $\tpPath\in\lang{\deriv{\tpProperty}{\apcContract'}\apcDot\apcContract''}$ or $\tpPath\in\lang{\deriv{\tpProperty}{\apcContract''}}$.
			   By induction $\exists\apcLiteral'\in\getFirstC{\apcContract'}:~ \tpPath'\in\lang{\lderiv{\apcLiteral}{\apcContract'}}$ and 
			   $\exists\apcLiteral''\in\getFirstC{\apcContract''}:~ \tpPath''\in\lang{\lderiv{\apcLiteral''}{\apcContract''}}$.
			   The chain holds because $\getFirstC{\apcContract'\apcDot\apcContract''}= \getFirstC{\apcContract'}\cup\getFirstC{\apcContract''}$ and $\lderiv{\apcLiteral}{\apcContract'\apcDot\apcContract''}=(\lderiv{\apcLiteral}{\apcContract'}\apcDot\apcContract'')\apcOr\lderiv{\apcLiteral}{\apcContract''}$, and $\tpPath'\in\lang{\lderiv{\apcLiteral}{\apcContract'}}$ and $\tpPath''\in\lang{\apcContract''}$ implies $\tpPath=\tpPath'.\tpPath''\in\lang{\lderiv{\apcLiteral}{\apcContract'\apcDot\apcContract''}}$ or $\tpPath=\tpEmpty.\tpPath''\in\lang{\lderiv{\apcLiteral}{\apcContract'\apcDot\apcContract''}}$.

			\item[Subcase] $\neg\isNullable{\apcContract'}$, $\getFirstC{\apcContract}=\getFirstC{\apcContract'}$:~\\
			   We obtain that $\tpPath\in\lang{\deriv{\tpProperty}{\apcContract'\apcDot\apcContract''}}=\lang{\deriv{\tpProperty}{\apcContract'}\apcDot\apcContract''}$ implies $\tpPath\in\lang{\deriv{\tpProperty}{\apcContract'}\apcDot\apcContract''}$.
			   By induction $\exists\apcLiteral'\in\getFirstC{\apcContract'}:~ \tpPath'\in\lang{\lderiv{\apcLiteral}{\apcContract'}}$.
			   The chain holds because $\getFirstC{\apcContract'\apcDot\apcContract''}= \getFirstC{\apcContract'}$ and $\lderiv{\apcLiteral}{\apcContract'\apcDot\apcContract''}=\lderiv{\apcLiteral}{\apcContract'}\apcDot\apcContract''$, and $\tpPath'\in\lang{\lderiv{\apcLiteral}{\apcContract'}}$ and $\tpPath''\in\lang{\lderiv{\apcLiteral}{\apcContract''}}$ implies $\tpPath=\tpPath'.\tpPath''\in\lang{\lderiv{\apcLiteral}{\apcContract'\apcDot\apcContract''}}$.
		 \end{description}
   \end{description}
\end{proof}

\begin{proof}[Proof of Theorem \ref{thm:containment}]
   The proof is by contraposition. If $\isNotSubSetOf{\apcContract}{\apcContract'}$ then $\exists\apcLiteral\in\getFirstC{\apcContract}:~ \isNotSubSetOf{\lderiv{\apcLiteral}{\apcContract}}{\lderiv{\apcLiteral}{\apcContract'}}$ or $\neg(\isNullable{\apcContract}\Rightarrow\isNullable{\apcContract'})$.

   We obtain that:
   \begin{align}
	  \isNotSubSetOf{\apcContract}{\apcContract'} ~&\Leftrightarrow~ \lang{\apcContract}\nsubseteq\lang{\apcContract'}\\
	  ~&\Leftrightarrow~ \exists\tpPath\in\lang{\apcContract}\backslash\lang{\apcContract'}
   \end{align}

   \begin{description}
	  \item[Case] $\tpPath=\tpEmpty$:~ \\
		 Claim holds because $\neg(\isNullable{\apcContract}\Rightarrow\isNullable{\apcContract'})$.
	  \item[Case] $\tpPath\neq\tpEmpty$:~\\
		 It must be that $\tpPath=\tpProperty\tpDot\tpPath'$ with $\tpProperty\in\getFirstP{\apcContract}=\lang{\getFirstC{\apcContract}}$.
		 Therefore $\exists\apcLiteral\in\getFirstC{\apcContract}:~ \tpProperty\in\lang{\apcLiteral}$.
		 \begin{description}
			\item[Subcase] $\tpProperty\notin\getFirstP{\apcContract'}$:~\\
			   Claim holds by Lemma \ref{def:literal_derivation} and \ref{def:literal_derivation2} because $\exists\apcLiteral\in\getFirstC{\apcContract}:$ $\lderiv{\apcLiteral}{\apcContract}\neq\apcEmptySet$ and $\lderiv{\apcLiteral}{\apcContract'}=\apcEmptySet$ implies that $\isNotSubSetOf{\lderiv{\apcLiteral}{\apcContract}}{\lderiv{\apcLiteral}{\apcContract'}}$.

			\item[Subcase] $\tpProperty\in\getFirstP{\apcContract'}$:~\\ 
			   By Lemma \ref{def:literal_derivation} and \ref{def:literal_derivation2} claim holds because\\
			   \mbox{$\tpPath'\in\lang{\deriv{\tpProperty}{\apcContract}}\backslash\lang{\deriv{\tpProperty}{\apcContract'}}$} implies that\\
			   $\tpPath'\in\lang{\lderiv{\apcLiteral}{\apcContract}}\backslash\lang{\lderiv{\apcLiteral}{\apcContract'}}$

		 \end{description}
   \end{description}
\end{proof}

\section{Correctness}
\label{sec:proof_correctness}

\begin{proof}[Proof of Theorem \ref{thm:correctness}]
   If $\ccContext ~\entails~ \isSubSetOf{\apcContract}{\apcContract'} ~:~ \top$ than $\isSubSetOf{\apcContract}{\apcContract'}$
   Proof is by induction in the derivation of $\ccContext ~\entails~ \ccExp ~:~ \{\top, \perp\}$

   \begin{description}
	  \item[Case] \Rule{\RuleCCContext}:~\\
		 Obtaining the rule \Rule{\RuleCCUnfoldTrue} and \Rule{\RuleCCUnfoldFalse} the result of $\ccExp=\isSubSetOf{\apcContract}{\apcContract'}$ is the conjunction of its derivative w.r.t. the first literals. If $\inCcContext{\ccExp}$ than $\ccExp$ is already part of the conjunction.
	  \item[Case] \Rule{\RuleCCDisprove}, \Rule{\RuleCCUnfoldTrue}; \Rule{\RuleCCUnfoldFalse}:~\\
		 Claim holds by theorem \ref{thm:containment}
   \end{description}
\end{proof}

\end{document}